\PassOptionsToPackage{dvipsnames,table}{xcolor}
\PassOptionsToPackage{hypertexnames=false}{hyperref}
\documentclass[acmsmall,screen,natbib,nonacm]{acmart}

\setcopyright{acmlicensed}
\copyrightyear{2025}
\acmYear{2025}
\acmDOI{}

\author{Feras A.~Saad}
\orcid{0000-0002-0505-795X}
\affiliation{%
  \institution{Carnegie Mellon University}
  \city{Pittsburgh}
  \country{USA}
}

\authorsaddresses{}

\usepackage{adjustbox}
\usepackage{afterpage}
\usepackage{amsmath}
\usepackage{amsthm}
\usepackage{mathtools}
\usepackage{stmaryrd}
\usepackage{bm}
\usepackage{booktabs}
\usepackage{comment}
\usepackage{graphicx}
\usepackage{multicol}
\usepackage{multirow}
\usepackage{nicematrix}
\usepackage{soul}
\usepackage{siunitx}
\usepackage[most]{tcolorbox}
\usepackage[mathscr]{eucal}
\usepackage{accents} 
\usepackage[textsize=tiny]{todonotes}

\usepackage{tikz}
\usepackage{tikz-qtree}
\usetikzlibrary{arrows}
\usetikzlibrary{arrows.meta}
\usetikzlibrary{backgrounds}
\usetikzlibrary{calc}
\usetikzlibrary{decorations.pathreplacing}
\usetikzlibrary{intersections}
\usetikzlibrary{fit}
\usetikzlibrary{patterns}
\usetikzlibrary{positioning}
\usetikzlibrary{shapes}
\usetikzlibrary{shadows}
\usetikzlibrary{shapes.geometric}
\usetikzlibrary{tikzmark}

\usepackage[inline]{enumitem}
\setlist[itemize,1]{wide=\parindent,leftmargin=*}

\usepackage{zref-clever}
\NewDocumentCommand{\cref}{m}{\zcref{#1}}
\NewDocumentCommand{\Cref}{m}{\zcref[cap=true]{#1}}
\NewDocumentCommand{\crefrange}{mm}{\zcref[range,rangetopair=false]{#1,#2}}
\NewDocumentCommand{\Crefrange}{mm}{\zcref[cap=true,range,rangetopair=false]{#1,#2}}
\zcsetup{nameinlink=true}
\zcLanguageSetup{english}{cap,rangesep=\textendash}

\usepackage{algorithm}
\usepackage[indLines=true,noEnd=true,commentColor=darkgray,endLComment={}]{algpseudocodex}
\algnewcommand{\IfThenElse}[3]{\algorithmicif\ #1\ \algorithmicthen\ #2\ \algorithmicelse\ #3}
\makeatletter\AddToHook{env/algorithmic/begin}{\zcsetup{currentcounter=ALG@line}}\makeatother
\zcsetup{countertype={ALG@line=line}}
\zcRefTypeSetup{line}{nocap}
\zcRefTypeSetup{equation}{nocap}
\zcRefTypeSetup{equation}{Name-sg=Eq.,name-sg=,Name-pl=Eqs.,name-pl=,Name-sg-ab=,name-sg-ab=,Name-pl-ab=,name-pl-ab=}
\zcRefTypeSetup{item}{Name-sg=,name-sg=item,Name-pl=,name-pl=,Name-sg-ab=,name-sg-ab=,Name-pl-ab=,name-pl-ab=}
\newif\ifzcsectioncap
\newcommand{\zcsectionmarkcapTrue}{\global\zcsectioncaptrue}
\newcommand{\zcsectionmarkcapFalse}{\global\zcsectioncapfalse}
\newcommand{\zcsectionprefix}{\ifzcsectioncap\else\S\fi}
\newcommand{\zcsectionrangeprefix}{\ifzcsectioncap\else\S\S\fi}
\zcRefTypeSetup{section}{
    nocap,
    Name-sg = {\zcsectionmarkcapTrue Section~},
    Name-pl = {\zcsectionmarkcapTrue Sections~},
    name-sg = {\zcsectionmarkcapFalse},
    name-pl = {\zcsectionmarkcapFalse},
    namesep = {},
    refbounds = {,\zcsectionprefix,,},
    +refbounds-rb = {,\zcsectionrangeprefix,,},
    +refbounds-re = {,,,},
    pairsep = { and~},
    listsep = {, },
    lastsep = {, and~},
    rangesep = {\textendash},
}

\usepackage{keytheorems}
\newkeytheorem{theorem}[style=acmplain,numberwithin=section]
\newkeytheorem{lemma}[style=acmplain,sibling=theorem]
\newkeytheorem{proposition}[style=acmplain,sibling=theorem]
\newkeytheorem{corollary}[style=acmplain,sibling=theorem]
\newkeytheorem{definition}[style=acmdefinition,sibling=theorem,qed=\textup{\guillemotleft}]
\newkeytheorem{remark}[style=acmdefinition,sibling=theorem,qed=\textup{\guillemotleft}]
\newkeytheorem{notation}[style=acmdefinition,sibling=theorem,qed=\textup{\guillemotleft}]
\newkeytheorem{observation}[style=acmdefinition,sibling=theorem,qed=\textup{\guillemotleft}]
\newkeytheorem{example}[style=acmdefinition,sibling=theorem,qed=\textup{\guillemotleft}]

\usepackage{newfloat}
\usepackage[subrefformat=parens]{subcaption}
\DeclareFloatingEnvironment[name=Listing]{listing}
\zcsetup{
  countertype    = { sublisting = listing },
  counterresetby = { sublisting = listing },
}

\usepackage{doi}
\usepackage{url}
\usepackage[most]{tcolorbox}
\makeatletter
\tcbset{
  nicebox-base/.style={
    enhanced,
    sharp corners,
    colback=white,
    colframe=black,
    boxrule=0.5pt,
    colbacktitle=pink,
    coltitle=black,
    top=-1mm,
    bottom=-1mm,
    boxsep=2.0mm,
    toptitle=0.0mm,
    after app={\@endparenv},
  }
}
\newtcolorbox{nicebox}[1][]{
  nicebox-base,
  #1,
  after app={\@endparenv},
}
\makeatother

\makeatletter
\newtcolorbox{quotebox}[1][]{
    sharp corners,
    size=tight,
    colback=white,
    colframe=black,
    boxrule=0pt,
    leftrule=3pt,
    left=2.5mm,
    left skip=2\parindent,
    #1,
    after app={\@endparenv},
}

\usepackage{listings}
\lstdefinelanguage{verb}{
  basicstyle=\ttfamily\small,
  morecomment=[l]{\#},
  commentstyle=\color{darkgray}\ttfamily,
  morekeywords=[1]{if, else, sample, observe, score, return},
  keywordstyle=[1]\bfseries,
  numberstyle=\tiny,
  captionpos=b,
  escapeinside={<@}{@>},
}

\makeatletter
\NewTCBListing{codeverb}{!O{} !O{}}{
  sharp corners,
  size=tight,
  colback=white,
  colframe=black,    
  boxrule=0pt,       
  leftrule=3pt,      
  left skip=2\parindent,    
  top=-2mm,          
  bottom=-2mm,       
  left=2.5mm,        
  listing only,
  listing options={language=verb,#2},
  #1,
  after app={\@endparenv},
}
\makeatother

\makeatletter
\patchcmd{\NAT@test}{\else \NAT@nm }{\else \NAT@nmfmt{\NAT@nm}}{}{}
\makeatother
\setcitestyle{citesep={,},notesep={;\,}}

\allowdisplaybreaks

\NewDocumentCommand{\R}{o}{\IfNoValueTF{#1}{\mathbb{R}}{\mathbb{R}^{#1}}}
\NewDocumentCommand{\N}{}{\mathbb{N}}

\NewDocumentCommand{\Leb}{O{}}{\mathsf{L}\IfBlankF{#1}{^{#1}}}
\NewDocumentCommand{\Counting}{O{}}{\#\IfBlankF{#1}{_{#1}}}
\NewDocumentCommand{\Haus}{m o o}{\IfNoValueTF{#2}%
  {\mathsf{H}^{#1}\IfNoValueF{#3}{\!\restriction_{#3}}}%
  {\mathsf{H}_{#2}^{#1}\IfNoValueF{#3}{\!\restriction_{#3}}}}

\RenewDocumentCommand{\d}{}{\mathop{}\!\mathrm{d}}

\NewDocumentCommand{\Borel}{O{}O{}}{\mathbf{B}\IfBlankF{#1}{\!\left(#1\right)}\IfBlankF{#2}{\!\restriction #2}}
\NewDocumentCommand{\Powset}{m}{2^{#1}}

\NewDocumentCommand{\restr}{m m}{#1\!\restriction_{#2}}

\DeclareMathOperator{\rank}{rank}

\DeclarePairedDelimiter{\set}{\lbrace}{\rbrace}
\DeclarePairedDelimiter{\abs}{\lvert}{\rvert}
\DeclarePairedDelimiter{\setr}{\lparen}{\rparen}
\DeclarePairedDelimiter{\setb}{\lbrack}{\rbrack}
\DeclarePairedDelimiter{\setl}{\langle}{\rangle}

\newcommand{\defeq}{\coloneq}

\newcommand{\given}{\,|\,}

\newcommand{\dx}{\d{x}}
\newcommand{\dz}{\d{z}}

\newcommand{\cO}{\Omega}
\newcommand{\sigO}{\mathcal{F}}
\newcommand{\baseO}{\mathbb{P}}
\newcommand{\sigE}{\mathcal{E}}
\newcommand{\sigG}{\mathcal{G}}

\NewDocumentCommand{\Prob}{om}{\mathbb{P}\IfValueT{#1}{_{#1}}\IfBlankF{#2}{\setb*{#2}}}

\NewDocumentCommand{\Expect}{sO{}m}{\IfBooleanTF{#1}{\widehat{\mathbb{E}}}{\mathbb{E}}\IfValueT{#2}{_{#2}}\setb*{#3}}

\renewcommand{\Pr}[1]{\mathbb{P}\setb*{#1}}

\NewDocumentCommand{\ind}{sO{}m}{\IfBooleanTF{#1}
  {\mathbf{1}\IfBlankF{#2}{_{#2}}\IfBlankF{#3}{(#3)}}
  {\mathbf{1}\IfBlankF{#2}{_{#2}}\IfBlankF{#3}{\!\left(#3\right)}}
  }

\NewDocumentCommand{\dirac}{d<>mO{}}{\delta\IfValueT{#1}{^#1}_{#2}\IfBlankF{#3}{\setr*{#3}}}

\newcommand{\pushf}[2]{#1_{\#}#2}

\NewDocumentCommand{\X}{o}{\IfNoValueTF{#1}{\mathsf{X}}{\mathsf{X}_{#1}}}

\NewDocumentCommand{\SigX}{o}{\IfNoValueTF{#1}{\Sigma_{\X}}{\Sigma_{\X}^{#1}}}

\NewDocumentCommand{\lamX}{o}{\IfNoValueTF{#1}{\lambda_{\X}}{\lambda_{\X}^{#1}}}

\NewDocumentCommand{\muX}{o}{\IfNoValueTF{#1}{\mu_{\X}}{\mu_{\X}^{#1}}}

\NewDocumentCommand{\pX}{so}{\IfBooleanTF{#1}
  {\IfNoValueTF{#2}{\tilde{p}_{\X}}{\tilde{p}_{\X}^{#2}}}
  {\IfNoValueTF{#2}{p_{\X}}{p_{\X}^{#2}}}
  }

\NewDocumentCommand{\Z}{o}{\IfNoValueTF{#1}{\mathsf{Z}}{\mathsf{Z}_{#1}}}

\NewDocumentCommand{\Eobs}{O{\pi}}{E_{#1}}

\NewDocumentCommand{\SigZ}{o}{\IfNoValueTF{#1}{\Sigma_{\Z}}{\Sigma_{\Z}^{#1}}}

\NewDocumentCommand{\lamZ}{o}{\IfNoValueTF{#1}{\lambda_{\Z}}{\lambda_{\Z}^{#1}}}

\NewDocumentCommand{\muZ}{o}{\IfNoValueTF{#1}{\mu_{\Z}}{\mu_{\Z}^{#1}}}

\NewDocumentCommand{\pZ}{o}{\IfNoValueTF{#1}{p_{\Z}}{p_{\Z}^{#1}}}

\NewDocumentCommand{\lamXZ}{so}{\IfBooleanTF{#1}
  {\IfNoValueTF{#2}{\eta_{\X\mid\Z}}{\eta_{\X\mid\Z}^{#2}}}
  {\IfNoValueTF{#2}{\lambda_{\X\mid\Z}}{\lambda_{\X\mid\Z}^{#2}}}
  }

\NewDocumentCommand{\muXZ}{o}{\IfNoValueTF{#1}{\mu_{\X\mid\Z}}{\mu_{\X\mid\Z}^{#1}}}

\NewDocumentCommand{\muZX}{o}{\IfNoValueTF{#1}{\mu_{\Z\mid\X}}{\mu_{\Z\mid\X}^{#1}}}

\NewDocumentCommand{\pXZ}{o}{\IfNoValueTF{#1}{p_{\X\mid\Z}}{p_{\X\mid\Z}^{#1}}}

\NewDocumentCommand{\pZX}{o}{\IfNoValueTF{#1}{p_{\Z\mid\X}}{p_{\Z\mid\X}^{#1}}}

\NewDocumentCommand{\Apath}{O{\pi}}{A_{#1}}
\NewDocumentCommand{\kpath}{O{\pi}}{k_{#1}}
\NewDocumentCommand{\gpath}{O{\pi}}{g_{#1}}
\NewDocumentCommand{\dpath}{O{\pi}}{d_{#1}}
\NewDocumentCommand{\npath}{O{\pi}}{n_{#1}}
\NewDocumentCommand{\mpath}{O{\pi}}{m_{#1}}

\NewDocumentCommand{\U}{O{\pi}}{U_{#1}}

\NewDocumentCommand{\B}{O{\pi}}{B_{#1}}

\NewDocumentCommand{\obs}{so}{\IfBooleanTF{#1}
  {\IfNoValueTF{#2}{\bar{t}}{\bar{t}_{#2}}}
  {\IfNoValueTF{#2}{t}{t_{#2}}}
  }

\NewDocumentCommand{\obsI}{so}{\IfBooleanTF{#1}
  {\IfNoValueTF{#2}{\bar{t^{-1}}}{\bar{t^{-1}}_{#2}}}
  {\IfNoValueTF{#2}{t^{-1}}{t^{-1}_{#2}}}
  }

\NewDocumentCommand{\Jac}{O{\pi,a}}{J_{#1}}

\newcommand{\sample}{\textbf{sample}}
\newcommand{\observe}{\textbf{observe}}
\newcommand{\score}{\textbf{score}}
\newcommand{\interval}{\textbf{interval}}
\newcommand{\return}{\textbf{return}}

\newcommand{\distr}[1]{\textsf{#1}}
\newcommand{\cPDF}{\textsf{pdf}}

\NewDocumentCommand{\DistRepo}{}{\mathsf{Dist}}
\newcommand{\kDD}{\mathbf{d}}        
\newcommand{\kDC}{\mathbf{c}}        

\NewDocumentCommand{\Idisc}{O{\pi}}{I_{#1}^{\kDD}}
\NewDocumentCommand{\Icont}{O{\pi}}{I_{#1}^{\kDC}}

\NewDocumentCommand{\kindD}{O{D}}{\kappa_{#1}}
\NewDocumentCommand{\ThetaD}{O{D}}{\Theta_{#1}}
\NewDocumentCommand{\CarrierD}{O{D}}{M_{#1}}
\NewDocumentCommand{\SigD}{O{D}}{\Sigma_{#1}}
\NewDocumentCommand{\lamD}{O{D}}{\lambda_{#1}}
\NewDocumentCommand{\muD}{O{D}}{\mu_{#1}}
\NewDocumentCommand{\pD}{O{D}}{p_{#1}}
\NewDocumentCommand{\rdimD}{O{D}}{r_{#1}}

\NewDocumentCommand{\Zle}{m m}{\Z^{(#1,\le #2)}}
\NewDocumentCommand{\Zeq}{m m}{\Z^{(#1,#2)}}

\NewDocumentCommand{\rd}{O{\pi}}{r_{#1}}

\NewDocumentCommand{\QXZ}{o}{Q_{\X\mid z}\IfNoValueF{#1}{^{#1}}}

\begin{document}

\title{Bayes Rule for Conditioning Probabilistic Programs}

\begin{abstract}
Probabilistic programs provide a rigorous programming-language foundation
for expressing rich probabilistic models and offer language-level support
for solving challenging statistical inference problems.
A key theoretical challenge that probabilistic programs inherit from
probability theory is conditioning a probability distribution
on an observation that has probability zero.
Many languages address this challenge by using computational rules for
conditioning that may work well for some problems but are statistically
unsound for others.

This article introduces a principled mathematical approach for
conditioning probabilistic programs.
We first show that existing techniques based on ``soft conditioning'' and
``neighborhood limits'' exhibit semantic issues from the perspective
of measure theory and produce inference results that disagree
with sound statistical reasoning.
By leveraging tools from geometric measure theory, we construct a novel
disintegration-based semantics for probabilistic program traces that
enables sound conditioning.
The theory delivers an explicit Bayes rule representation of the
conditional distribution, which is directly compatible with widely used
probabilistic inference algorithms that require access to unnormalized
posterior densities.
Our Bayes rule handles probabilistic models with loops, branches,
discrete, continuous, and mixed-type distributions, observations on
manifolds, differentiable structure, and additional modeling
motifs that arise in practice.
\end{abstract}

\maketitle


\section{Introduction}
\label{sec:introduction}

Probabilistic programs are standard computer programs augmented with a
construct to \textit{sample} random variables from probability distributions.
These languages can be used to model and analyze complex probabilistic
processes, and have been thoroughly studied by the programming languages
community~\cite{barthe2020}.
Within this field is an important class of languages for
\textit{statistical probabilistic programming}, which enjoy broad adoption
among the machine learning, statistics, and data science communities~\citep{jwdvm2021}.
Statistical probabilistic programs include an additional construct to
\textit{observe} random variables taking on particular values,
which conditions the probability distribution
induced by the sample statements to agree with a set of constraints.
These systems provide sophisticated probabilistic inference engines to
solve statistical problems, such as generating samples from the conditional distribution
or estimating conditional expectations.

The goal of this paper is to build a rigorous mathematical
foundation for conditioning probabilistic programs that aligns with
principled statistical theory.
To illustrate the fundamental challenge, consider a hidden random variable
$X$ and an observable random variable $Z$.
For discrete random variables, the conditional distribution
of $X \mid Z=z$ is given by the usual Bayes rule:
\begin{equation}
\Prob{X = x \mid Z = z} = \frac{\Prob{X=x, Z=z}}{\Prob{Z=z}},
\label{eq:intro-bayes-rule}
\end{equation}
which is well-defined whenever $\Prob{Z=z} > 0$.
For discrete probability, this theory is complete:
any semantics for discrete probabilistic programs can safely
ignore probability-zero observations, because conditioning on
$z$ for which $\Prob{Z=z}=0$ is statistically undefined
without controversy.

Real-world applications, however, demand the use of continuous random
variables such as the normal, uniform, or gamma, which take on
uncountably many outcomes.
In this case, \textit{every} observation $z$
satisfies $\Prob{Z=z}=0$, which rules out the use of the standard Bayes rule.
This challenge with conditioning on probability-zero events has a long
history in probability theory.

Modern probabilistic programming languages use variants of the Bayes
rules to condition on probability-zero events, which can be problematic.
The first issue is semantic: while the computational behavior of the
program is fully precise, the inference problem being solved may have no
reasonable statistical interpretation.
The second issue is practical: because inference algorithms such as
importance sampling, variational inference, and Markov chain Monte Carlo
all make use of unnormalized target distributions, using ad-hoc Bayes rules
can give unsound statistical results.

\subsection{Conditioning Probabilistic Programs using Probability Densities}
\label{sec:introduction-observe}
Students of elementary probability theory learn to handle continuous
random variables by replacing probabilities such as $\Prob{X=x}$
with probability densities such as $\pX(x)$.
The result is a modified Bayes rule that is used
for models with both continuous and discrete random variables:
\begin{equation}
\pXZ(x \given z) = \frac{\pX(x) \pZX(z \given x)}{\pZ(z)}.
\label{eq:intro-bayes-rule-pdf}
\end{equation}
This formula combines probability mass functions for discrete variables
and probability density functions for continuous variables,
which is the standard approach adopted by
probabilistic programming languages whose semantics treat observations
as ``soft conditions''~\citep{staton2016semantics,borgstrom2016lambda}.
Many probabilistic programming users are surprised to find that soft
conditioning cannot handle simple observation patterns, such as in
the program shown in \cref{fig:exact-condition}.

\begingroup
\setlength{\normalparindent}{\parindent}
\setlength{\intextsep}{6pt}
\begin{figure}[H]
\begin{minipage}[m]{.5\linewidth}
\begin{codeverb}[left skip=\normalparindent]
x = sample normal(0,1)
observe x*x == 4
\end{codeverb}
\end{minipage}\hfill
\begin{minipage}[m]{.5\linewidth}
\includegraphics[width=\linewidth]{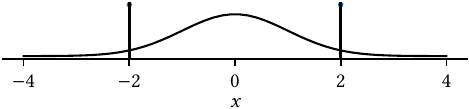}
\end{minipage}
\captionsetup{skip=0pt}
\caption{Probabilistic program conditioning on a function of a random variable.}
\label{fig:exact-condition}
\end{figure}
\endgroup

In this program, we have $X \sim \distr{Normal}(0,1)$ and
$Z = \obs(X) = X^2 \ge 0$.
Statistical theory tells us that, by symmetry of the normal distribution,
the conditional distribution of $X \mid Z=z$ must be an equally
weighted mixture of atoms at $\obsI(z) \defeq \set{-\sqrt{z}, +\sqrt{z}}$.
Also, as $Z$ is fixed given $X$, it is conditionally a discrete random variable
with mass $\pZX(z \given x) = \ind{x^2 = z}$.
With this setup, applying the modified Bayes rule fails for
\textit{every} value of $z$, because the denominator is zero:
\begin{equation*}
\pZ(z)
= \int_{\X} \pZX(z\given x) \pX(x) \d x
= \int_{\X} \ind{\obs(x) = z} \pX(x) \d x
= \int_{\X} \ind{x \in \set{-\sqrt{z}, +\sqrt{z}}} \pX(x) \d x
= 0.
\end{equation*}

For an alternative approach, we recognize that $Z$ is marginally a continuous
$\distr{ChiSq}(1)$ random variable, so it
does have a well-defined probability density function $\pZ(z) = e^{-z/2}/\sqrt{2 \pi z} $.
Applying the mantra of "masses for discrete variables" and "densities for continuous variables",
we can use this continuous density $\pZ(z)$ in the denominator of the Bayes rule and
set the conditional mass $\pXZ(x \given z)$ in the numerator to be a discrete distribution on $\set{-\sqrt{z},+\sqrt{z}}$.
In this case, we have
\begin{align}
\pXZ(x \given z) = \frac{ e^{-x^2/2} }{e^{-z/2}/\sqrt{z}} \ind{x^2 = z}
\implies
\pXZ(-2 \given 4) = \pXZ(2 \given 4) = 2.
\end{align}

The result is clearly not a valid discrete distribution for $X$, as the
probabilities of the two atoms sum up to 4.
Standard probabilistic programming inference engines exhibit this failure
in different ways.
Some systems will silently fail, by sampling traces from the prior distribution
and assigning them a likelihood weight of zero.
Other systems force the user to wrap the observation around artificial
noise, e.g., \lstinline[language=verb]|x = normal(0,1); observe normal(x*x, 0.001) = 4|,
which fundamentally changes the original model desired by the user and in
turn alters the resulting statistical inferences.

The limitations of soft conditioning are not restricted to the situation
just demonstrated above (known as \textit{exact conditioning}~\citep{stein2024exact}).
Inferences using soft conditioning fail to comport with statistical theory in the presence
of mixed-type random variables---which are discrete in one branch and
continuous in another---or in the presence of observe statements within
conditionally executed program branches.
More generally, as we argue in \cref{sec:observe}, language semantics based
on probability densities are largely inadequate for developing a coherent
theory for statistical probabilistic programs.
They are best understood as languages for defining unnormalized measures,
which carry a meaningful statistical interpretation in only a restricted
fragment in the language.

\subsection{Conditioning Probabilistic Programs via Neighborhood Limits}
\label{sec:introduction-neighborhood}

To address the limitations of soft conditioning,
\citet{jacobs2021} introduces an alternative approach based on arguments
from \citet{jaynes2003}.
The key idea is to formalize conditioning on a probability zero set $E$ in terms of a
limiting process, such that for any event $A \subset \X$,
\begin{align}
\Prob{X \in A \mid E} \defeq
  \lim_{\epsilon \to 0}
  \frac
    {\Prob{X \in A \cap E_\epsilon}}
    {\Prob{E_\epsilon}},
\label{eq:intro-neighborhood}
\end{align}
where $E_\epsilon$ is a family of positive probability events that
descend to $E$.
An unavoidable consequence of the Borel--Kolmogorov paradox is that the limit can
and will depend on the chosen family of limiting events, so
\citeauthor{jacobs2021} provides a language construct for users to specify
the neighborhoods within the $\observe$ statements.
\Citeauthor{jacobs2021} demonstrates that this approach can resolve
certain ``paradoxes'' with the standard soft conditioning semantics, such
as unit errors, handling mixed-type distributions, and observe statements
in conditional branches.
The language also supports observations of transformations of random
variables (but these functions are assumed to be injective, and so the
observation pattern with $Z = \obs(X) = X^2$ from
\cref{fig:exact-condition} is not handled directly).

In \cref{sec:neighborhood}, we study the viability of neighborhood limits
as a semantic foundation for statistical probabilistic programs.
By analyzing the measure-theoretic structure of limiting arguments,
we find that this approach exhibits similar limitations to
soft conditioning using probability densities.

In essence, the language of \citeauthor{jacobs2021} can be understood as
enabling a user to condition a probabilistic program on a parametric family
of measure-zero events, say $E(z)$ for observations $z \in \Z$.
We show that null sets in the language do not necessarily arise from observing
a unified random variable $Z = \obs(X)$, i.e., it is not the case
that $E(z) = \set{Z=z}$ for every $z \in \Z$.
This high degree of flexibility puts the formalism of \citeauthor{jacobs2021}
in tension with the standard setting of statistical inference procedures,
which fundamentally revolve around a \textit{fixed} measurable function of the data:
\begin{quotebox}\label{quote:schervish}
A \textit{statistic} is virtually any measurable function of the data, $X$.

\hfill---\citet[p.~83]{schervish1995}.
\end{quotebox}
For the fragment of the language where the null sets
$E(z) = \set{Z=z}$ arise from a unified statistic $Z = \obs(X)$,
then conditioning by limits can be given a rigorous statistical
interpretation (\cref{sec:mlp-neighborhood}).
In the general case, however, neighborhood limits only provide a computational
rule for defining specific conditional probabilities $\Prob{A \given E}$ at fixed
events $A,E$: these individual numbers do not unify
into a coherent conditional probability distribution in the standard statistical sense
(cf.~\cref{prop:limit-not-probability}).

\subsection{Conditioning Probabilistic Programs via Disintegration}
\label{sec:introduction-disintegration}

The approach to conditioning advocated for this article is based
on \textit{disintegration}, a probabilistic concept
that generalizes the elementary law of total probability.
In the discrete setting, a \textit{conditional distribution} of $X \given Z$ is any
function $z \mapsto \lambda x.~\Pr{X=x \given Z=z}$ that factorizes
the prior distribution:
\begin{align}
\Pr{X = x} = \sum_{z \in \Z} \Pr{X = x \mid Z= z} \Pr{Z=z},
&&
\mbox{where }
\Pr{X \in \obsI(z) \mid Z = z} = 1.
\label{eq:intro-disintegrate}
\end{align}
This factorization gives an \textit{inversion} of the generative model.
To generate a trace $X$ from the law $\Pr{X=\cdot}$, we may first sample $Z$
from its marginal distribution $\Pr{Z=\cdot}$ and then sample an $X$
such that $Z = t(X)$ from the
conditional distribution $\Pr{X = \cdot \given Z=z}$.
The marginal distribution of $Z$ is
\begin{align}
\Pr{Z=z} = \sum_{x \in \X} \ind{t(x)=z}\Pr{X=x}.
\end{align}
For each $z$ with $\Prob{Z=z}>0$, any conditional distribution
$\lambda x.~\Pr{X = x \mid Z= z}$
agrees with the usual Bayes rule~\cref{eq:intro-bayes-rule}.
Otherwise if $\Prob{Z=z}=0$, the identity~\cref{eq:intro-disintegrate} shows
that the conditional distribution is undetermined, exactly as in
the usual Bayes rule for discrete probabilities.

Disintegration generalizes this theory from discrete probability spaces to general
spaces.
For example, a disintegration $X \mid Z=z$
of the probabilistic program from \cref{fig:exact-condition}
is given by
\begin{codeverb}
b = sample bernoulli(0.5)
if b == 0
  sample dirac(sqrt(z))
else
  sample dirac(-sqrt(z))
\end{codeverb}
In general, a disintegration is a family of probability measures
$\Prob{X \in \cdot \given Z=z}$ for $z \in \Z$ such that
\begin{align}
\Prob{X \in A} = \int_{\Z} \Prob{X \in A \given Z=z} \Prob{Z=\d z},
&&
\mbox{where }
\Pr{X \in \obsI(z) \mid Z = z} = 1.
\label{eq:intro-disintegration-general}
\end{align}

This identity shows that $\Prob{X \in \cdot \given Z=z}$
is undetermined for every $z$ such that $\Prob{Z=z} = 0$, precisely as in the discrete case.
Instead what matters is the \textit{expected} behavior of the program, over
random observations $Z$ drawn from the data generating process.
The statistical issues with the soft-conditioning approach from
\cref{sec:introduction-observe} and the neighborhood-limit approach
from \cref{sec:introduction-neighborhood} arise from
defining the entire semantics of a probabilistic program in terms
of quantities that are arbitrary on a measure-zero set.
Indeed, it is impossible to assign good statistical meaning when
conditioning a generative model on a \textit{specific} null set,
as famously stated by \citeauthor{kolmogorov1933}:
\begin{quotebox}
The concept of a conditional probability with regard to an isolated hypothesis
whose probability equals 0 is inadmissible.
\hfill---\normalfont{\citet[p.~51]{kolmogorov1933}}
\end{quotebox}

This fact is reinforced by decades of attempts by various researchers to
resolve the Borel--Kolmogorov paradox: if $(X_1,X_2)$ are jointly
continuous in $\R[2]$, there is simply no canonical way to condition on the null event
$\set{X_1=X_2}$ (see \citet{bungert2022lion} for a recent formidable attempt).
On the other hand, for statistical inference procedures with statistic
$Z = \obs(X)$, the disintegration formula furnishes a fully coherent semantics for the
conditional law $X \mid Z=z$.

\subsection{This Work: A Measure-Theoretic Foundation for Conditioning by Disintegration}
\label{sec:intro-contributions}

Equipped with this background, we establish a principled
measure-theoretic foundation for conditioning statistical probabilistic
programs by disintegration.
For a flexible class of probabilistic programs (essentially Turing
complete, equipped with exact real computation), we seek to identify
conditions on the prior distribution over traces
$X \in \X$ and the observation map $\obs: \X \to \Z$ that guarantee the
existence of a disintegration $\muXZ$.
Using techniques from geometric measure theory~\citep{federer1996},
we develop a disintegration for a large class of probabilistic
programs that include loops, branches, discrete, continuous, and mixed-type
random variables, manifold-supported and mixed-dimensional variables, differentiable
structure, and other constructs used in practice.

Beyond existence of the disintegration, we identify an explicit
representation in terms of a Bayes rule that addresses the
limitations from using the formulas for soft conditioning~\cref{eq:intro-bayes-rule-pdf} and
neighborhood limits~\cref{eq:intro-neighborhood}.
The Bayes rule form of the disintegration can be used for a variety of approximate
probabilistic inference algorithms---such as importance sampling, Markov
chain Monte Carlo, variational inference, and sequential Monte Carlo---that
require access to the target density only up to an (unknown) normalizing
constant.
As a proof-of-concept, we prove the consistency of likelihood weighting in a
probabilistic language with unbounded while loops and trace addresses.

\paragraph{Scope}
The scope of this article is firmly on the mathematical side.
We do not seek to build a language implementation that reifies the key
ideas of our construction, and focus primarily on measure-theoretic rather
than language-design aspects.
Our objective is to establish a rigorous foundation for
statistical probabilistic programs that
\begin{enumerate*}[label=(\roman*)]
\item gives insight into existing disintegration-based systems that are
formalized using operational rules; and
\item enables the development of new probabilistic languages that
include more flexible constructs supported by our theory.
\end{enumerate*}
While the theoretical development behind the soundness arguments is
relatively involved, our Bayes rule for conditional distributions is
readily interpretable: it can be used by language designers and
practitioners as an operational device, without reference to its
measure-theoretic justification.

\paragraph{Contributions}

In summary, the main contributions of this article are as follows.

\begin{enumerate}[label={\textbf{(C\arabic*)}}, wide=0pt, itemsep=4pt, leftmargin=*]

\item We systematically review the ways in
  which ``soft conditioning'' semantics using densities
  lead to ill-behaved programs and produce incorrect statistical inference results
  (\cref{sec:observe}).

\item We analyze the measure-theoretic
  structure behind conditioning via neighborhood limits. While
  this procedure provides a workable computational rule, we prove that it does not
  guarantee a coherent posterior distribution over the probability space
  and does not comport with the semantic soundness properties of
  conditioning used in the statistics literature (\cref{sec:neighborhood}).

\item We construct a
  measure-theoretic foundation for inference over probabilistic program traces.
  We prove a novel disintegration theorem for a large family of
  statistical models and show that it carries an explicit Bayes
  rule that can be used by inference algorithms (\crefrange{sec:disintegration}{sec:mlp}).
\end{enumerate}

Language design considerations are discussed in \cref{sec:design} and
related work in \cref{sec:related}.


\section{Conditioning Probabilistic Programs using Probability Densities}
\label{sec:observe}

The dominant paradigm in probabilistic programming---adopted
by major languages such as Gen~\citep{towner2019}, Pyro~\citep{bingham2019},
Stan~\citep{carpenter2017}, and PyMC~\citep{abril2023pymc}---uses soft conditioning~\citep{borgstrom2016lambda,staton2016semantics,huot2023}
to express statistical inference problems.
Our goal in this section is to highlight how
soft conditioning with probability density functions produces ill-defined
programs and inference results that disagree with statistical theory.
These case studies serve to motivate the alternative theory of
conditioning that we develop in \crefrange{sec:lang}{sec:mlp} to resolve
the limitations of soft conditioning.

\begin{listing}[H]
\caption{Language for generative probabilistic programs with observe statements.}
\label{lst:prob-while}
\begin{nicebox}[size=minimal,before skip=5pt, after skip=5pt, boxrule=1pt,bottom=0pt,top=0pt]
\setlength{\abovedisplayskip}{0pt}
\begin{align*}
x &\in \mathsf{Vars}\quad
r \in \mathsf{Real}\quad
o \in \mathsf{Op}\quad
D \in \mathsf{Dist}
\\
e &\in \mathsf{Expr} \Coloneq x \mid r \mid D \mid [e^*] \mid o(e) \mid D(e) \mid \sample~e \\
s &\in \mathsf{Stmt} \Coloneq
  x \boldsymbol{=} e
  \mid \observe~e_1~e_2
  \mid \mathbf{skip}
  \mid s_1~\textbf{;}~s_2
  \mid \textbf{if}~e~\textbf{then}~s_1~\textbf{else}~s_2
  \mid \textbf{while}~e~\textbf{do}~s
\end{align*}
\end{nicebox}
\end{listing}

\Cref{lst:prob-while} shows a standard probabilistic language with soft conditioning,
which includes a $\sample$ expression and an $\observe$ statement.
A $\sample~D(\theta)$ expression generates a random variable from
a probability distribution $D(\theta)$ whose parameters are $\theta$;
for example $D(\theta) = \distr{Normal}(0,1)$ for a continuous variable
and $D(\theta) = \distr{Poisson}(4)$ for a discrete variable.
Similarly, $\observe~D(\theta)~z$ statement records
a ``measurement'' of a random draw from $D(\theta)$ to take value $z$.
Typing rules for expressions are entirely standard
(e.g., in $\textbf{sample}~e$, the expression $e$
must evaluate to a distribution name $D \in \mathrm{Dist}$
applied to valid parameters, etc.) and are omitted.
To highlight the essential statistical issues, which is the focus of this paper,
we restrict attention to programs that halt \textit{almost surely}.
Such programs still contain an infinite number of control-flow paths,
e.g., using a \textbf{while} loop to sample a geometric random variable, or
sampling a Poisson number $N$ of loop iterations.

\subsection{Semantics of Straight-Line Programs}
\label{sec:observe-slp}

A straight-line probabilistic program denotes a target probability density
over sampled variables, weighted by the observations.
In particular, let $\pi$ be a straight-line program that dynamically encounters
$k$ $\sample$ statements over the course of its execution,
say $\sample~D_i(\theta_i(x_{1:i-1}))$ for $i=1,\dots,k$.
The parameters $\theta_i$ may depend
on all $x_{1:i-1}$ previously generated samples in the execution, as well
as other constants in the current environment.
Further, suppose there are $u$ $\observe$ statements,
say $\observe~D'_j(\theta_j(x_{1:n_j}))~\varphi_j(x_{1:n_j})$ for $j=1,\dots,u$,
where again the distributional parameters $\theta_j$ and observed values
$z_j \defeq \varphi_j(x_{1:n_j})$
depend on all $n_j \in \set{0,\dots,k}$ previous samples.

With this notation, the probabilistic program denotes a probability density function
\begin{align}
\pX[\pi](x_{1:k})
&\defeq \frac{\pX*[\pi](x_1, \dots, x_k)}{C^\pi}
=  \frac{
      \prod_{i=1}^{k} h^\pi_i(x_i; x_{1:i-1}) \thinspace
      \prod_{j=1}^{u} g_j^\pi(z_j; x_{1:n_j})}
    {
      \int_{\R^{k}}\pX*[\pi](x_{1:k})\lamX[\pi](\d x_{1:k})
    },
\label{eq:observe-slp-ratio}
\shortintertext{whenever $C^\pi \in (0,\infty)$, where the densities of the sample and observe statement are defined by}
h_i^\pi(x_i; x_{1:i-1}) &\defeq \cPDF_{D_i(\theta_i(x_{1:i-1}))}(x_i), \qquad
g_j^\pi(z_j; x_{1:n_j}) \defeq \cPDF_{D'_j(\theta_j(x_{1:n_j}))}(\varphi_j(x_{1:n_j})).
\label{eq:observe-pdf}
\end{align}
Therefore, \cref{eq:observe-slp-ratio} is taken as a ``posterior density''
over the set $\X \defeq \R^k$ of traces,
with respect to the product $\lamX[\pi] \defeq \otimes_{i=1}^{k}\lamD[D_i]$ of base
measures (here $\lamD = \Leb$ is the Lebesgue measure if $D$ is a continuous distribution
and $\lamD = \Counting$ is the counting measure if $D$ is a discrete distribution).

\subsection{Inference with Densities}
Exact sampling from the distribution~\cref{eq:observe-slp-ratio} is
generally intractable.
Many algorithms for approximate inference bypass this challenge by working with
the unnormalized density $\pX*[\pi]$, which is available by running
the probabilistic program on a fixed trace and aggregating probability densities.

A standard example is sampling via ``likelihood weighting'', i.e.,
importance sampling using the prior as the proposal distribution.
We run the program to produce random samples $x_{1:k}$
from the $k$ $\sample$ statements, while carrying a weight $w$ from evaluating
the probability densities from the $\observe$ statements at their realized
parameters and measurements.
Given $J>0$ importance samples, we approximate the expectation of a test
function $f: \X \to \R$ under the target density \cref{eq:observe-slp-ratio}
using ratio estimation
(which is typically biased but consistent as $J \to \infty$):
\begin{equation}
\Expect[\pX[\pi]]{f\!\left(x_{1:k}\right)}
\approx
\textstyle
  {\sum_{j=1}^J f\!\left(x^{(j)}_{1:k}\right) w^{(j)}}
  \Big/
  {\sum_{j=1}^J w^{(j)}}.
\end{equation}

The unnormalized density can also be used for other inference algorithms
available in probabilistic programming libraries, such as Markov chain Monte
Carlo (MCMC), sequential Monte Carlo, or variational inference, which do
not require knowledge of the intractable normalizing constant $C^\pi$.

\subsection{Soft Conditioning and Statistical Inference in Generative Models}

The \sample{} and \observe{} keywords are constructs that suggestively
define a distribution by conditioning a generative model.
We would like to think of the \sample{} statements as defining ``hidden''
variables and the \observe{} statements as conditioning on the values
of ``observable'' variables.
\Cref{eq:observe-slp-ratio} resembles the Bayes rule, where the numerator
is the joint probability density of the hidden and observed variables and
the denominator is the marginal likelihood of the observed variables.

Despite these apparent parallels, the target
density~\cref{eq:observe-slp-ratio} in general carries no intrinsic meaning
as conditioning in the statistical sense.
A program with $\sample$ and $\observe$ is best understood as defining an
unnormalized target density: the statistical interpretation fails in two
places.
\begin{itemize}
\item First, the soft conditioning semantics depend on values of quantities
at specific null sets where measure-theory tells us that their values are
allowed to be arbitrarily defined.
As a result, a specific program may fail to denote a meaningful conditional
distribution, even when the underlying statistical inference problem is
perfectly well defined.

\item Second, under a statistical interpretation of programs with soft conditioning,
inference results computed using \zcref[cap=true]{eq:observe-slp-ratio}
do not cohere with inference results from statistical theory.
\end{itemize}

\subsection{Failure to Denote a Conditional Distribution}
\label{sec:observe-densities}

\subsubsection{Infinite Marginal Likelihood}
\label{sec:observe-densities-ml}

Consider the following probabilistic program.
\begin{codeverb}
u = sample uniform(0, 1)
x = sample gamma(u, 1)
observe normal(0, x) 0
\end{codeverb}
The program encodes an inference problem for the posterior distribution of
a Gaussian random variable $Z \sim \distr{Normal}(0,X)$ at $z = 0$, where
the variance $X$ has a Gamma prior whose shape $U$ has a uniform hyperprior.
The issue with this program is that its normalizing constant is infinite:
\begin{align*}
C^\pi
= \int_{0}^{1}
  \int_{0}^{\infty}\left[ \cPDF_{\distr{Normal}(0,x)}(z=0) \cdot \cPDF_{\distr{Gamma}(u,1)}(x)\right] \d x \d u
\propto \int_0^{1} \int_{0}^{\infty} \frac{x^{u-3/2} e^{-x}}{\Gamma(u)} \d x \d u = \infty.
\label{eq:heterozetesis}
\end{align*}
The inner integral is infinite
for every $u \in (0,0.5]$, which has positive measure under the uniform prior for $U$.
The same pathology arises if $u$ is replaced with any constant in
$(0,0.5]$ (see \cref{fig:uniform}).
\begin{figure}[H]
\includegraphics[width=\linewidth]{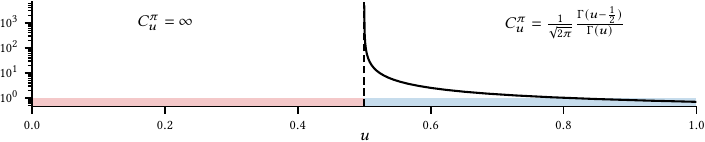}
\caption{Normalizing constant of the gamma-normal probabilistic program as a function of the variable $u$.}
\label{fig:uniform}
\end{figure}
The density semantics thus cannot produce
a well-defined distribution and fails to assign good meaning to the
program as a representation of the ``posterior distribution''
$(U, X) \given Z = 0$.

This pathology has a sound resolution under a statistical
interpretation of the program.
It turns out that the problematic term is not $u$, but
rather the program constant $0$ in the $\observe$ statement.
The density of a continuous distribution can be arbitrarily
defined on a set of measure zero, so the behavior of
$\cPDF_{\distr{Normal}(0,x)}(z)$ at $z=0$,
as well as the entire posterior distribution $(U,X) \given Z = 0$, are not
uniquely defined mathematical objects.
Second,
whatever representative of $\cPDF_{\distr{Normal}(0,x)}$ we choose,
any inference algorithm for the posterior $(U,X) \given Z$
has zero probability of encountering an observation $z$
in a ``bad'' set $A$ of points whose marginal likelihoods are invalid:
\begin{align}
\Prob{Z \in A} = 0, &&
A \defeq \set*{z \in \R \;\middle|\; C^\pi(z) \in \set{0,\infty} }.
\end{align}

But this resolution is incompatible with soft conditioning.
For example, the above program does not contain an explicit variable called
$Z$, which is represented only implicitly by an $\observe$ statement that
adds a likelihood factor to an unnormalized target density.
Any convention about a generative story for this $\observe$
statement is an external modeling choice that lives
outside the program's intrinsic semantics.
Moreover, the target density $\pX[\Pi](x_{1:k})$ over variables from
the $\sample$ statements is parameterized by the specific numerical
constants $z$ (in this case, $z=0$) written in the text of the program.
Such semantics are inadmissible from the statistical viewpoint.
The values of the probability density
functions $h^\pi_i$ and $g^\pi_j$ in \Cref{eq:observe-pdf}
are underdetermined at the program constants
that appear in $\observe$ statements.
As the semantics of the program depend on a specific $z$,
we can set $C_\pi$ to be any value by selecting another representative
of the density $\cPDF_{\distr{Normal}}$ at $z=0$.
There is no statistical basis on which to prioritize or refute any particular choice.

\subsubsection{Scoring with Arbitrary Values}
\label{sec:observe-densities-score}

Many languages feature a special construct called $\score(z)$,
which allows users to introduce an arbitrary multiplicative factor $z \in \R_{>0}$
to the unnormalized density.
While $\score$ appears more flexible than $\observe$ on the surface,
they are equivalent:
\begin{align}
\observe(D, z) \rightsquigarrow \score(\cPDF_D(z))
&&
\score(z) \rightsquigarrow \observe(\distr{Exponential}(z), 0).
\label{eq:equiv-observe-score}
\end{align}
The first expression holds by definition.
The second expression holds because
\begin{align}
\cPDF_{\distr{Exponential}(z)}(x=0) = ze^{-zx}\vert_{x=0} = z
&& (z > 0).
\end{align}

\Cref{eq:equiv-observe-score} gives us insight into the misbehavior
of additional nondenoting programs.
Consider the following program $\pi$, whose
normalizing constant
$C^\pi = \int \exp(x^2) \cPDF_{\distr{Normal}(0,1)}(x) \d x = \infty$.
\begin{codeverb}
x = sample normal(0,1)
score exp(x*x)
\end{codeverb}
This program does not include a numerical constant on which
probability densities are evaluated.
However, if we transform the $\score$ into an equivalent $\observe$ and interpret
the program as a generative model
$X \sim \distr{Normal}(0,1), E \sim \distr{Exponential}\left(\exp(X^2)\right)$,
we find that its denotation is equivalent to the posterior $X \mid E = e$ at $e=0$
(using the Bayes rule~\cref{eq:intro-bayes-rule-pdf} with canonical densities).
As in the previous case, the posterior is undetermined at $e=0$.
The statistical interpretation of this semantically equivalent program is again
tied to the behavior of a probability density at a null set.

More generally, there are many semantics-preserving transformations
from $\score$ to $\observe$ besides~\cref{eq:equiv-observe-score}.
For example, the fact for any mean $\mu \in \R$ and variance $v > 0$
\begin{equation}
\cPDF_{\distr{Normal}(\mu,v)}(x=\mu) = \frac{1}{\sqrt{2\pi v}}
\quad \mbox{ gives } \quad
\score(z) \rightsquigarrow \observe(\distr{Normal}\left(0,1/\left(2\pi z^2\right)\right), 0).
\end{equation}

In general, there could be many different generative stories behind
the $\observe$ statements in a given probabilistic program, which
all have the same density semantics.
It is therefore impossible to extract a statistical inference
problem from the text of the probabilistic program alone.

\subsubsection{Exact Conditions}
\label{sec:observe-densities-exact}

A statistical motif that $\sample$ and $\observe$ cannot express
is an ``exact'' condition.
For example, consider $X, Y \sim \distr{Uniform}(0,1)$ and $Z = 2X - Y$.
For every observation $z$ outside a null set,
the posterior distribution
$(X,Y)|\set{Z=z}$ is uniform over the line segment
$L(z) \defeq \set{(x,y) \in [0,1]^2 \mid 2x - y = z}$.
%
Many users attempt to express this problem as follows.
\begin{codeverb}
x = sample uniform(0,1)
y = sample uniform(0,1)
z = 2*x - y
observe bernoulli(z==0) 1
\end{codeverb}

With soft conditioning, the $\distr{Bernoulli}$ flip has
weight $0$ with probability $1$.
The weight of the observation is also $0$ with
probability $1$ and the normalizer is $C_\pi = 0$.
The $\sample$ and $\observe$ constructs are not flexible enough to handle
exact conditions, which are statistically well-defined but poorly defined
under the density semantics: the joint distribution of $(X,Y,Z)$ lives in
$\R^3$ but is singular with respect to $\Leb^3$, even though every pair of
variables is absolutely continuous with respect to $\Leb^2$.
The na\"ive choice of product base measures $\lamX[\pi]$ used to dominate
the density $\pX[\Pi]$ cannot reason about such
distributions, forcing $\pX*[\pi] = 0$ a.e.
The elementary Bayes rule~\cref{eq:intro-bayes-rule} assumes that the posterior distribution is
dominated by the prior, which is not the case with exact conditioning.

The issue here is dual to the issue from the previous two examples.
With exact conditions, the theoretical posterior distribution is
supported on a measure-zero set for almost every observation $z$.
However, the program encodes observation
as a Bernoulli variable---whose probability mass is \textit{never}
an arbitrary choice---that disagrees with its observed value with
prior probability one.

\subsection{Incorrect Statistical Inferences}
\label{sec:observe-inferences}

The previous section focused on semantic issues with probabilistic programs
when their denotations are sensitive to arbitrary choices at null sets.
We now focus on a well-behaved fragment where the target density does
denote a probability measure, but where
interpreting~\cref{eq:observe-slp-ratio} as a Bayes rule produces
incorrect statistical inference results.
These issues are best highlighted for probabilistic programs that exhibit
control flow (i.e., branching statements), whose semantics we briefly recount.

\subsubsection{Semantics of Programs with Control Flow}
\label{sec:observe-inferences-control}

For a probabilistic program with control flow,
enumerating all possible execution paths yields a
collection $\Pi$ whose elements are straight-line programs $\pi$.
Suppose that each straight-line program $\pi$ contains
$k_\pi$ $\sample$ statements and $u_\pi$ $\observe$ statements.
Let $A_\pi \subset \R^{k_\pi}$ be the set of sample traces
$x_{1:k_\pi}$ whose realized control flow is exactly equal to $\pi$, i.e., by
satisfying all the corresponding guard constraints in the $\textbf{if}$
and $\textbf{while}$ statements.
The trace space is $\X \defeq \bigcup_{\pi} \set{\pi} \times \R^{k_\pi}$
and the target probability density function is
\begin{equation}
\pX[\Pi](\pi, x_{1:k_\pi})
  \defeq \frac
    {\ind{x_{1:k_\pi} \in A_\pi} \pX*[^\pi](x_{1:k_\pi})}
    {\displaystyle \sum_{\gamma \in \Pi} \int_{\R^{k_\gamma}} \ind{x'_{1:k_\gamma} \in A_\gamma} \pX*[\gamma]\left(x'_{1:k_\gamma}\right)\lamX[\gamma]\left(\d x'_{1:k_\gamma}\right)},
\label{eq:observe-mlp-ratio}
\end{equation}
which is taken with respect to the coproduct
$\lamX[\Pi] \defeq \coprod_{\pi \in \Pi}\left(\dirac<\Pi>{\pi} \otimes \lamX[\pi]\right)$
of base measures.
The notation $\dirac<M>{m}$ denotes an
atomic measure on a set $M$ located at $m \in M$, often abbreviated
as $\dirac{m}$.

\subsubsection{Mixed-Type Distributions}
\label{sec:observe-inferences-mixed}

Consider a pair of random variables $(X,Z)$ where $X$ is a fair coin flip and
$Z \sim \distr{Uniform}(0,1)$ if $X = 0$ and $Z \sim \distr{Bernoulli}(0.5)$ otherwise.
Given an observation $\set{Z=1}$, the posterior $\Pr{X \mid Z = 1}$
contains no measure-theoretic difficulties as $\Pr{Z = 1} = 1/4 > 0$,
and so the posterior is not arbitrarily defined at this observation.
\Cref{lst:post-mixed-type} encodes this statistical inference problem
using a $\sample$ statement for $X$ and two $\observe$ statements
for $Z=1$ in different program branches.
The semantics~\cref{eq:observe-mlp-ratio} for probabilistic programs
with control flow gives
\begin{equation}
\pX[\Pi](x=1) = \frac{1/2 \cdot 1/2}{1/2 \cdot 1 + 1/2 \cdot 1/2} = \frac{1}{3},
\label{eq:post-mixture-ppl}
\end{equation}
i.e., the else-branch is \textit{less likely} than the if-branch.
However, statistical theory tells us that
the posterior distribution must assign probability one to the else-branch, because
a probability mass is infinitely more likely than a probability density.
An ad-hoc computation confirms this intuition: in order to compare the
likelihood of the observation under each branch as bona-fide probabilities
on equal footing, we take the limit of conditional probabilities that
$Z \in [1-\epsilon, 1+\epsilon]$
\begin{align}
\Pr{X=1 \mid Z = 1}
= \lim_{\epsilon \to 0} \Pr{X=1 \mid Z \in [1-\epsilon, 1+\epsilon]}
= \lim_{\epsilon \to 0} \left(\frac{1/2 \cdot 1/2}{1/2 \cdot 1/2 + 1/2 \cdot \epsilon}\right) = 1.
\label{eq:post-mixture-statistics}
\end{align}

\begin{listing}
\begin{sublisting}{.45\linewidth}
\begin{lstlisting}[language=verb, frame=single]
x = sample bernoulli(0.5)
if x == 0
  observe uniform(0,1) 1
else
  observe bernoulli(0.5) 1
\end{lstlisting}
\caption{Mixed-Type Distribution}
\label{lst:post-mixed-type}
\end{sublisting}\hfill
\begin{sublisting}{.45\linewidth}
\begin{lstlisting}[language=verb, frame=single]
x = sample bernoulli(0.5)
if x == 0
  observe uniform(0,1) 1
  observe uniform(0,1) 0.5
else
  observe uniform(0,1) 0.25
\end{lstlisting}
\caption{Control Flow with Different Observes}
\label{lst:post-control-flow}
\end{sublisting}
\caption{Programs where scoring semantics disagree with probabilistic conditioning.}
\label{lst:post}
\end{listing}

The disagreement between \cref{eq:post-mixture-ppl,eq:post-mixture-statistics}
is known in the literature, often as the ``Indian GPA Problem''~\citep{wu2018discrete}.
To clarify the issue, let $\pi_1,\pi_2 \in \Pi$ denote the two straight-line programs
in \cref{lst:post-mixed-type} corresponding to $X=0$ and $X=1$ respectively.
The probability densities for the $\observe$ statements
used in path densities $\pX[\pi_1]$ and $\pX[\pi_2]$
are defined with respect to \textit{different} base measures.
In particular, $Z$ is dominated by
$\Leb$ in path $\pi_1$ but by $\dirac{0} + \dirac{1}$
in path $\pi_2$.
\Cref{eq:observe-mlp-ratio} is therefore not a valid Bayes'
rule, which requires that the densities for hidden and observed
variables are defined with respect to a single unified base measure.
For the observable variable $Z$, a natural unified
base measure to use in the semantics
$\pX[\Pi]$ of the overall program
is $\Leb + \dirac{0} + \dirac{1}$, which gives
\begin{align}
\pi_1&: \Pr{Z \in \d z \mid X = 0} = \ind[(0,1)]{z} \cdot (\Leb(\d z) + \dirac{0}(\d z) + \dirac{1}(\d z)), \label{eq:z-dens-pi-1} \\
\pi_2&: \Pr{Z \in \d z \mid X = 1} = (1/2 \cdot \ind[\set{0,1}]{z}) \cdot (\Leb(\d z) + \dirac{0}(\d z) + \dirac{1}(\d z)). \label{eq:z-dens-pi-2}
\shortintertext{On the other hand, the implicit base measures used by the density semantics of $\observe$
are $\Leb$ for the path $\pi_1$ and $\dirac{0} + \dirac{1}$ for $\pi_2$:}
\pi_1&: \Pr{Z \in \d z \mid X = 0} = \ind[[0,1]]{z} \cdot \Leb(\d z) \label{eq:z-dens-pi-1-bad} \\
\pi_2&: \Pr{Z \in \d z \mid X = 1} = (1/2 \cdot \ind[\set{0,1}]{z}) \cdot (\dirac{0}(\d z) + \dirac{1}(\d z)). \label{eq:z-dens-pi-2-bad}
\end{align}

For path $\pi_1$, the correct base measure~\cref{eq:z-dens-pi-1}
forces the uniform density $\ind{z \in (0,1)}$ to \textit{exclude} $0$ and
$1$, while the incorrect base measure~\cref{eq:z-dens-pi-1-bad} allows the
density to be \textit{arbitrary} at $0$ and $1$.
From a language perspective, this example shows that the choice of correct base
measure is not local: it requires knowledge of the
distribution of the observable variables across all program paths in $\Pi$.
In \crefrange{sec:disintegration}{sec:mlp}, we develop a method to construct
correct base measures for general probabilistic programs that can be used to
soundly compare probability densities of observations across control flow paths.

\subsubsection{Continuous Observations Along Different Paths}

\Cref{lst:post-control-flow} shows a related issue to the mixed-type
distribution problem.
Under a statistical interpretation of this program,
there is a branching variable $X$.
If $X=0$ then we observe the uniform variables $(Z_1, Z_2) = (1, 0.5)$,
otherwise we observe the uniform variable $Z_1 = 0.25$.
Repeating the calculation from \cref{eq:post-mixture-ppl} gives the ``posterior'' probability
\begin{equation}
\pX[\Pi](x=1) = \frac{1/2 \cdot 1}{1/2 \cdot 1 + 1/2 \cdot 1 \cdot 1} = \frac{1}{2}.
\label{eq:post-branch-ppl}
\end{equation}
On the other hand, using a limit-based computation analogous to
\cref{eq:post-mixture-statistics} with neighborhoods
\begin{equation}
E_\epsilon \defeq \set*{X=0, Z_1 \in [1-\epsilon, 1+\epsilon], Z_2 \in [0.5-\epsilon, 0.5+\epsilon]} \cup \set*{X=1, Z_1 \in [0.25-\epsilon, 0.25+\epsilon] }
\label{eq:post-branch-neighborhood}
\end{equation}
shows that $\Pr{\set{X=0} \cap E_\epsilon}$ accumulates probability $\epsilon^2$
whereas $\Pr{\set{X=1} \cap E_\epsilon}$ accumulates $\epsilon$.
Only the
$X=1$ branch survives in the limit $\epsilon \to 0$ and has posterior probability one:
\begin{align}
\Pr{X = 1 \mid E}
= \lim_{\epsilon \to 0} \Pr{X=1 \mid E_\epsilon}
= \lim_{\epsilon \to 0} \frac{1/2  \cdot 2\epsilon}{1/2 \cdot \epsilon \cdot 2\epsilon + 1/2 \cdot 2 \epsilon}
= 1.
\end{align}
In general, under the statistical interpretation of the $\observe$ statements,
only those program paths $\pi \in \Pi$ with the fewest
continuous observations survive the limit using these neighborhoods.
Under the density semantics (where
$\observe$ is just a $\score$ with an arbitrary number), probabilities
are defined computationally using~\cref{eq:observe-mlp-ratio} and result in
unsound inferences such as \cref{eq:post-mixture-ppl,eq:post-branch-ppl}.

\medskip

\noindent The examples from
\cref{sec:observe-densities,sec:observe-inferences} emphasize
that the most coherent interpretation of the density semantics of a straight-line
probabilistic program $\pi$ is an unnormalized measure.
Its total mass $C_\pi$ is the expectation of the scores defined by the $\observe$ statements
under the probability distribution defined by the $\sample$ statements.
Control flow gives a sum of pathwise unnormalized measures, whose total
mass is the sum, over program paths, of the prior path probability times
the expected score conditional on taking that path.
Many users and language developers interpret
\cref{eq:observe-slp-ratio,eq:observe-mlp-ratio} as conditional probability
densities, which gives inference results that disagree with
statistical theory.


\section{Conditioning Probabilistic Programs Using Neighborhood Limits}
\label{sec:neighborhood}

To address the limitations of soft conditions---in particular those in
\cref{sec:observe-inferences} where density calculations produce incorrect
results---\citet{jacobs2021} uses a neighborhood-based
approach to conditioning on measure-zero events.
According to \citeauthor{jacobs2021}, the issue with the density
semantics is that they are not expressive enough to handle the inescapable
ambiguity that arises from conditioning on null sets.
To resolve these issues, \citeauthor{jacobs2021} introduces the construct
\begin{equation}
\observe(D(\theta), \interval(z, \epsilon)),
\end{equation}
which encodes an observation $Z \in [z-\epsilon, z+\epsilon]$
for some random variable $Z \sim D(\theta)$,
analogously to the approach we used
in \cref{eq:post-mixture-statistics,eq:post-branch-neighborhood}.
While the language does not include random variable names for the
$\observe$ statements, \citeauthor{jacobs2021} still interprets these statements in the
statistical modeling sense (i.e., not unnormalized measures),
as is evident in the discussion from \citet[\S3.2]{jacobs2021}.

Our goal in this section is to study neighborhood-limit semantics as a
candidate foundation for statistical probabilistic programming.
Indeed, \cref{sec:observe-inferences} showed that a carefully
chosen limit construction can result in plausible posterior inferences for
the programs in \cref{lst:post}.
We will show that this approach---when used in its full generality---exhibits
analogous issues to soft conditioning.

\subsection{Theory of Conditional Probability via Neighborhood Limits}
\label{sec:neighborhood-rao}

The following principle, adapted from \citet{jaynes2003}, forms the
basis of \citeauthor{jacobs2021}'s semantics.
\begin{quotebox}
\itshape
A general method for conditioning on measure-zero events is to define a
sequence of events $E_\epsilon$ parameterized by a number $\epsilon > 0$
such that $E_\epsilon$ in some sense converges to $E$ in the limit
$\epsilon \to 0$, but $\Prob{E_\epsilon} > 0$ for all $\epsilon > 0$.
We then \textbf{define} the conditional distribution to be the limit of
$\Prob{A \mid E_\epsilon}$:
\begin{align}
\Prob{A \mid E} = \lim_{\epsilon \to 0}\frac{\Prob{A \cap E_\epsilon}}{\Prob{E_\epsilon}}
\label{eq:conditional-probability-via-limits}
\end{align}
\hfill---\normalfont{\citet[p.~58:2]{jacobs2021}}
\end{quotebox}
The value of this limit will necessarily depend on the sequence of events
$E_\epsilon$, a consequence of the Borel--Kolmogorov paradox.
Any statistician who is truly concerned with conditional probabilities
given a bare null event $E$ \textit{must} also specify the neighborhoods
$E_\epsilon$ that converge to $E$.

\subsubsection{Limits Do Not Give Conditional Probability Distributions}

In \textit{Conditional Measures and Applications}, \citet[p.~68]{rao2005}
argues that the limit \cref{eq:conditional-probability-via-limits}
can be used to obtain a conditional distribution on the probability
space $(\cO, \sigO, \baseO)$ of a probabilistic experiment.
By appealing to the Vitali-Hahn-Saks theorem~\citep[Theorem 4.3.11]{renyi1970},
\citeauthor{rao1988} argues that if $E_1 \supset E_2 \supset \cdots$ is a sequence
of decreasing events such that $\cap_{n=1}^{\infty}E_n = E \ne \varnothing$,
with $\Prob{E}=0$ and
$\Prob{E_n} > 0$ for each $n$, then \cref{eq:conditional-probability-via-limits} defines
a probability distribution if the limit exists for all $A \in \sigO$.
If such a result holds, then the limit definition serves as a
rigorous approach to conditioning in probabilistic programs.

The flaw in \citeauthor{rao2005}'s reasoning (also in \citet[\S4b]{rao1988}) is that the hypothesis needed to invoke
Vitali-Hahn-Saks is impossible to satisfy using \cref{eq:conditional-probability-via-limits}:
the limit does not exist for all $A \in \sigO$.
The next result proves that the limit definition of
$\Prob{\cdot \mid E}$ does not give a conditional distribution.

\begin{proposition}
\label{prop:limit-not-probability}
Let $(\cO, \sigO, \baseO)$ be a probability space and let
$\set{E_n} \downarrow E$ be a sequence of events converging to $E$,
with $\Prob{E_n} > 0$ for every $n$ and $\Prob{E}=0$.
There is no measure $\nu$ on $\sigO$ such that
\begin{align}
\nu(A) = \lim_{n \to \infty} \frac{\Prob{A \cap E_n}}{\Prob{E_n}} && (A \in \sigO).
\end{align}
\end{proposition}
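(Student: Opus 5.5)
The plan is a direct contradiction argument. I will suppose that some measure $\nu$ on $\sigO$ satisfies $\nu(A) = \lim_{n\to\infty} \Prob{A \cap E_n}/\Prob{E_n}$ for every $A \in \sigO$, with every limit existing. I will then compute $\nu$ on a partition of $\cO$ into $E$ and $E^c$ and show that the total mass comes out as both $1$ and $0$.

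First, taking $A = \cO$ gives $\Prob{\cO \cap E_n}/\Prob{E_n} = 1$ for every $n$, so $\nu(\cO) = 1$. Second, taking $A = E$: since $E \subset E_n$ we have $\Prob{E \cap E_n} = \Prob{E} = 0$, so every term of the sequence vanishes and $\nu(E) = 0$.

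Third, fix $m$ and take $A = E_m^c$. Because the sequence is decreasing, $E_n \subset E_m$ for all $n \ge m$, so $E_m^c \cap E_n = \varnothing$ for all such $n$. Hence the ratio is eventually $0$ and $\nu(E_m^c) = 0$ for every $m$.

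Finally, $E_n \downarrow E$ means $E = \bigcap_m E_m$, so $E^c = \bigcup_m E_m^c$, and this is an increasing union. Countable additivity, through continuity from below, gives $\nu(E^c) = \lim_m \nu(E_m^c) = 0$. Countable subadditivity would also suffice here. Therefore $\nu(\cO) = \nu(E) + \nu(E^c) = 0$, contradicting $\nu(\cO) = 1$. The hypothesis $E \neq \varnothing$ plays no role.

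There is no real obstacle in this argument. The only point needing care is the step that uses countable (not merely finite) additivity of $\nu$ to pass from the sets $E_m^c$ to $E^c$. That step is exactly where the limit fails to be a measure: as a set function, the limit is at best finitely additive and concentrates its mass "at the boundary" of the shrinking sets. I would remark that this also shows why the Vitali--Hahn--Saks argument cannot apply, since it would produce a countably additive limit.
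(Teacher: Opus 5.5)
Your argument is correct and is essentially the paper's proof in complemented form: the paper shows $\nu(E_k)=1$ for each $k$ and uses continuity from above to get $\nu(E)=1$, contradicting the direct computation $\nu(E)=0$, whereas you show $\nu(E_m^c)=0$ and use continuity from below to get $\nu(E^c)=0$, hence $\nu(\Omega)=0\neq 1$. Since $\nu(\Omega)=1$ makes $\nu$ finite, the two continuity steps are interchangeable; your version simply does not need finiteness at that step.
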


\begin{proof}
Assume that $\nu$ exists.
Clearly $\nu$ is finite, as $\nu(\Omega) = \lim_n \Prob{E_n}/\Prob{E_n}=1$.
Also
\begin{align}
\nu(E)
 \defeq \lim_{n \to \infty} \frac{\Prob{E \cap E_n}}{\Prob{E_n}}
  = \frac{\Prob{E}}{\Prob{E_n}} = 0,
\end{align}
because $E \subset E_n$ for every $n \ge 1$.
But computing $\nu(E)$ using continuity from above gives
\begin{align}
\nu(E)
= \nu(\cap_{k=1}^{\infty}E_k)
= \lim_{k \to \infty} \nu(E_k)
= \lim_{k \to \infty} \left( \lim_{n \to \infty} \frac{\Prob{E_k \cap E_n}}{\Prob{E_n}}  \right)
= \lim_{k \to \infty} \lim_{n \ge k} 1 = 1,
\end{align}
where the penultimate equality follows from the fact that
$E_n \subset E_k$ whenever $n \ge k$.
%
\end{proof}

\Cref{prop:limit-not-probability} implies, after specifying a limiting
neighborhood, the rule~\cref{eq:conditional-probability-via-limits}
does not deliver a coherent conditional distribution.
Rather, it delivers a collection of isolated numbers $\Prob{A \mid E}$ that
in general have no relationship with one another as the event $A \in \sigO$ changes.
%

\subsubsection{Conditional Probability Given a Sigma-Algebra}

As \cref{prop:limit-not-probability} rules out the possibility that a
neighborhood limit defines a conditional distribution, we will
analyze the mathematical object~\cref{eq:conditional-probability-via-limits}
from the lens of modern probability theory~\citep[\S5]{kolmogorov1933}.
Let $(\cO, \sigO, \baseO)$ be a probability space.
Given events $A, B \in \sigO$ with $\Prob{B}>0$, the conditional probability
$\Prob{A \mid B} \defeq \Prob{A \cap B}/\Prob{B}$ is defined in the usual way.
Given event $A \in \sigO$ and a sub-sigma algebra $\sigE \subset \sigO$,
we define the \textit{conditional probability}
$\Prob{A \given \sigE}$ to be any $\sigE$-measurable random variable such that
\begin{align}
\Prob{A \cap E} = \int_{E} \Prob{A \given \sigE}(\omega) \Prob{\d \omega}
&& (E \in \sigE).
\label{eq:conditional-probability}
\end{align}

As usual, the random variable $\Prob{A \given \sigE}$ is unique up to
$\baseO$-null sets.
For example, if $E \in \sigE$ is a
null event, then $\Prob{A \given \sigE}(\omega)$ can be defined arbitrarily
for $\omega \in E$, because $\Prob{A \cap E} = 0$ which leaves the integrand
unconstrained.
There is no mathematical basis on which to favor any particular version
over another, and program semantics must not
be sensitive to the chosen version.

Multiple individual conditional probabilities can be coherently combined
together using the idea of a ``regular'' conditional probability.
For a sub-sigma algebra $\sigG \subset \sigO$,
a \textit{regular conditional probability}
on $\sigG$ given $\sigE$ is a function
$\Prob{ \cdot \given \sigE }(\cdot): \sigG \times \cO \to [0,1]$
such that
\begin{itemize}[nosep]
\item For every $A \in \sigG$,
  the map $\omega \mapsto \Prob{ A \given \sigE }(\omega)$ is a conditional probability
  of $A$ given $\sigE$;
\item For $\baseO$-almost every $\omega \in \cO$, the map
  $A \mapsto \Prob{A \given \sigE}(\omega)$ is a probability measure on $(\cO, \sigG)$.
\end{itemize}

\subsubsection{The Sigma-Algebra Structure of a Neighborhood}

Let us now analyze neighborhood limits using conditional probabilities.
Following \citet[p.~68]{rao2005}, we identify
$\sigE \defeq \sigma(\set{E_n : n \ge 1})$
as the underlying sigma-algebra that
is being conditioned on when the practitioner specifies
a neighborhood-limit \cref{eq:conditional-probability-via-limits}
with descending sets $E_1 \supset E_2 \supset \cdots$.
This sigma-algebra $\sigE$ is generated by the following disjoint events,
which are all atoms of $\sigE$:
\begin{align}
B_0 \defeq \cO \setminus E_1,
\qquad\qquad
B_n \defeq E_n \setminus E_{n+1} \quad (n \ge 1),
\qquad\qquad
B_\infty \defeq E.
\end{align}
It is readily verified that any conditional distribution (regular or otherwise)
given $\sigE$ is of the form
\begin{align}
\Prob{A \given \sigE}(\omega) =
\sum_{n : \Prob{B_n} > 0} \Prob{A \given B_n} \ind[B_n]{\omega}
+
\sum_{n : \Prob{B_n} = 0} c_n \ind[B_n]{\omega}
+
c_\infty \ind[E]{\omega}
&& (\omega \in \Omega),
\label{eq:conditional-prob-given-sequence}
\end{align}
where $c_n$ and $c_\infty$ are arbitrary constants that cannot be determined
by the sets $(E_n)_{n \ge 1}$ alone.

This representation exposes the nature of neighborhood-limit probabilities:
they pin down only the underdetermined constant
$c_\infty \defeq \Prob{A \given E} = \lim_n \Prob{A \given E_n}$
in a given version of $\Prob{A \given \sigE}$.
The program's meaning under \cref{eq:conditional-probability-via-limits}
rests entirely on an arbitrary choice of $c_\infty$; from the statistical
perspective, however, we can only depend on the behavior of the
$\baseO$-a.s. unique object $\Prob{A \given \sigE}$ under an integral sign.
This semantic issue with neighborhood
limit mirrors the one for density semantics from
\cref{sec:observe-densities}, which depend on values of probability density
functions at specific points.

\subsection{Revisiting the ``Type 1'' Paradox}
\label{sec:neighborhood-bmi}

The neighborhood-limit approach aims to solve the problem of conditioning
on null sets.
We wish to demonstrate that a bare null set is too informal to encode a
meaningful statistical inference problem.
Toward this end, we review the ``Paradox of Type 1'' example~\citep[\S3]{jacobs2021},
where different variables are
observed in different control flow paths (see also \cref{lst:post-control-flow}).

\begin{codeverb}
h = sample normal(1.7, 0.5)  # sample height
w = sample normal(70, 30)    # sample weight
b = sample bernoulli(0.5)
if b == 0
  observe normal(h, 0.1) 2   # noisy measurement of height <@\label{line:bmi-h}@>
else
  observe normal(w, 5) 90    # noisy measurement of weight <@\label{line:bmi-w}@>
bmi = w / h^2
\end{codeverb}

As \citeauthor{jacobs2021} interprets $\observe$
in the sense of statistical conditioning,
the statements on \cref{line:bmi-h,line:bmi-w}
should \textit{not} be understood using the
equivalent $\score$ semantics \cref{eq:equiv-observe-score}.
Rather, \citeauthor{jacobs2021}' discussion states that the program conditions the joint distribution
of $X \defeq (H,W,B,H',W')$ on the null event
\begin{equation}
E \defeq \set{B = 0, H' = 2} \cup \set{B=1, W' = 90},
\label{eq:event-bmi}
\end{equation}
where $H'$ and $W'$ are the implicit random variables used in the two
$\observe$ statements.
\Citeauthor{jacobs2021} shows that using soft conditioning to analyze
this program exhibits \textit{unit anomalies}, in the sense that the
posterior distribution of $X \given E$ changes if we measure height
using centimeters instead of meters.
While acknowledging that some readers may object about conditionally
executed $\observe$, \citeauthor{jacobs2021} argues by analogy that because
they have a rigorous interpretation in the discrete case, they should also
carry good probabilistic meaning in the continuous case.
We study this assertion from the perspective of statistical inference, and
find that unit anomaly problems actually mask a more significant issue that
is caused by interpreting $\observe$ as conditioning on a bare null set.

\subsubsection{Bare Null Events as Observations}

Recall from \cref{prop:limit-not-probability} that
$\Prob{\cdot \given E} = \lim_n \Prob{\cdot \given E_n}$ does not
define a probability distribution, so we require an alternative mathematical
explanation of this limit.
By considering repeated trials of the statistical experiment in the program
above, the null event \cref{eq:event-bmi} must be understood
as a member of a family of events indexed by observable values $z \defeq (z_1,z_2)$
of the noisy height and weight from a given realization of the underlying experiment:
\begin{align}
E(z) \defeq \set{B=0, H'=z_1} \cup \set{B=1, W'=z_2} && \left(z \in \R^2\right),
\label{eq:pseudogynous}
\end{align}
so that $E \equiv E(2, 90)$.
As statistics are by definition measurable
functions of the data (cf.~\zcpageref{quote:schervish}),
we seek to construct a random variable $Z$ in the probability space
such that $E(z) \equiv \set{Z = f(z)}$ for all $z$.
Such a construction, if possible, would lead to a principled statistical inference
problem.
Even though the posterior $X \mid Z=z$ is arbitrary at any fixed $z$
(e.g., the constants 2 and 90 hard-coded in the program), it is perfectly
well-defined as a regular conditional distribution defined earlier.

For events of the form shown in \cref{eq:event-bmi}, such a construction
is generally impossible.
Indeed, there are infinitely many different random variables that encode
$E(z)$ at a \textit{specific} $z$, for example,
\begin{align}
Z' &\defeq \ind[E(z)]{} && E(z) = \set{Z' = 1} \label{eq:bare-null-rv-1}, \\
Z'' &\defeq (1-B)(H'-z_1) + B(W'-z_2) && E(z) = \set{Z'' = 0}. \label{eq:bare-null-rv-2}
\end{align}
But $Z'$ and $Z''$ are of the form $Z^{(z)}$, i.e., the definition of
each random variable depends on the specific observation $z$.
Therefore, each event $E(z) = \set{Z^{(z)} = f(z)}$ in the family
arises from conditioning a \textit{different} random variable $Z^{(z)}$ on a different value.
The individual encodings do not provide a unified encoding using a fixed
observable statistic on the underlying probability space.

\begin{proposition}
Consider the parameterized families of conditioning events $E(z)$ in
\cref{eq:pseudogynous}.
There does not exist a random variable $Z$ and mapping $f$ such that
$E(z) \equiv \set{Z = f(z)}$,
for every $z \in \R^2$
\end{proposition}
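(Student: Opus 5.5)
The argument is by contradiction, using the overlap structure of the family $E(z)$. If $E(z) = \set{Z = f(z)}$ for all $z$, then two events $E(z)$ and $E(z')$ must be either equal or disjoint, according to whether $f(z) = f(z')$. So it suffices to find $z \ne z'$ with $E(z) \ne E(z')$ and $E(z) \cap E(z') \ne \varnothing$. I read $\equiv$ as equality of events in $\cO$. If it only means equality up to $\baseO$-null sets, the argument below still goes through: the relevant intersections are themselves null, and we will need a different measure-theoretic handle, discussed at the end.

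Take $z = (z_1, z_2)$ and $z' = (z_1, z_2')$ with $z_2 \ne z_2'$. Then
\begin{align*}
E(z) \cap E(z') = \set{B = 0, H' = z_1},
\end{align*}
because the events $\set{B=1, W'=z_2}$ and $\set{B=1, W'=z_2'}$ are disjoint. On the other hand, $E(z) \setminus E(z') = \set{B=1, W'=z_2}$. Assume the sample space is rich enough, as in the program, that $(H, W, B, H', W')$ can take any value in $\R \times \R \times \set{0,1} \times \R \times \R$ (e.g., the canonical space). Then both sets are nonempty, so $E(z)$ and $E(z')$ are neither equal nor disjoint. This contradicts the partition property: if $f(z) = f(z')$ the events coincide, and otherwise they are disjoint. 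Hence no such $Z, f$ exist.

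The main obstacle is the ``up to null sets'' reading, since $E(z)$, its intersections and its differences are all $\baseO$-null, so a.s.\ equality is vacuous pointwise. To handle this, I would require the identification to hold as a measurable family. That is, $Z$ is measurable, and $\set{Z = f(z)}$ agrees with $E(z)$ for every $z$ as subsets of $\cO$. I would then run the pointwise argument above, noting explicitly that the construction needs a sample point $\omega$ with $B(\omega) = 0$ and $H'(\omega) = z_1$, which exists in the canonical space. I would remark that this is exactly why the encodings \cref{eq:bare-null-rv-1,eq:bare-null-rv-2} must depend on $z$.
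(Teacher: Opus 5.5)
Your proof is correct and is essentially the paper's argument: the paper takes $z=(2,90)$ and $z'=(1,90)$, so the nonempty overlap is the $B=1$ branch $\{B=1,W'=90\}$, whereas you vary the second coordinate so the overlap is the $B=0$ branch; both then use the dichotomy that level sets $\{Z=f(z)\}$ are either equal or disjoint, and you make explicit the nonemptiness assumption that the paper leaves implicit. One correction to your aside: under an ``equal up to $\mathbb{P}$-null sets'' reading the argument does not go through, and in fact the statement would be false, since every $E(z)$ is null and any $Z$ with null level sets would satisfy it. So the pointwise reading you end up adopting is the only one under which the proposition holds, and it is the one the paper uses.
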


\begin{proof}
For $z\defeq(2,90)$ and $z'\defeq(1,90)$, the events $E(z) \cap E(z') = \set{B=1,W'=90} \ne \varnothing$
but $E(z) \ne E(z')$.
If $f(z) \ne f(z')$ then $\set{Z=f(z)} \cap \set{Z=f(z')} = \varnothing$, contradiction.
If $f(z) = f(z')$, then $\set{Z=f(z)} = \set{Z=f(z')}$, so $E(z) = E(z')$, contradiction.
\end{proof}

As a result, we cannot state any consistency property for our statistical
inferences as we change the program constants $(z_1, z_2) = (2, 90)$ to
other values $(z'_1, z'_2)$ that represent
different noisy ``observations'' of the height and weight.
Given only a bare null event $E$, we cannot
formulate a Bayes rule such that
$\pXZ(x\given z) \propto \pX(x)\pZX(z \given x)$
for every setting $z$ of program constants
in the $\observe$ statements.
As \cref{lst:post-control-flow} also conditions on a bare null set,
the possibility of statistical inference in this program is similarly ruled out.
For \cref{lst:post-mixed-type}, this program is admissible, provided that
the $\observe$ measures the \textit{same} statistic $Z$ in both branches.
In this case, using the generative semantics of $\observe$, we can write
a semantically equivalent program that randomizes over the distribution while
holding the observation fixed, as in a standard statistical mixture model:
\begin{codeverb}
x = sample bernoulli(0.5)
if x == 0
  D = uniform(0,1)
else
  D = bernoulli(0.5)
observe D 1
\end{codeverb}

\subsubsection{Ambiguity of Null Set Conditioning}

The argument for neighborhood limits in \citet{jaynes2003} is an attempt
to resolve Borel's paradox: to make the question of
conditioning on a bare null set $E$ meaningful, the user also must specify
additional mathematical structure, such as a limiting process $(E_n)_{n \ge 1}$,
a sigma-algebra of events $\sigE$, or a random variable encoding $Z$.
But this resolution assumes that conditioning on a bare null set is a
statistically sensible problem to begin with.

The representation of $E = \set{B=0, H'=2} \cup \set{B=1,W'=90}$ using
$Z' \defeq 1_E$ underscores the issue.
Since $\Prob{E}=0$, $Z'$ is almost surely equal to $0$.
The regular conditional distribution given
$Z'=0$ is therefore the prior.
But the event of interest is the probability-zero observation $Z'=1$,
for which the defining equation for conditional probability
imposes no constraint whatsoever:
\begin{equation}
  \Prob{X\in A}
  =
  \Prob{X\in A \mid 1_E=1}\Prob{1_E=1}
  +
  \Prob{X\in A \mid 1_E=0}\Prob{1_E=0}.
\end{equation}

Version issues are not pathologies of regular conditional distributions,
because their semantic property is stated by integration against the
marginal law of the observation.
Neighborhood limits, on the other hand, do not define conditional
distributions (\cref{prop:limit-not-probability}) and exhibit more
significant difficulties.
As we showed in \cref{eq:conditional-prob-given-sequence}, they use an
unconstrained value $c_\infty$ at the null fiber as the semantic property
of the entire program.
This choice is external structure that is undetermined by the generative
process, and different equally admissible neighborhoods select different
posteriors.
To fix the ``unit anomaly'' problem,
the method of \citeauthor{jacobs2021}
requires arbitrary constants $w_1, w_2 > 0$
that specify the rate at which the neighborhoods shrink in each program path:
\begin{align}
E_\epsilon \defeq \set{B = 0, H' \in [2-w_1\cdot\epsilon, 2+w_1\cdot\epsilon]} \cup \set{B=1, W' \in [90 - w_2 \cdot\epsilon, 90 + w_2\cdot\epsilon]}.
\end{align}
Let $m_h$ and $m_w$ be the marginal likelihoods
of the noisy observations $\set{H'=2}$ and $\set{W'=90}$, respectively.
In the limit $\epsilon \to 0$, the posterior branch probabilities
depend not only on $m_h$ and $m_w$ (as expected),
but also on the arbitrary constants $w_1$ and $w_2$ selected by the user:
\begin{align}
\Prob{B = 0 \mid E}
&= \lim_{\epsilon \to 0} \Prob{B = 0 \mid E_\epsilon}
= \lim_{\epsilon \to 0} \frac{1/2 \cdot 2 w_1 \epsilon m_h}{1/2\cdot 2 w_1 \epsilon m_h + 1/2 \cdot 2 w_2 \epsilon m_w}
= \frac{w_1 \cdot m_h}{w_1 \cdot m_h + w_2 \cdot m_w}.
\label{eq:neighborhood-branch-arbitrary}
\end{align}

\section{Disintegration Theory}
\label{sec:disintegration}

Our approach to conditioning is based on disintegration, which is
introduced informally in \cref{sec:introduction-disintegration}.
More formally, a probabilistic program operates over a measurable space
$(\X, \SigX)$ of traces with probability measure $\muX$ and $X \sim \muX$ a
random trace.
The observation process is modeled using a measure space
$(\Z, \SigZ)$ and a measurable observation map $\obs: \X \to \Z$.
We define $Z \defeq \obs(X)$ to be the observable random variable,
with marginal distribution $\muZ \defeq \muX \circ \obsI$.
For example, in a standard parametric statistical inference problem,
$X=(\theta,Y)$ is a pair of random variables where $\theta$ is a latent
parameter and $Y$ observable data, with $Z = t(X) = Y$ a projection of the second
coordinate.
Given a specific observation $z \in \Z$, the inference problem
is to infer the conditional distribution $X \mid Z=z$.

A \textit{disintegration} of $X$ through $t$ is a probability kernel
$\muXZ(\cdot \given \cdot): \SigX \times \Z \to [0,1]$ such that
\begin{align}
\muX(A) = \int_{\Z} \muXZ(A\given z) \muZ(\d z)
&& (A \in \SigX),
\label{eq:disintegrate-distX}
\end{align}
where $\muXZ(\cdot \given z)$ is supported on the fiber $\obsI(z)$
for $\muZ$-a.e.~$z$, i.e.,
$\muXZ(\obsI(z) \given z) = 1$ (cf.~\cref{fig:disintegration}).


\begin{figure}[!t]
\begin{subfigure}[t]{.5\linewidth}
\captionsetup{font=bf}
\caption*{Pushforward}
\begin{tikzpicture}
\node[name=tX,
  draw, shape=ellipse, minimum height=3cm, minimum width=2cm,
  label={above:\begin{tabular}{c}Trace Space\\ $(\X, \SigX, \muX)$ \end{tabular}},
  ]{};
\node[name=tZ,
  right=2 of tX,
  draw, shape=ellipse, minimum height=3cm, minimum width=2cm,
  label={above:\begin{tabular}{c}Observation Space\\ $(\Z, \SigZ, \muZ)$ \end{tabular}},
  ]{};
\draw[-latex,thick] ([yshift=.5cm]tX.center) to[bend left] node[pos=0.5,above]{$\obs$} ([yshift=.5cm]tZ.center);

\begin{scope}[shift={([yshift=-0.7cm]tX.center)}]
  \draw[thick, fill=gray!10]
    plot[smooth cycle, tension=.8] coordinates {
      (-.45,.25) (-.15,.55) (.35,.35) (.45,-.10) (.15,-.45) (-.35,-.35) (-.55,-.05)
    };
  \node at (0,.75) {$A$};
  \node[name=tA] at (.45,-0.10) {};
\end{scope}

\begin{scope}[shift={([yshift=-0.7cm]tZ.center)}]
  \draw[thick, fill=gray!10]
    plot[smooth cycle, tension=.8] coordinates {
      (-.35,.40) (.10,.55) (.50,.25) (.35,-.25) (.00,-.50) (-.45,-.30) (-.55,.10)
    };
  \node at (0,.75) {$B$};
  \node[name=tB] at (-.45,-.30) {};
\end{scope}

\draw[-latex,thick,dashed] (tB.center) to[bend left]
  node[pos=0.5,below,yshift=-.]{$\muZ(B) = \muX(A)$}
  node[pos=0.5,above]{$\obsI$}
  (tA.center);
\end{tikzpicture}
\end{subfigure}%
\begin{subfigure}[t]{.5\linewidth}
\captionsetup{font=bf}
\caption*{Disintegration}
\begin{tikzpicture}
\node[name=tX,
  draw, shape=ellipse, minimum height=3cm, minimum width=2cm,
  label={above:\begin{tabular}{c}Trace Space\\ $(\X, \SigX, \muX)$ \end{tabular}},
  ]{};
\node[name=tZ,
  right=2 of tX,
  draw, shape=ellipse, minimum height=3cm, minimum width=2cm,
  label={above:\begin{tabular}{c}Observation Space\\ $(\Z, \SigZ, \muZ)$ \end{tabular}},
  ]{};
\draw[-latex,thick] ([yshift=.5cm]tX.center) to[bend left] node[pos=0.5,above]{$\obs$} ([yshift=.5cm]tZ.center);

\begin{scope}[shift={([yshift=-0.7cm]tX.center)}]
  \draw[thick, fill=gray!10]
    plot[smooth cycle, tension=.8] coordinates {
      (-.45,.25) (-.15,.55) (.35,.35) (.45,-.10) (.15,-.45) (-.35,-.35) (-.55,-.05)
    };
  \node at (0,.75) {$A$};
  \node[name=tA] at (.45,-0.10) {};

  \begin{scope}
  \clip
    plot[smooth cycle, tension=.8] coordinates {
      (-.45,.25) (-.15,.55) (.35,.35) (.45,-.10) (.15,-.45) (-.35,-.35) (-.55,-.05)
    };

    \foreach \yy in {-.5, -.4, -.3, -.2, -.1, 0, .1, .2, .3, .4, .5} {
      \draw[gray] (-1,\yy) -- (1,\yy+0.30);
    }
  \end{scope}

  \node[name=in1] at (.3,.4) {};
  \node[name=in2] at (.35,.3) {};
  \node[name=in3] at (.42,.21) {};

\end{scope}

\begin{scope}[shift={([yshift=-0.7cm]tZ.center)}]
  \draw[thick, fill=gray!10]
    plot[smooth cycle, tension=.8] coordinates {
      (-.35,.40) (.10,.55) (.50,.25) (.35,-.25) (.00,-.50) (-.45,-.30) (-.55,.10)
    };
  \node at (0,.75) {$B$};
  \node[name=tB] at (-.45,-.30) {};

  \node[name=out1, fill=gray, circle, inner sep=0pt, minimum size=3pt] at (0,0) {};
  \node[name=out2, fill=gray, circle, inner sep=0pt, minimum size=3pt] at (0.2,0.2) {};
  \node[name=out3, fill=gray, circle, inner sep=0pt, minimum size=3pt] at (0.2,-0.2) {};
  \node[name=out4, fill=gray, circle, inner sep=0pt, minimum size=3pt] at (-0.3,-0.1) {};
  \node[name=out5, fill=gray, circle, inner sep=0pt, minimum size=3pt] at (-0.3,0.2) {};
\end{scope}

\draw[-latex,gray, densely dashed] (out5.center) -- (in1.center);
\draw[-latex,gray, densely dashed] (out1.center) -- (in2.center);
\draw[-latex,gray, densely dashed] (out4.center) -- (in3.center);

\draw[latex-,thick,dashed] (tB.center) to[bend left]
  node[pos=0.5,below]{$\muX(A) = \int_{B}\muXZ(A\given z)\muZ(\d z)$}
  node[pos=0.5,above]{$\obs$}
  (tA.center);
\end{tikzpicture}
\end{subfigure}
\caption{Illustration of pushforward and disintegration of a probability measure
$\muX$ through a map $\obs: \X \to \Z$.
The pushforward operation induces a measure $\muZ$ on $\Z$, where
the measure of a set $B \in \SigZ$ is the measure of its preimage
$A \defeq \obsI(B) \in \SigX$.
Informally, disintegration decomposes the measure of $A$ as the
weighted sum of measures of slices $A \cap \obsI(z) \in \SigX$
that partition $A$,
where $z \in \obs(A)$ is weighted by $\muZ(\d z)$.
These operations give a precise semantics for observations and conditional
distributions in probabilistic programs.}
\label{fig:disintegration}
\end{figure}

\begin{proposition}[store=prop:rcd-disintegration,restate-keys={note={Restated}}]
\label{prop:rcd-disintegration}
A disintegration $\muXZ$
induces a regular conditional distribution
on the probability space $(\X, \SigX, \muX)$
given the sigma algebra $\mathcal{E} \defeq \sigma(t)$,
by setting
$\muX(A \given \mathcal{E})(x) \defeq \muXZ(A \given t(x)).$
\end{proposition}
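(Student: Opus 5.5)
Let $P(A,x) \defeq \muXZ(A \given \obs(x))$. I will check the two defining conditions of a regular conditional probability on $\SigX$ given $\sigE = \sigma(\obs) = \set{\obsI(B) : B \in \SigZ}$. The measure property is immediate. For every $x$, the set function $A \mapsto \muXZ(A \given \obs(x))$ is a probability measure on $(\X,\SigX)$, because $\muXZ$ is a probability kernel. So the second bullet holds for every $x$, not only almost every $x$.

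For the first bullet, fix $A \in \SigX$. Measurability comes first. The map $z \mapsto \muXZ(A\given z)$ is $\SigZ$-measurable because $\muXZ$ is a kernel. Composing with $\obs$ makes $x \mapsto \muXZ(A \given \obs(x))$ measurable with respect to $\sigma(\obs)$, since preimages under the composite are of the form $\obsI(B')$ with $B'$ a measurable set of $z$'s.

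Next I need, for each $E = \obsI(B) \in \sigE$, the identity
\begin{align*}
\muX(A \cap \obsI(B)) = \int_{\obsI(B)} \muXZ(A \given \obs(x))\, \muX(\d x).
\end{align*}
For the right-hand side, I write the integrand as $\ind[B]{\obs(x)}\muXZ(A\given \obs(x))$ and apply the change-of-variables formula for the pushforward $\muZ = \muX\circ\obsI$. This gives $\int_B \muXZ(A \given z)\,\muZ(\d z)$. For the left-hand side, I apply the disintegration identity \cref{eq:disintegrate-distX} to the set $A \cap \obsI(B)$, which gives $\int_\Z \muXZ(A \cap \obsI(B)\given z)\,\muZ(\d z)$.

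The main step is to show, for $\muZ$-a.e. $z$,
\begin{align*}
\muXZ(A\cap\obsI(B)\given z) = \ind[B]{z}\,\muXZ(A\given z).
\end{align*}
This uses the fiber condition $\muXZ(\obsI(z)\given z)=1$, which holds on a $\muZ$-full set $N^c$. Take $z\in N^c$. If $z\in B$, then $\obsI(z)\subset\obsI(B)$, so the set $\obsI(B)$ has full $\muXZ(\cdot\given z)$-measure, and intersecting $A$ with it does not change the measure. If $z\notin B$, then $\obsI(B)$ is disjoint from $\obsI(z)$, so it has measure zero and the left side vanishes. Integrating over $z$ then gives $\int_B\muXZ(A\given z)\,\muZ(\d z)$, which matches the right-hand side.

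The only delicate point is measurability of the fiber $\obsI(z)$, which the fiber condition implicitly presupposes (for instance, singletons of $\Z$ being measurable). This is not actually needed. The disintegration only requires $\muXZ(\obsI(z)\given z) = 1$ in the sense of outer measure, or equivalently that every measurable set containing $\obsI(z)$ has full measure. The case split above only uses $\obsI(B)\supset\obsI(z)$ or $\obsI(\Z\setminus B)\supset\obsI(z)$, and both of these are measurable sets. This completes the verification of both bullets.
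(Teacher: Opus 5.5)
Your proof is correct and takes essentially the same route as the paper. Both arguments rewrite the integral over $t^{-1}(B)$ as $\int_B \muXZ(A \given z)\,\muZ(\d z)$ via the pushforward, apply the disintegration identity to $A \cap t^{-1}(B)$, and use the fiber-support case split $\muXZ(A \cap t^{-1}(B) \given z) = \ind[B]{z}\,\muXZ(A \given z)$. Beyond the paper, you also verify $\sigma(t)$-measurability, the probability-measure property, and the $\muZ$-a.e.\ qualification of the fiber condition, which the paper leaves implicit.
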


(\Cref{appx:lemmas,appx:proofs} contain all the proofs from here on.)
Disintegrations exist under a variety of
settings~\citep[\S10.4]{bogachev2007}.
For our purposes, $(\X, \SigX)$ and $(\Z, \SigZ)$ are standard
Borel spaces, which guarantees a
disintegration~\citep[Proposition 10.4.12]{bogachev2007}; \citep[Theorem 1]{chang1997}.
Following the usual setting in statistics, where the observable data $Z \sim \muZ$
we condition is assumed to be drawn from the specified model,
version differences are immaterial
because disintegrations are canonical up to null sets
(i.e., any two versions agree $\muZ$-a.s.).

\subsection{Generalized Bayes Rule}

In our setup, we cannot represent the disintegration $\muXZ$ using the
elementary Bayes rule~\cref{eq:intro-bayes-rule-pdf}, because $Z = \obs(X)$
has a degenerate distribution $\muZX(A \given x) = \dirac{t(x)}[A]$.
Therefore, the density of the ``likelihood'' term $\pZX(z\given x)$
is not yet well specified.
We will construct an explicit representation of the disintegration $\muXZ$
as a probability density $\pXZ$, by way of a more general Bayes rule.

Suppose we have a $\sigma$-finite
base measure $\lamZ$ on the observation space $(\Z, \SigZ)$ that
can be used to disintegrate the base measure $\lamX$ on $(\X,\SigX)$, that is,
there exists a transition kernel $\lamXZ$ such that
\begin{align}
\lamX(A) = \int_{\Z} \lamXZ(A\given z ) \lamZ(\d z) && (A \in \SigX),
\label{eq:disintegrate-baseX}
\end{align}
where $\lamXZ\left(\X \setminus \obsI(z) \given z\right) = 0$
assigns zero measure outside the fiber $\set{x \mid \obs(x) = z}$.
This assumption is sufficient to obtain a generalized Bayes rule representation
of the disintegration.

\begin{theorem}[store=thm:bayes,restate-keys={note={Restated}}]
\label{thm:bayes}
Let $(\X,\SigX,\muX)$ be a probability space with $\muX \ll \lamX$ for a sigma-finite
base measure $\lamX$.
Let $t: (\X,\SigX) \to (\Z,\SigZ)$ be measurable, $\muZ \defeq \pushf{t}{\muX}$,
and $\lamZ$ a sigma-finite measure on $(\Z,\SigZ)$.
If $(\lamZ, \lamXZ)$ is a disintegration of $\lamX$ as in \cref{eq:disintegrate-baseX},
then
\begin{enumerate}[wide=2\parindent, leftmargin=*, label={\zcref*[noname]{thm:bayes}.\arabic*}]

\item \label{thm:bayes-sigma-finite}
  The transition kernel $\lamXZ$ is uniformly $\sigma$-finite, for $\lamZ$-a.e.~$z$.

\item \label{thm:bayes-marginal-z}
  The marginal distribution of $Z$ satisfies $\muZ(\d z) \ll \lamZ(\d z)$, with density
  \begin{equation}
  \pZ(z) \defeq \frac{\d \muZ}{\d \lamZ}(z) = \int_{t^{-1}(z)} \pX(x)\lamXZ(\d x\given z) \in (0, \infty) \mbox{ for } \muZ\mbox{-a.e.}~z.
  \label{eq:prop-bayes-z}
  \end{equation}

\item \label{thm:bayes-posterior-x}
  The conditional distribution of $X$ given $Z$ satisfies $\muXZ(\d x \given z) \ll \lamXZ(\d x \given z)$, with density
  \begin{equation}
  \pXZ(x\given z) \defeq \frac{\d \muXZ(\cdot\given z)}{\d \lamXZ(\cdot \given z)}(x) = \frac{\pX(x)}{\pZ(z)}.
  \label{eq:prop-bayes-x}
  \end{equation}
\end{enumerate}
\end{theorem}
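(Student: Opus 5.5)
The plan is to work directly from the disintegration identity \cref{eq:disintegrate-baseX}, extend it from indicators to nonnegative measurable functions, and then read off each of the three claims by a change of variables. First, write $\pX \defeq \d\muX/\d\lamX$, which exists by Radon--Nikodym since $\lamX$ is $\sigma$-finite and $\muX \ll \lamX$. A standard monotone-class argument (indicators, simple functions, monotone convergence) upgrades \cref{eq:disintegrate-baseX} to
\begin{align*}
\int_{\X} f(x)\,\lamX(\d x) = \int_{\Z}\int_{\X} f(x)\,\lamXZ(\d x\given z)\,\lamZ(\d z)
\end{align*}
for every measurable $f \ge 0$. Here I take as part of the hypothesis that $z \mapsto \lamXZ(A\given z)$ is measurable, i.e.\ that $\lamXZ$ is a transition kernel.

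For \cref{thm:bayes-sigma-finite}, choose a partition $\X = \bigcup_n A_n$ with $\lamX(A_n) < \infty$. The displayed identity gives $\int_{\Z} \lamXZ(A_n\given z)\,\lamZ(\d z) = \lamX(A_n) < \infty$. Hence $\lamXZ(A_n \given z) < \infty$ off a $\lamZ$-null set $N_n$. The countable union $N \defeq \bigcup_n N_n$ is again $\lamZ$-null, and off $N$ the single partition $(A_n)$ witnesses $\sigma$-finiteness of $\lamXZ(\cdot\given z)$ simultaneously for all $z$. This is exactly uniform $\sigma$-finiteness.

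For \cref{thm:bayes-marginal-z}, take $B \in \SigZ$ and apply the extended identity to $f = \pX \cdot \ind[\obsI(B)]{}$. On the fiber $\obsI(z)$ we have $\ind[\obsI(B)]{x} = \ind[B]{z}$, and $\lamXZ(\cdot\given z)$ is concentrated on that fiber. This yields
\begin{align*}
\muZ(B) = \muX(\obsI(B)) = \int_B \int_{\obsI(z)} \pX(x)\,\lamXZ(\d x\given z)\,\lamZ(\d z).
\end{align*}
The inner integral is measurable in $z$ by Tonelli for kernels, which is where the previous step is used. So $\muZ \ll \lamZ$ and $\pZ$ has the stated form. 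Setting $B = \set{\pZ = 0}$ gives $\muZ(B) = 0$. Setting $B = \set{\pZ = \infty}$ forces $\lamZ(B) = 0$, since $\muZ(\Z) = 1$. Therefore $\pZ \in (0,\infty)$ holds $\muZ$-a.e.

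For \cref{thm:bayes-posterior-x}, define the candidate kernel
\begin{align*}
\nu(A\given z) \defeq \pZ(z)^{-1}\int_A \pX(x)\,\lamXZ(\d x\given z)
\end{align*}
on $\set{0<\pZ<\infty}$, and set it to an arbitrary fixed probability measure elsewhere. For $z$ in the good set, $\nu(\cdot\given z)$ is a probability measure, is supported on $\obsI(z)$, and is measurable in $z$. To check \cref{eq:disintegrate-distX}, compute $\int_{\Z} \nu(A\given z)\,\muZ(\d z) = \int_{\Z} \nu(A\given z)\,\pZ(z)\,\lamZ(\d z)$. The $\pZ$ factors cancel on the good set, and the bad set is $\muZ$-null. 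The remaining integral equals $\int_A \pX\,\d\lamX = \muX(A)$ by the extended identity.

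The step I expect to need care is the last one: identifying $\nu$ with $\muXZ$. Since the theorem speaks of "the" conditional distribution, I would invoke the a.s.\ uniqueness of disintegrations on standard Borel spaces mentioned after \cref{prop:rcd-disintegration}. This gives $\muXZ(\cdot\given z) = \nu(\cdot\given z)$ for $\muZ$-a.e.\ $z$, which delivers the density $\pX(x)/\pZ(z)$. Alternatively, one simply takes $\nu$ as the version, and the theorem's claims then hold for that version. A secondary bookkeeping point is carrying the $\lamZ$-null exceptional sets through, which are $\muZ$-null because $\muZ \ll \lamZ$.
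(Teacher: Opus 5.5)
Your proposal is correct and follows the paper's proof essentially step for step. It uses the same integrability argument for uniform $\sigma$-finiteness, the same fiber-support computation that yields $\pZ$ and shows $\set{\pZ=0}$ and $\set{\pZ=\infty}$ are $\muZ$-null, and the same check that $\pX/\pZ$ integrated against $\lamXZ$ and then $\muZ$ reproduces $\muX$; your version is somewhat more careful than the paper's (explicit extension to nonnegative integrands, explicit handling of the set where $\pZ\in\set{0,\infty}$, and the a.s.\ uniqueness needed to identify $\nu$ with $\muXZ$), though the $\sigma$-finiteness from the first part is not what makes $z\mapsto\int\pX\,\d\lamXZ(\cdot\given z)$ measurable, since any kernel gives that, and its real role is to make the Radon--Nikodym derivative in the last part well defined.
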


It is essential to select a sigma-finite base measure $\lamZ$ that
satisfies the hypotheses of \cref{thm:bayes}.
For example, the obvious choice $\lamZ = \pushf{\obs}{\lamX}$ as the
pushforward of the trace base measure $\lamX$ through $\obs$ is invalid,
as it may not be sigma-finite.
For example, put $(\X, \SigX, \lamX) = (\R^2, \Borel(\R^2), \Leb^2)$
with $t(x_1,x_2) \defeq x_2$.
Then $\lamZ(B) = \pushf{\obs}{\lamX}(B) = \Leb^2(\R \times B) = \infty$ for
every $B$ with $\Leb(B) > 0$.
Instead, a valid choice in this example is setting $\lamZ = \Leb$ with
$\lamXZ(\d(x_1,x_2)\given z) = \Leb(\d x_1) \otimes \dirac{z}[\d x_2]$.

\section{A Probabilistic Language with Observation Maps}
\label{sec:lang}

\Cref{lst:prob-while-obs} shows a probabilistic language that facilitates
conditioning by disintegration.
Inline $\observe$ statements have been eliminated and a $\return$ statement
followed by a single $\observe$ statement is introduced.
The idea here is that $\return~e$ represents the observation map
$\obs: \X \to \Z$ and $\observe~[r^*]$ fixes the observation $\set{\obs(X) = z}$,
where $z \equiv [r^*] = [r_1, \dots, r_n]$.
The program semantics will not be tied to any specific
values of $\observe~z$, but rather the disintegration property that
establishes correctness for almost-every random data
vector $Z$ in the prior support.

\bgroup
\setlength{\intextsep}{0pt}
\begin{listing}[h]
\captionsetup{belowskip=6pt}
\caption{Language for statistical probabilistic programs with return value.}
\label{lst:prob-while-obs}
\begin{nicebox}[size=minimal,before skip=5pt, after skip=5pt, boxrule=1pt,bottom=0pt,top=0pt]
\setlength{\abovedisplayskip}{0pt}
\begin{align*}
x &\in \mathsf{Vars}\quad
r \in \mathsf{Real}\quad
o \in \mathsf{Op}\quad
D \in \mathsf{Dist}
\\
e &\in \mathsf{Expr} \Coloneq x \mid r \mid [e^*] \mid o(e) \mid D(e) \mid \sample~e  \\
s &\in \mathsf{Stmt} \Coloneq
  x \boldsymbol{=} e
  \mid \mathbf{skip}
  \mid s_1~\textbf{;}~s_2
  \mid \textbf{if}~e~\textbf{then}~s_1~\textbf{else}~s_2
  \mid \textbf{while}~e~\textbf{do}~s
\\
p &\in \mathsf{Prog}~\Coloneq s \bm{\rightsquigarrow} \return~e~\textbf{;}~\observe~[r^*]
\end{align*}
\end{nicebox}
\end{listing}

\egroup

We formulate a disintegration semantics for this language in stages.
In \cref{sec:slp}, we first consider a fixed program path $\pi$ and
identify a local disintegration
\begin{align}
\lamX[\pi](A) = \int_{\Z[\pi]} \lamXZ[\pi](A \given z) \lamZ[\pi](\d z) && (A \in \SigX[\pi]).
\label{eq:disintegration-local}
\end{align}
This base measure will furnish a Bayes rule
$\pXZ[\pi](x \mid z)$ according to \cref{thm:bayes} for an observation
map $\obs[\pi]: \X[\pi] \to \Z[\pi]$ corresponding to the $\return$ statement
associated with that path.

In \cref{sec:mlp}, we construct a global disintegration
for a probabilistic program with multiple paths $\pi \in \Pi$.
This step requires combining the family of local objects from \cref{sec:slp}
into a global disintegration:
\begin{equation}
\setl*{
  (\X[\pi], \SigX[\pi], \lamX[\pi], \muX[\pi]),
  (\Z[\pi], \SigZ[\pi], \lamZ[\pi], \muZ[\pi]),
  (\obs[\pi])
  }_{\pi \in \Pi}
\rightsquigarrow
\setl*{(\X[\Pi], \SigX[\Pi], \lamX[\Pi], \muX[\Pi]),
  (\Z[\Pi], \SigZ[\Pi], \lamZ[\Pi], \muZ[\Pi]),
  (\obs[\Pi])}
\label{eq:meas-space-combine}
\end{equation}
such that
\begin{align}
\lamX[\Pi](A) = \int_{\Z[\Pi]} \lamXZ[\Pi](A \given z) \lamZ[\Pi](\d z) && (A \in \SigX).
\label{eq:disintegration-global}
\end{align}
A key challenge is identifying measure-theoretic structure
that satisfies the hypotheses of \cref{thm:bayes},
to obtain a Bayes rule $\pXZ[\Pi](x \given z)$
for the overall program with observation map $\obs: \X \to \Z$.

\begin{notation}
The symbol $\Haus{d}[m]$ denotes the $d$-dimensional Hausdorff measure on
$\left(\R[m], \Borel[\R[m]]\right)$.
For any measure $\mu$ and event $A$,
we write $\restr{\mu}{A} \defeq E \mapsto \mu(A \cap E)$ to denote the restriction
of $\mu$ to $A$.
If regarded as a measure on
$(A,\restr{\Sigma}{A})$, it is given by
$\restr{\mu}{A}\!(B)\defeq \mu(B)$,
where $\restr{\Sigma}{A}\defeq\set{E\cap A\mid E\in\Sigma}$.
\end{notation}

\subsection{Primitive Probability Distributions}
\label{sec:lang-primitives}

We first formalize the primitive probability distributions in our
calculus.
We fix a countable set $\DistRepo$ of primitive distribution names.
For each $D \in \DistRepo$ there is a distribution family
\begin{equation}
\left(\kindD, \CarrierD, \SigD, \lamD, \pD, \ThetaD, \SigD^\Theta \right),
\label{eq:dist-family}
\end{equation}
where $\kindD \in \set{\kDD, \kDC}$
indicates whether the distribution is discrete or continuous.
\begin{itemize}[nosep]

\item In the discrete case $\kindD = \kDD$,
the carrier $\CarrierD \subset \N$ is a countable set equipped with the
powerset sigma-algebra $\SigD \defeq \Powset{\CarrierD}$ and
counting base measure $\lamD \defeq \Counting[\CarrierD]$.

\item In the continuous case $\kindD = \kDC$,
the carrier $\CarrierD = \R[\rdimD]$ is Euclidean space equipped with the
Borel sigma-algebra $\SigD \defeq \Borel[\R[\rdimD]]$ and
Lebesgue base measure $\lamD \defeq \Leb[\rdimD]$.
\end{itemize}
In both cases, $(\ThetaD, \SigD^\Theta)$ is a measurable parameter space, and the
probability density function $\pD: \CarrierD \times \ThetaD \to [0, \infty)$
is jointly measurable,
such that $\int_{\CarrierD} \pD(y \given \theta) \lamD(\d y) = 1$
for $\theta \in \ThetaD$.
The induced primitive probability distribution is therefore
$\muD(B \given \theta) \defeq \int_{B} \pD(y \given \theta) \lamD(\d y)$
for $B \in \SigD$.

Note that our semantics omit continuous primitive distributions that are
singular with respect to the Lebesgue measure.
Samples from such distributions can be implemented in the language by
sampling from a lower-dimensional coordinate parametrization of the
support.
For example, the uniform distribution over the circle in $\R[2]$ centered at $(0,0)$
with radius $\theta$ is given by
\begin{center}
\lstinline[language=verb]|u = sample uniform(0, 2*pi); x = [theta * cos(u), theta * sin(u)]|
\end{center}
rather than a standalone primitive
\lstinline[language=verb]|x = sample uniform_circle(theta)|.
This assumption is needed to retain a practical language semantics, refer
to \cref{appx:primitives} for a detailed discussion.


\section{Disintegration within a Single Probabilistic Program Path}
\label{sec:slp}

Equipped with the primitives, we construct the disintegration
\cref{eq:disintegration-local} for a single program path $\pi$.

\subsection{Measure Structure}

\paragraph{Path Distributions}
A finite program path $\pi$ contains $\kpath$ dynamic occurrences of $\sample$
statements, which determines an ordered sequence of primitive distribution names
$D_{\pi,1}, \dots, D_{\pi,\kpath} \in \DistRepo$.
For compactness, we omit the subscript $D$ so that
$\kindD[\pi,i] \defeq \kindD[D_{\pi,i}]$ and similarly for remaining quantities
in \cref{eq:dist-family}, where $i=1,\dots,\kpath$.
The path $\pi$ encounters measurable maps
$\theta_{\pi,i}: \CarrierD[\pi,1] \times \dots \times \CarrierD[\pi,i-1] \to \ThetaD[\pi,i]$
that determine the distributional parameters of each $D_i$,
so that for any trace prefix $(y_1, \dots, y_{i-1})$ the concrete distribution
used by the $i$th encountered $\sample$ statement is
$\muD[\pi,i](\cdot \given \theta_{\pi,i}(y_{<i}))$, whose density
with respect to $\lamD[\pi,i]$ is denoted
$\pD[\pi,i](y_i \given y_{<i}) \defeq \pD[D_{\pi,i}](y_i \given \theta_{\pi,i}(y_{<i}))$.

\paragraph{Unconstrained Trace Space}

We first model the traces of $\pi$ ignoring any guard conditions.
The ambient (unconstrained) trace space is given by the product
$(\X[\overline{\pi}], \SigX[\overline{\pi}], \lamX[\overline{\pi}]) \defeq
\otimes_{i=1}^{\kpath}\left( \CarrierD[\pi,i], \SigD[\pi,i], \lamD[\pi,i]  \right)$.
We partition the sample sites into discrete and continuous indexes
\begin{align}
\Idisc \defeq \set*{1 \le i \le \kpath \mid \kindD[\pi,i] = \kDD},
&&
\Icont \defeq \set*{1 \le i \le \kpath \mid \kindD[\pi,i] = \kDC}.
\end{align}
These indexes identify the discrete and continuous (differentiable) structure
in the ambient trace space.
In particular, letting
\begin{align}
\Apath \defeq \textstyle\prod_{i \in \Idisc} \CarrierD[\pi,i], \quad
\dpath \defeq \sum_{i \in \Icont} \rdimD[\pi,i], \quad
\R[\dpath] \cong \textstyle\prod_{i \in \Icont} \CarrierD[\pi,i],
\end{align}
we identify the congruences
\begin{align}
\X[\overline\pi] \cong \Apath \times \R[\dpath],
\quad
\SigX[\overline\pi] \cong \Powset{\Apath} \otimes \Borel[\R[\dpath]],
\quad
\lamX[\overline\pi] \cong \Counting[\Apath] \otimes \Leb[\dpath].
\label{eq:base-meas-congruences}
\end{align}

With this representation, we let $x \defeq (a,w) \in \X[\overline\pi]$
denote a generic trace for $\pi$,
where $a \in \Apath$ contains variables
from the discrete $\sample$ statements and $w \in \R[\dpath]$ the variables
from the continuous $\sample$ statements.
We let $y^\pi_i(a,w)$ extract the $i$th coordinate
in the trace $(a,w)$ in chronological order, i.e., the sample from $D_{\pi,i}$.
The ambient path density with respect to $\lamD[\overline\pi]$ is thus
\begin{align}
\pX[\overline\pi](a,w) = \prod_{i=1}^{k_\pi} \pD[\pi,i]\left(y^\pi_i(a,w) \given y^\pi_{<i}(a,w)\right).
\end{align}

\subsection{Incorporating Path Guards}
\label{sec:slp-guards}

We now refine the measure space to account for guards that
appear in the $\textbf{if}$ and $\textbf{while}$ statements of the program path $\pi$.
Suppose that $\pi$ encounters $\gpath$ guard occurrences.
For each guard occurrence $\ell \in \set{1,\dots,\gpath}$,
let $\kpath[\pi,\ell] \in \set{0, \dots, \kpath}$
be the number of encountered $\sample$ statements
executed before guard $\ell$ is evaluated.
Each guard expression denotes a measurable Boolean function
$\xi_{\pi,\ell}: \CarrierD[\pi,1] \times \dots \CarrierD[\pi,\kpath[\pi, \ell]] \to \set{0,1}$.
The program path fixes a specific Boolean guard value
$b_{\pi,\ell} \in \set{0,1}$, indicating whether the guard evaluates to false or true.
The resulting measure space for $\pi$ is
\begin{gather}
\X[\pi]             \defeq \bigcap_{\ell=1}^{\gpath}\set*{(a,w) \in \X[\overline\pi] \mid \xi_{\pi,\ell}\left(y^{\pi}_{\le k_{\pi,\ell}}(a,w)\right) = b_{\pi,\ell}}, \quad
\SigX[\pi]          = \restr{\SigX[\overline\pi]}{\X_{\pi}}, \quad 
\lamX[\pi]          \defeq \restr{\lamX[\overline\pi]}{\X_{\pi}},
\label{eq:Thomist}
\\
\pX[\pi](a,w)       \defeq \pX[\overline\pi](a,w) \ind{(a,w) \in \X_{\pi}}, \quad
\muX[\pi](\d (a,w)) = \pX[\pi](a,w) \lamX[\pi](\d (a,w)).
\end{gather}

Using the congruences in \cref{eq:base-meas-congruences}, the base
measure---which is evidently sigma-finite using standard counting arguments---is
congruent to
\begin{align}
\lamX[\pi](B)
= \sum_{a \in \Apath} \Leb[\dpath](\B[\pi,a]),
&&
B \in \SigX[\pi], \quad
\B[\pi,a] \defeq \set*{w \in \R[\dpath] \mid (a,w) \in B}.
\label{eq:base-measure-local}
\end{align}
We have thus identified the prior
$\muX[\pi]$ over traces $(a,w)$ within the program path $\pi$
as subprobability distribution.
We assume without loss of generality that
$\muX[\pi](\X[\pi]) \in (0,1]$ is nonzero, because any path
whose guards carve out a $\muX[\pi]$-measure zero subset of $\X[\pi]$
can be semantically ignored.

\subsection{Observation Map}

From \cref{lst:prob-while-obs}, each path $\pi$ ends with
$\textbf{return}~e$, where $e$ is a well-formed expression in the current
path environment that evaluates to a list $[r^*]$ of reals.
Expression $e$ gives a measurable observation map
$\obs[\pi]: (\X[\pi], \SigX[\pi]) \to (\R[\npath], \Borel[\R[\npath]])$.
Note here that a given program $\pi$ always returns a list of fixed length
$\npath$, while other paths may be used to return lists with a different
length.

\paragraph{Continuous Structure}
As $\obs_\pi(x) = \obs_\pi(a,w)$ operates on discrete and continuous coordinates,
we let
$\obs[\pi,a]: \X[\pi,a] \to \R[\npath]$
denote the observation maps on the continuous part,
with $\obs[\pi,a](w) \defeq \obs[\pi](a,w)$.
Here, the sets
$\X[\pi,a] \defeq \set{ w \in \R[\dpath] \mid (a,w) \in \X[\pi] }$
partition the domain $\X[\pi]$
according to the continuous sections of the trace, holding fixed
the discrete values $a \in \Apath$.
A generic path-compatible trace $x \in \X[\pi]$
is written in terms of discrete and continuous coordinates as
$x = (a,w) \in \amalg_{a' \in \Apath} \X[\pi,a']$.

To enforce geometric regularity on the observation maps,
we assume each $\obs[\pi,a]$ is the restriction to $\X[\pi,a]$ of a
$C^1$ (continuously differentiable) constant-rank map on an open domain:
\begin{align}
\obs*[\pi,a]: \U[\pi,a] \to \R[\npath],
\qquad
\obs[\pi,a] = \restr{\obs*[\pi,a]}{\X[\pi,a]},
\end{align}
where $\U[\pi,a] \subset \R[\dpath]$ is open and
$\X[\pi,a] \subset \U[\pi,a]$.
This assumption is mathematically mild, as many observation maps that are not globally
$C^1$ nor constant rank can be handled by stratifying the domain (although
the syntactic implementation requires careful consideration, cf.~\cref{sec:design}).
We define
\begin{equation}
\Z[\pi,a] \defeq \obs*[\pi,a](\U[\pi,a]), \qquad
\Z[\pi] \defeq \cup_{a \in \Apath} \Z[\pi,a] \subset \R[\npath], \qquad
\SigZ[\pi] \defeq \restr{\Borel[\R[\npath]]}{\Z[\pi]}\!.
\end{equation}

\paragraph{Jacobian Matrix}
Transforming a random variable through an observation map requires a
correction for the volume change between the domain and range.
For $w \in \U[\pi,a]$, let
$\obs*[\pi,a]'(w) : \R[\dpath] \to \R[\npath]$
denote the Fr{\'e}chet derivative of the $C^1$ map
$\obs*[\pi,a] = (\obs[\pi,a]^1, \dots, \obs*[\pi,a]^{\npath})$.
In coordinates, we have
\begin{align}
\obs*[\pi]'(x) \defeq
\obs*[\pi,a]'(w) =
  \begin{bmatrix}
    \displaystyle\frac{\partial\obs*[\pi,a]^1}{\partial x_1}(w) &
    \textstyle\frac{\partial\obs*[\pi,a]^1}{\partial x_2}(w) &
    \dots &
    \textstyle\frac{\partial\obs*[\pi,a]^1}{\partial x_{\dpath}}(w)
    \\
    \textstyle\frac{\partial\obs*[\pi,a]^2}{\partial x_1}(w) &
    \textstyle\frac{\partial\obs*[\pi,a]^2}{\partial x_2}(w) &
    \dots &
    \textstyle\frac{\partial\obs*[\pi,a]^2}{\partial x_{\dpath}}(w)
    \\
    \vdots &
    \vdots &
    \ddots &
    \vdots
    \\
    \textstyle\frac{\partial\obs*[\pi,a]^{\npath}}{\partial x_1}(w) &
    \textstyle\frac{\partial\obs*[\pi,a]^{\npath}}{\partial x_2}(w) &
    \dots &
    \textstyle\frac{\partial\obs*[\pi,a]^{\npath}}{\partial x_{\dpath}}(w)
  \end{bmatrix}
  \in \R[ \npath \times \dpath].
  \label{eq:deriv-matrix}
\end{align}
The constant-rank assumption means there exists nonnegative integer
$\mpath \le \min(\dpath, \npath)$ such that
$\rank(\obs*[\pi,a]'(w)) = \mpath$
for all $a \in \Apath$ and
$w \in \U[\pi,a]$.
The Jacobian of the matrix~\cref{eq:deriv-matrix} is defined as
\begin{align}
\Jac[\pi](x) \defeq \Jac(w) = \prod_{i=1}^{\mpath} \sigma_i\left(\obs*[\pi,a]'(w)\right),
\label{eq:coarea-jacobian}
\end{align}
where $\sigma_1 \ge \sigma_2 \ge \dots \ge \sigma_{\mpath} > 0$
extract the nonzero singular values,
with the convention $\Jac(w) = 1$ if $\mpath=0$.
These partial derivatives are often available by standard automatic differentiation
on the $\textbf{return}$ expression of the probabilistic program.
There are two familiar cases for the Jacobian:
\begin{align}
&\mbox{\underline{full column rank} (immersion)}
&
\mpath &= \dpath \le \npath
&
\Jac[\pi](x) &= \sqrt{\det(\obs*[\pi]'(x)^\top \obs*[\pi]'(x))},
\\
&\mbox{\underline{full row rank} (submersion)}
&
\mpath &= \npath \le \dpath
&
\Jac[\pi](x) &= \sqrt{\det \obs*[\pi]'(x) \obs*[\pi]'(x)^\top}.
\end{align}

\subsection{Local Disintegration}
We now have the ingredients to state the main disintegration theorem
for our setup.

\begin{theorem}[store=thm:disintegrate-path,restate-keys={note={Restated}}]
\label{thm:disintegrate-path}
For every $B \in \SigX[\pi]$ and $a \in \Apath$, the Lebesgue measure
disintegrates as
\begin{align}
\Leb[\dpath](\B[\pi,a])
  = \int_{\Z[\pi,a]} \left[
      \int_{\obsI[\pi,a](z)} \frac{\ind[\B[\pi,a]]{w}}{\Jac(w)} \Haus{\dpath-\mpath}[\dpath](\d w)
      \right]
      \Haus{\mpath}[\npath](\d z).
\end{align}
\end{theorem}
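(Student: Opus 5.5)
The plan is to reduce the statement to Federer's general coarea formula for Lipschitz maps between Euclidean spaces (Federer, Theorem 3.2.22), applied to the $C^1$ constant-rank extension $\obs*[\pi,a]$. Fix $a \in \Apath$ and $B \in \SigX[\pi]$, and write $g(w) \defeq \ind[\B[\pi,a]]{w}/\Jac(w)$ for $w \in \X[\pi,a]$ (extended by zero to $\U[\pi,a]$). Since $\B[\pi,a] \subset \X[\pi,a] \subset \U[\pi,a]$ and $\Jac > 0$ on $\U[\pi,a]$ by the constant-rank hypothesis with the convention $\Jac=1$ when $\mpath=0$, we have $g \cdot \Jac = \ind[\B[\pi,a]]{}$. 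Thus it suffices to prove
\begin{align*}
\int_{\U[\pi,a]} g(w)\, \Jac(w)\, \Leb[\dpath](\d w)
= \int_{\R[\npath]} \int_{\obsI[\pi,a](z)} g(w)\, \Haus{\dpath-\mpath}[\dpath](\d w)\, \Haus{\mpath}[\npath](\d z)
\end{align*}
for nonnegative Borel $g$. Two bookkeeping remarks complete the reduction. First, the inner integrand vanishes off $\X[\pi,a]$, so the fiber of $\obs*[\pi,a]$ may be replaced by that of $\obs[\pi,a]$. Second, the outer integrand vanishes for $z \notin \Z[\pi,a]$, so the outer domain may be restricted to $\Z[\pi,a]$.

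Step one is localization. Federer's formula requires a Lipschitz map defined on a measurable subset of $\R[\dpath]$ with image in a set that is countably $\mpath$-rectifiable. I would cover the open set $\U[\pi,a]$ by countably many compact balls $K_j$ on which $\obs*[\pi,a]$ is Lipschitz, since it is $C^1$. Disjointifying gives Borel pieces $P_j$, and by monotone convergence it suffices to prove the identity for $g\,\ind[P_j]{}$. On each piece, the constant-rank theorem shows (after refining the cover) that $\obs*[\pi,a](P_j)$ is contained in an embedded $C^1$ $\mpath$-dimensional submanifold of $\R[\npath]$, hence is $\mpath$-rectifiable.

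Step two identifies Federer's coarea factor with $\Jac$. For a map into an $\mpath$-rectifiable set $W$, Federer's Jacobian is $\|\wedge_{\mpath}\, \mathrm{ap}\,D\obs*[\pi,a](w)\|$, where the differential is taken as a map into the approximate tangent space of $W$. Because the rank is exactly $\mpath$, the range of $\obs*[\pi,a]'(w)$ equals that tangent space. The norm of the $\mpath$-th exterior power is then the product of the $\mpath$ nonzero singular values, which is $\Jac(w)$ by \cref{eq:coarea-jacobian}. With this identification, Federer's theorem yields the local identity.

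The main obstacle is step one, specifically the rectifiability of the image and the correct matching of the Hausdorff measures. When $\mpath < \npath$, the pushforward is singular with respect to $\Leb[\npath]$, so the usual Euclidean coarea formula does not apply directly. I would handle this using the constant-rank theorem: locally there are charts in which $\obs*[\pi,a]$ is the projection $\R[\dpath] \to \R[\mpath] \hookrightarrow \R[\npath]$. In those charts the result can be checked directly by Fubini together with the change of variables for $\Haus{\mpath}$ on a $C^1$ graph and for $\Haus{\dpath-\mpath}$ on the fibers, as a fallback if Federer's rectifiable-target version is not invoked. Measurability of the inner integral in $z$ is part of Federer's conclusion, or in the fallback it follows from Fubini in the charts.
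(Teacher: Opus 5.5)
Your proposal is correct and takes essentially the paper's route, which proves the statement via \cref{lemma:coarea-c1}: exhaust $U_{\pi,a}$ by compact convex sets on which $\bar{t}_{\pi,a}$ is Lipschitz, disjointify, apply \citet[Theorem~3.2.22]{federer1996} on each piece to the integrand $\mathbf{1}_{B_{\pi,a}}/J_{\pi,a}$ with a countably $m_\pi$-rectifiable target supplied by the constant-rank theorem, identify Federer's approximate Jacobian with the product of the nonzero singular values, and sum by Tonelli. Your tangent-space argument for the Jacobian identification is more detailed than the paper's bare assertion, while the paper additionally cites \citet[Theorem~6.1]{MattilaMauldin1997} to make the fiber integral Borel (not just $\mathsf{H}^{m_\pi}$-) measurable in $z$, which the identity does not need but the later kernel does.
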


This result is based on the coarea formula from \citep[Theorem~3.2.22]{federer1996},
adapted specifically to our topological assumptions on the trace space and observation map.
The disintegration of the Lebesgue measure suggests using the restricted Hausdorff measure
$\restr{\Haus{\mpath}[\npath]}{\Z[\pi]}$
as the observation base measure $\lamZ[\pi]$.
The next results show that this measure is sigma-finite, as needed
in \cref{thm:bayes}.

\begin{theorem}[store=thm:sf-local-base,restate-keys={note={Restated}}]
The measure $\lamZ[\pi] \defeq \restr{\Haus{\mpath}[\npath]}{\Z[\pi]}$ is sigma-finite.
\end{theorem}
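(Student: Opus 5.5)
The plan is to write $\Z[\pi]$ as a countable union of sets of finite $\Haus{\mpath}[\npath]$-measure. Since $\Apath$ is countable and $\Z[\pi] = \cup_{a \in \Apath} \Z[\pi,a]$, and a countable union of $\sigma$-finite pieces is $\sigma$-finite, it suffices to fix $a$ and show that $\Z[\pi,a] = \obs*[\pi,a](\U[\pi,a])$ is a countable union of sets with finite $\mpath$-dimensional Hausdorff measure.

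Fix $a$. The open set $\U[\pi,a] \subset \R[\dpath]$ is $\sigma$-compact, so $\Z[\pi,a]$ is a countable union of images of compact sets. The key step is to use the constant rank theorem. For each $w_0 \in \U[\pi,a]$, the $C^1$ constant-rank-$\mpath$ map $\obs*[\pi,a]$ has a local normal form. There are open neighborhoods $V \ni w_0$ and $W \ni \obs*[\pi,a](w_0)$, $C^1$ diffeomorphisms $\phi: V \to \phi(V) \subset \R[\dpath]$ and $\psi: W \to \psi(W) \subset \R[\npath]$, such that $\psi \circ \obs*[\pi,a] \circ \phi^{-1}(u_1,\dots,u_{\dpath}) = (u_1,\dots,u_{\mpath},0,\dots,0)$. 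Shrinking $V$ to a small closed ball inside a larger chart, the image $\obs*[\pi,a](V)$ is contained in $\psi^{-1}$ applied to a bounded piece of an $\mpath$-dimensional coordinate plane. That image is the image of a compact subset of $\R[\mpath]$ under a $C^1$, hence locally Lipschitz, embedding. Lipschitz maps multiply $\Haus{\mpath}$ by at most $\mathrm{Lip}^{\mpath}$, and $\Haus{\mpath}$ of a bounded subset of $\R[\mpath]$ is finite, since it equals Lebesgue measure there. So $\Haus{\mpath}[\npath](\obs*[\pi,a](V)) < \infty$. Covering $\U[\pi,a]$ by countably many such $V$, which is possible by second countability (Lindel\"of), gives the claim. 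In the edge case $\mpath = 0$ the map is locally constant, so each $V$ maps to a single point, and $\Haus{0}$ is counting measure, which is finite on points; the argument goes through unchanged.

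Finally, the restriction $\restr{\Haus{\mpath}[\npath]}{\Z[\pi]}$ assigns finite measure to each of the countably many pieces $\obs*[\pi,a](V_j) \cap \Z[\pi]$, and these pieces cover $\Z[\pi]$. Measurability of the pieces needs care, since images of Borel sets need not be Borel. I would handle this by taking the $V_j$ to be compact, so that the images are compact and hence Borel. One can also note that $\sigma$-finiteness only requires a cover by measurable sets of finite measure, and compact images suffice.

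The main obstacle is justifying the finite-measure bound on each local image. This rests on the constant rank theorem supplying a $C^1$ parametrization of $\obs*[\pi,a](V)$ by an $\mpath$-dimensional domain. Without constant rank, the image of an open set under a $C^1$ map could have infinite $\Haus{\mpath}$-measure locally, for instance when the rank jumps up, which is exactly why the hypothesis is there. An alternative would be to use \cref{thm:disintegrate-path} directly with $B$ a small compact neighborhood and bound $\Jac$ from below. However, that approach controls only the $\Haus{\mpath}$-measure of image points weighted by fiber mass, not the image itself, so I prefer the constant-rank route.
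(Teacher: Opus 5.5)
Your proof is correct and follows essentially the same route as the paper. The paper first uses the constant rank theorem and a countable cover to show that each $\Z[\pi,a]$, and hence $\Z[\pi]$, is countably $\mpath$-rectifiable, then invokes a lemma (Lipschitz images of bounded pieces have finite Hausdorff measure) to get $\sigma$-finiteness; you merge these two steps into one direct argument and get measurability from compact images rather than the paper's Borel-regularity enlargement, which is a harmless and slightly cleaner variation.
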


Taking the sum over the discrete variables $a \in \Apath$ and applying
Tonelli's theorem to interchange the integral and summation gives the
overall disintegration within the program path $\pi$.

\begin{theorem}[store=thm:disintegration-local,restate-keys={note={Restated}}]
\label{thm:disintegration-local}
The prior base measure $\lamX[\pi]$ over $(\X[\pi], \SigX[\pi])$
disintegrates through $\lamZ[\pi]$ as follows:
\begin{align}
\lamX[\pi](B)           &= \int_{\Z[\pi]} \lamXZ[\pi](B \given z) \lamZ[\pi](\d z)\\
\lamXZ[\pi](B \given z) &\defeq \sum_{a \in \Apath}
                                  \int_{\obsI[\pi,a](z)}
                                    \frac{\ind[\B[\pi,a]]{w}}{\Jac(w)}
                                    \Haus{\dpath-\mpath}[\dpath](\d w),
\qquad
\lamZ[\pi] \defeq \restr{\Haus{\mpath}[\npath]}{\Z[\pi]}\!.
\label{eq:baetyl}
\end{align}
\end{theorem}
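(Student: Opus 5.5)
The plan is to assemble the identity from the per-section disintegration of \cref{thm:disintegrate-path} using the congruence \cref{eq:base-measure-local}. Fix $B \in \SigX[\pi]$. By \cref{eq:base-measure-local}, $\lamX[\pi](B) = \sum_{a \in \Apath} \Leb[\dpath](\B[\pi,a])$. Applying \cref{thm:disintegrate-path} to each summand gives
\begin{align*}
\lamX[\pi](B) = \sum_{a \in \Apath} \int_{\Z[\pi,a]} \left[\int_{\obsI[\pi,a](z)} \frac{\ind[\B[\pi,a]]{w}}{\Jac(w)} \Haus{\dpath-\mpath}[\dpath](\d w)\right] \Haus{\mpath}[\npath](\d z).
\end{align*}

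Next, each outer integral over $\Z[\pi,a]$ is rewritten as an integral over $\Z[\pi]$ against $\lamZ[\pi] = \restr{\Haus{\mpath}[\npath]}{\Z[\pi]}$. Since $\Z[\pi,a] \subset \Z[\pi]$, this amounts to showing that the inner fiber integral vanishes for $z \in \Z[\pi] \setminus \Z[\pi,a]$. This holds because $\X[\pi,a] \subset \U[\pi,a]$, so $\obsI[\pi,a](z) \subset \obsI*[\pi,a](z) = \varnothing$ whenever $z \notin \obs*[\pi,a](\U[\pi,a]) = \Z[\pi,a]$. Hence each summand equals $\int_{\Z[\pi]} (\cdots)\,\lamZ[\pi](\d z)$ with the integrand extended by zero.

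Tonelli's theorem for the counting measure on the countable set $\Apath$ against $\lamZ[\pi]$ then interchanges the sum and integral, yielding exactly $\int_{\Z[\pi]} \lamXZ[\pi](B \given z)\, \lamZ[\pi](\d z)$ with $\lamXZ[\pi]$ as in \cref{eq:baetyl}. The integrands are nonnegative, since $\Jac > 0$ under the constant-rank convention. Tonelli requires the integrands to be measurable in $z$ and $\lamZ[\pi]$ to be $\sigma$-finite. The latter is given by the preceding sigma-finiteness theorem, and $\Apath$ is countable because it is a finite product of subsets of $\N$.

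The main obstacle I expect is the measurability of $z \mapsto \int_{\obsI[\pi,a](z)} \ind[\B[\pi,a]]{w}\,\Jac(w)^{-1}\,\Haus{\dpath-\mpath}[\dpath](\d w)$. I would take this from the proof of \cref{thm:disintegrate-path}, since the coarea formula of Federer (Theorem~3.2.22) asserts this measurability as part of its conclusion for $\Leb$-measurable nonnegative integrands. Alternatively, it follows from a monotone class argument starting from products of rectangles with the $C^1$ map. Two further checks complete the argument. First, $\lamXZ[\pi](\cdot \given z)$ is a measure for each $z$, by countable additivity of the sum and the Hausdorff integrals. Second, it is concentrated on the fiber $\obsI[\pi](z)$, since each summand lives on $\set{a} \times \obsI[\pi,a](z)$ and $\obs[\pi](a,w) = \obs[\pi,a](w)$. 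Both are needed for $(\lamZ[\pi], \lamXZ[\pi])$ to be a disintegration in the sense of \cref{eq:disintegrate-baseX}.
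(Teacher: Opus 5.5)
Your proposal is correct and matches the paper's proof: expand $\lamX[\pi](B)$ as a sum over $a \in \Apath$ via \cref{eq:base-measure-local}, apply \cref{thm:disintegrate-path} to each section, extend each outer integral from $\Z[\pi,a]$ to $\Z[\pi]$ because the fiber integral vanishes off $\Z[\pi,a]$, and interchange the countable sum and integral by Tonelli. Your extra checks on measurability and fiber support are welcome; note only that swapping a countable sum of nonnegative measurable functions with an integral needs just monotone convergence, so the $\sigma$-finiteness of $\lamZ[\pi]$ is not required at that step.
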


\paragraph{Posterior Density}
We next apply \cref{thm:bayes} to represent the disintegration $\muXZ[\pi]$
of $\muX[\pi]/\muX[\pi](\X[\pi])$
using the posterior density $\pXZ[\pi]$.
It is more convenient to use an unweighted
geometric reference measure that dominates the
canonical measure $\lamXZ[\pi]$ identified in \Cref{eq:baetyl}.
For $x = (a,w)$, we set
\begin{align}
\lamXZ*[\pi](B \given z) &\defeq \sum_{a \in \Apath}
                                  \int_{\obsI[\pi,a](z)}
                                    \ind[\B[\pi,a]]{w}
                                    \Haus{\dpath-\mpath}[\dpath](\d w),
&&
\lamXZ[\pi] \ll \lamXZ*[\pi], \quad
\frac{\d \lamXZ[\pi](\cdot \given z)}{\d \lamXZ*[\pi](\cdot \given z)}(x) = \frac{1}{\Jac[\pi](x)}.
\end{align}
Applying the chain rule of Radon--Nikodym derivatives to \Cref{eq:prop-bayes-x} gives
\begin{align}
\pXZ[\pi](x \given z)
\defeq
\frac{\d \muXZ[\pi](\cdot \given z)}{\d \lamXZ[\pi](\cdot \given z)}(x) \thinspace
\frac{\d \lamXZ[\pi](\cdot \given z)}{\d \lamXZ*[\pi](\cdot \given z)}(x)
= \frac{\pX[\pi](x)}{\pZ[\pi](z)\Jac[\pi](x)}.
\label{eq:Helminthocladiaceae}
\end{align}

This expression gives a unified density representation relative to an unweighted
base measure, by moving the Jacobian factor from the base
measure~\cref{eq:baetyl} into the density~\cref{eq:Helminthocladiaceae}.
It can therefore be used to directly compare the
densities of different points $x \defeq (a,w)$ and $x' \defeq (a',w')$
on the fiber $\obsI[\pi](z)$
making it directly compatible with algorithms such as importance sampling and MCMC.


\section{Disintegration of Full Probabilistic Programs}
\label{sec:mlp}

We now construct a disintegration for a probabilistic program
with multiple paths $\pi \in \Pi$, by combining the local
objects~\cref{eq:meas-space-combine} to obtain a global
disintegration~\cref{eq:disintegration-global}.

The global measure space over traces is defined
as the countable coproduct
\begin{gather}
(\X, \SigX, \lamX, \muX) \defeq \textstyle\coprod_{\pi \in \Pi}(\X[\pi], \SigX[\pi], \lamX[\pi], \muX[\pi])
\shortintertext{where the prior density and distribution over traces satisfy}
\pX(\pi, x) \defeq \pX[\pi](x), \qquad
\muX(\d(\pi, x)) \defeq \pX(\pi,x) \lamX(\d(\pi, x)).
\end{gather}
Note that superscripts $\Pi$ have been omitted for clarity.
As the $\return$ statement in each path $\pi$ produces an output in
$\R[\npath]$,
the global observation measurable space is also the coproduct:
\begin{equation}
(\Z, \SigZ) \defeq \textstyle\coprod_{n=0}^{\infty}(\R[n], \Borel[\R[n]]).
\end{equation}
The global observation map $\obs: \X \to \Z$ is given by
$\obs(\pi,x) \defeq \obs[\pi](x)$.
The final step is constructing the global base measure
$\lamZ$ on $(\Z, \SigZ)$, which must dominate all the local path base measures $\lamZ[\pi]$.

\subsection{Observation Base Measure}
\label{sec:mlp-obs}
There are two natural candidates for the observation base measure to consider:
\begin{enumerate*}[label=(\roman*)]
\item the sum $\lamZ[\rm sum] = \sum_{\pi \in \Pi}\lamZ[\pi]$; and
\item the coproduct $\lamZ[\rm cop] \defeq \coprod_{\pi \in \Pi} \lamZ[\pi]$.
\end{enumerate*}
The former, however, may not be sigma-finite.
The latter
is unsuitable because it lives on tagged observations $(\pi,z)$ which
defeats the purpose of posterior inference over the unknown program
path that generated the observation.

We will construct a global base measure $\lamZ$ that avoids both failures.
It is defined on raw observation values and measures each value only
once to ensure sigma-finiteness.
Toward this end, we stratify the observation space $\Z$ according to the output dimension
$\npath$ and observation rank $\mpath$ associated with each path $\pi$.
For each $n \ge 0$ and $m = 0,\dots, n$, we define
\begin{equation}
\Zle{n}{m} \defeq \bigcup_{\pi \mid \npath = n, \mpath \le m} \Z[\pi], \qquad
\Zle{n}{-1} \defeq \varnothing, \qquad
\Zeq{n}{m} \defeq \Zle{n}{m} \setminus \Zle{n}{m-1}.
\label{eq:z-stratify}
\end{equation}
The set $\Zle{n}{m}$ contains observations $z \in \R[n]$ that
lie in the image of an observation map whose rank is at most $m$.
The set $\Zeq{n}{m}$ contains observations $z$ for which
$m$ is the \textit{lowest} such rank.

With these constructions, the observation base measure is defined
as the restriction of the Hausdorff measure to these rank-stratified sets:
\begin{align}
\lamZ(B)
  \defeq
  \sum_{n=0}^{\infty} \sum_{m=0}^{n} \Haus{m}[n][\Zeq{n}{m}]\!(B)
  &&
  (B \in \SigZ).
\label{eq:base-measure-global}
\end{align}

\begin{theorem}[store=thm:lamZ-sigma-finite,restate-keys={note={Restated}}]
\label{thm:lamZ-sigma-finite}
The measure $\lamZ$ over $(\Z, \SigZ)$
is sigma-finite.
\end{theorem}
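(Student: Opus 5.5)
Since $\Z$ is a countable coproduct, it suffices to show that each summand $\Haus{m}[n][\Zeq{n}{m}]$ is sigma-finite on $\R[n]$. Indeed, $\lamZ$ restricted to the $n$-th component of the coproduct is the finite sum $\sum_{m=0}^{n}\Haus{m}[n][\Zeq{n}{m}]$. A countable union of sets of finite measure under each summand then exhausts $\Z$; on the complement of $\bigcup_m \Zeq{n}{m}$ within $\R[n]$ all summands vanish, and that complement adds at most one more set, of measure zero. We also need measurability of the strata. Each $\Z[\pi,a] = \obs*[\pi,a](\U[\pi,a])$ is the continuous image of an open ($\sigma$-compact) set, hence $\sigma$-compact and Borel. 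The set of paths $\Pi$ and each $\Apath$ are countable, so each $\Zle{n}{m}$ is a countable union of Borel sets, and $\Zeq{n}{m}$ is Borel.

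The core step is to show that for fixed $n$ and $m$ the set $\Zle{n}{m}$ (and hence its subset $\Zeq{n}{m}$) is a countable union of sets of finite $\Haus{m}[n]$-measure. We have $\Zle{n}{m} = \bigcup_{\pi: \npath=n,\mpath\le m}\bigcup_{a\in\Apath}\obs*[\pi,a](\U[\pi,a])$, which is a countable union. It therefore suffices to handle a single $C^1$ map $f = \obs*[\pi,a]:\U\to\R[n]$ of constant rank $k=\mpath\le m$. Exhaust $\U$ by countably many compact balls $K_j$. By the constant rank theorem, each point of $\U$ has a neighborhood $V$ on which $f$ factors as $f = \phi\circ\Pi_k\circ\psi$, with $\psi,\phi$ $C^1$ diffeomorphisms and $\Pi_k$ the projection onto $\R[k]$. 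Shrinking $V$ so that these charts are bi-Lipschitz on a compact neighborhood, the image $f(V)$ is a Lipschitz image of a bounded subset of $\R[k]$. Its $\Haus{k}[n]$-measure is therefore finite.

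When $k<m$, the same set has $\Haus{m}[n]$-measure zero, since Hausdorff measure of a set of dimension at most $k$ vanishes in higher dimension. When $k=m$, its $\Haus{m}[n]$-measure is finite. Covering each $K_j$ by finitely many such neighborhoods $V$ then writes $f(\U)$ as a countable union of sets of finite $\Haus{m}[n]$-measure. This is exactly why we stratify by the lowest rank $m$: a point lying in the image of a rank-$k$ map with $k<m$ never needs to be measured by $\Haus{k}[n]$ through the stratum $\Zeq{n}{m}$, because the definition of $\Zeq{n}{m}$ removes it. The stratum with index $m$ receives only $\Haus{m}[n]$, and $\Zeq{n}{m}\subset\Zle{n}{m}$ inherits the countable finite-measure cover. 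For $m=0$ the images are countable sets of points, and $\Haus{0}[n]$ is counting measure, so sigma-finiteness is immediate.

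The main obstacle is the local finiteness step, namely justifying that $\Haus{m}[n](f(V))<\infty$ for a constant rank $m$ $C^1$ map near a point. I would first try to deduce it from the preceding theorem asserting that $\lamZ[\pi]$ is sigma-finite. That result already gives a countable cover of $\Z[\pi]$ by sets of finite $\Haus{\mpath}[\npath]$-measure, which handles the case $\mpath=m$ directly. If its proof only supplies this, the remaining cases $\mpath<m$ need the easy additional fact that $\Haus{m}[n]$ vanishes on $\sigma$-finite $\Haus{\mpath}[n]$ sets with $\mpath<m$. Finally, summing over the countably many $n$ and the finitely many $m$ completes the argument.
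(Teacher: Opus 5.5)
Your proposal is correct and matches the paper's proof: both reduce to sigma-finiteness of each stratum measure $\Haus{m}[n][\Zeq{n}{m}]$ and then build a global cover from the countably many pairwise disjoint strata plus a $\lamZ$-null complement. The paper packages your local constant-rank/Lipschitz-image step as countable $m$-rectifiability of $\Zle{n}{m}$ (\cref{lemma:c1-borel-rectifiable}) followed by \cref{lemma:haus-restrict-rectifiable}. Your aside that the minimal-rank stratification is ``exactly why'' this works is not needed and is slightly off: your own argument shows $\Haus{m}[n][\Zle{n}{m}]$ is already sigma-finite, and in fact it equals $\Haus{m}[n][\Zeq{n}{m}]$, since $\Zle{n}{m-1}$ is $\Haus{m}[n]$-null by \cref{lemma:rectifiable-haus-zero}.
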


\subsection{Unifying the Pathwise Base Measures}
\label{sec:mlp-unifying}

We next express all the local base measures
$\lamZ[\pi]$ for program paths $\pi \in \Pi$
in terms of the global base $\lamZ$.
This change-of-measure step is necessary to soundly compare densities that arise
from different program paths with respect to a unified base measure
(recall \Crefrange{eq:z-dens-pi-1}{eq:z-dens-pi-2-bad} from \cref{sec:observe-inferences-mixed}).

\begin{proposition}[store=prop:rd-local-global,restate-keys={note={Restated}}]
\label{prop:rd-local-global}
For each $\pi \in \Pi$,
let $\lamZ[\pi](C) \defeq \Haus{\mpath}[\npath](C \cap \Z[\pi])$
be the extension of the local base measure defined in \Cref{eq:baetyl}
to the global measure space $(\Z, \SigZ)$.
Then $\lamZ[\pi] \ll \lamZ$ and has density
\begin{align}
\rd(z) \defeq \frac{\d \lamZ[\pi]}{\d \lamZ}(z) = \ind{z \in \Z[\pi]} \ind{\rank(z) = \mpath}
&& (\pi \in \Pi, z \in \Z),
\label{eq:zumatic}
\end{align}
where $\rank(z) \defeq \min\set*{ m \mid z \in \Zeq{n}{m}, n \ge 0, m = 0,\dots,n}$,
with the convention $\min\varnothing \defeq +\infty$.
\end{proposition}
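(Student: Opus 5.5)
The plan is to verify the Radon--Nikodym identity directly: for every $C \in \SigZ$ we will show
\begin{equation*}
\lamZ[\pi](C) = \int_C \ind{z \in \Z[\pi]}\ind{\rank(z)=\mpath}\,\lamZ(\d z).
\end{equation*}
This gives both $\lamZ[\pi] \ll \lamZ$ and the stated density at once.

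\emph{Step 1 (measurability and bookkeeping).} Fix $\pi$ and write $n \defeq \npath$, $m \defeq \mpath$. Each $\U[\pi,a]$ is open, hence $\sigma$-compact, and $\obs*[\pi,a]$ is continuous. So $\Z[\pi,a]$ is $\sigma$-compact, and $\Z[\pi]$ is Borel as a countable union ($\Apath$ is countable). Since $\Pi$ is countable, every $\Zle{n'}{m'}$ and $\Zeq{n'}{m'}$ is Borel. For fixed $n'$ the strata $\Zeq{n'}{m'}$, $m'=0,\dots,n'$, are pairwise disjoint. Hence $\rank(z)$ is the unique $m'$ with $z \in \Zeq{n'}{m'}$ when such an $m'$ exists, and $\rank$ is measurable. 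Both $\lamZ[\pi]$ and the right-hand side vanish outside $\R[n]$. On $\R[n]$, \cref{eq:base-measure-global} gives
\begin{equation*}
\int_C \ind{z\in\Z[\pi]}\ind{\rank(z)=m}\,\lamZ(\d z) = \Haus{m}[n](C \cap \Z[\pi] \cap \Zeq{n}{m}).
\end{equation*}
Here only the $m$-th summand survives, because the indicator forces $z \in \Zeq{n}{m}$ and the strata are disjoint. It therefore suffices to show
\begin{equation*}
\Haus{m}[n](C \cap \Z[\pi]) = \Haus{m}[n](C \cap \Z[\pi] \cap \Zeq{n}{m}).
\end{equation*}

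\emph{Step 2 (decomposition).} By definition, $\Z[\pi] \subset \Zle{n}{m}$, and $\Zle{n}{m}$ is the disjoint union of $\Zeq{n}{m}$ and $\Zle{n}{m-1}$. So it remains to show that $\Haus{m}[n](\Zle{n}{m-1}) = 0$. The set $\Zle{n}{m-1}$ is a countable union, over paths $\pi'$ with $n_{\pi'}=n$ and $m_{\pi'} = m' \le m-1$ and over $a \in \Apath[\pi']$, of the sets $\obs*[\pi',a](\U[\pi',a])$. By countable subadditivity, it suffices to show that the image of a single $C^1$ constant-rank-$m'$ map $f:\U\to\R[n]$ with $m'<m$ is $\Haus{m}[n]$-null.

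\emph{Step 3 (the main obstacle: lower-rank images are $\Haus{m}$-null).} By the constant rank theorem, each $w \in \U$ has a neighborhood $V$ on which $f = g \circ p$. Here $p: V \to \R[m']$ is a $C^1$ projection in suitable coordinates, and $g$ is $C^1$ on an open subset of $\R[m']$. Cover $\U$ by countably many such $V$, shrunk so that $p(V)$ has compact closure inside the domain of $g$. Then $g$ is Lipschitz on $p(V)$, and $f(V) = g(p(V))$ is a Lipschitz image of a subset of $\R[m']$. Lipschitz maps multiply $\Haus{m}$ by at most $\mathrm{Lip}^m$, and $\Haus{m}(\R[m'])=0$ for $m > m'$. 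Hence $\Haus{m}[n](f(V))=0$, and a countable union over the cover finishes Step 3. The case $m'=0$ is included: there $f$ is locally constant, so its image is countable and is $\Haus{m}$-null for $m \ge 1$.

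Combining the three steps gives the claimed identity for every $C$, which yields $\lamZ[\pi] \ll \lamZ$ with density \cref{eq:zumatic}.
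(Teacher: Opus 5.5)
Your proposal is correct and follows essentially the same route as the paper. Both proofs split $C \cap \Z[\pi]$ along the rank strata of $\R[\npath]$, note that strata above $\mpath$ are disjoint from $\Z[\pi]$ while strata below $\mpath$ are $\Haus{\mpath}[\npath]$-null, and identify the surviving minimal stratum with the matching summand of $\lamZ$. The differences are minor: you prove the null-set claim directly, via local constant-rank factorization through $\R[m']$ and the Lipschitz bound on Hausdorff measure, rather than citing \cref{lemma:c1-borel-rectifiable} and \cref{lemma:rectifiable-haus-zero}. You also check Borel measurability of the strata and of $\rank$, which the paper leaves implicit.
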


This result shows that only those program paths $\pi$ for which the
observation $z$ lies in $\Z[\pi]$ and has \textit{minimal} rank $\mpath$
contribute to the density $\rd(z)$.
Informally, a path whose observation map has rank $m$ assigns a mass of
order $\epsilon^m$ to an $\epsilon$-neighborhood of a regular observation
value, so only the smallest exponents survive in the infinitesimal limit.
The global base measure $\lamZ$ captures this asymptotic comparison.
It measures $z$ only on its minimal rank stratum, while the density
$\rd(z)$ of the local path measures $\lamZ[\pi]$ with respect to $\lamZ$
retains paths whose observation rank is minimal.

The change-of-measure~\cref{eq:zumatic} now lets us identify the global disintegration.

\begin{theorem}[store=thm:disintegration-global,restate-keys={note={Restated}}]
\label{thm:disintegration-global}
The prior base measure $\lamX$ over $(\X, \SigX)$
disintegrates through $\lamZ$ as follows:
\begin{align}
\lamX(B) &= \int_{\Z} \lamXZ(B \given z) \lamZ(\d z),
\qquad
\lamXZ(B \given z) \defeq
  \sum_{\pi \in \Pi} \rd(z) \lamXZ[\pi](\B[\pi] \given z),
\end{align}
for each $B \in \SigX$, where $\B[\pi] \defeq \set{x \in \X[\pi] \mid (\pi,x) \in B}$.
\end{theorem}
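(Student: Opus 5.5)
The plan is to reduce the global statement to the local disintegrations of \cref{thm:disintegration-local}, glued together by the change of measure in \cref{prop:rd-local-global}. Fix $B \in \SigX$. Since $(\X,\SigX,\lamX)$ is the countable coproduct of the path spaces, we have
\begin{align*}
\lamX(B) = \sum_{\pi \in \Pi} \lamX[\pi](\B[\pi]).
\end{align*}
For each $\pi$, \cref{thm:disintegration-local} applied to $\B[\pi] \in \SigX[\pi]$ gives
\begin{align*}
\lamX[\pi](\B[\pi]) = \int_{\Z[\pi]} \lamXZ[\pi](\B[\pi] \given z) \, \Haus{\mpath}[\npath](\d z).
\end{align*}
Extending the integrand by zero off $\Z[\pi]$, we may regard this as an integral over the global space $\Z$ against the extended measure $\lamZ[\pi](C) = \Haus{\mpath}[\npath](C \cap \Z[\pi])$. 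This uses the embedding of $\R[\npath]$ as the $n=\npath$ summand of $\Z = \coprod_n \R[n]$.

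Next, I will invoke \cref{prop:rd-local-global}, which states $\lamZ[\pi] \ll \lamZ$ with density $\rd$. The pathwise integral then becomes
\begin{align*}
\int_{\Z} \lamXZ[\pi](\B[\pi] \given z)\, \rd(z)\, \lamZ(\d z).
\end{align*}
This step needs the integrand $z \mapsto \lamXZ[\pi](\B[\pi] \given z)$ to be nonnegative and measurable. Both properties are implicit in \cref{thm:disintegration-local}, since $\lamXZ[\pi]$ is a transition kernel. I will cite the coarea-formula measurability from the proof of \cref{thm:disintegrate-path}, which supplies measurability of the inner fiber integrals, together with countability of $\Apath$.

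Summing over the countably many $\pi \in \Pi$ and applying Tonelli's theorem (all terms are nonnegative, and $\lamZ$ is sigma-finite by \cref{thm:lamZ-sigma-finite}) interchanges sum and integral. This yields $\lamX(B) = \int_{\Z} \sum_{\pi} \rd(z)\lamXZ[\pi](\B[\pi]\given z)\,\lamZ(\d z)$, which is exactly the claimed formula. To confirm that $\lamXZ$ is a legitimate disintegration kernel in the sense of \cref{eq:disintegrate-baseX}, I will also check two things:
\begin{itemize}
\item $B \mapsto \lamXZ(B\given z)$ is a measure for each $z$, as a countable nonnegative combination of measures;
\item $\lamXZ$ vanishes off the fiber $\obsI(z)$, because each $\lamXZ[\pi](\cdot\given z)$ is concentrated on $\obsI[\pi](z)$, and $\obs(\pi,x)=\obs[\pi](x)$.
\end{itemize}

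The main obstacle is measurability of $z \mapsto \lamXZ[\pi](\B[\pi]\given z)$ on $\Z$, together with measurability of $\rd$, which requires $\Z[\pi]$ and the strata $\Zeq{n}{m}$ to be Borel. The sets $\Z[\pi,a]$ are $C^1$ images of open sets, hence $\sigma$-compact and therefore Borel, and countable unions preserve this. So I expect the density $\rd$ to be measurable. For the kernel, I would rely on the measurability clause of Federer's coarea formula as already used in \cref{thm:disintegrate-path}, and argue that it passes to the sum over $a$ and over $\pi$.
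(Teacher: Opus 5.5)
Your proposal is correct and follows the paper's proof step for step: decompose $\lamX(B)=\sum_{\pi}\lamX[\pi](\B[\pi])$ over the coproduct, apply \cref{thm:disintegration-local} on each path, rewrite each pathwise integral against $\lamZ$ using the density $\rd$ from \cref{prop:rd-local-global}, and interchange sum and integral by Tonelli. Your extra checks (measurability of the kernel and of $\rd$, and support of $\lamXZ(\cdot\given z)$ on the fiber) are details the paper leaves implicit. Note also that interchanging a countable sum of nonnegative terms with the integral needs only monotone convergence, not sigma-finiteness of $\lamZ$.
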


\paragraph{Posterior Density}

\Cref{thm:bayes,thm:disintegration-global} together furnish the required
densities $\pZ(z)$ and $\pXZ(\pi,x \given z)$.
As in the case for disintegrating a single path, it is more convenient
for probabilistic inference algorithms to use a conditional density with respect
to an unweighted geometric measure $\lamXZ*$ that excludes the $z$-specific term
$\rd(z)$ and $x$-specific term $\Jac[\pi](x)$:
\begin{equation}
\lamXZ*(B \given z) \defeq \sum_{\pi \in \Pi} \sum_{a \in \Apath} \int_{\obsI[\pi,a](z)} \ind[\B[\pi,a]]{w} \Haus{\dpath-\mpath}[\dpath](\d w),
\qquad
\frac{\d \lamXZ(\cdot \given z) }{ \d \lamXZ*(\cdot \given z) }(\pi, x)
= \frac{\rd(z)}{\Jac[\pi](x)}.
\end{equation}

Analogously to \cref{eq:Helminthocladiaceae},
applying \cref{thm:bayes} to the disintegration in \cref{thm:disintegration-global}
and then changing the base measure from
$\lamXZ$ to $\lamXZ*$ gives the overall conditional density:
\begin{align}
\pXZ(\pi, x \given z)
\defeq
\frac{\d \muXZ(\cdot \given z)}{\d \lamXZ(\cdot \given z)}(\pi,x) \thinspace
\frac{\d \lamXZ(\cdot \given z)}{\d \lamXZ*(\cdot \given z)}(\pi,x)
= \frac{\pX(\pi, x)\rd(z)}{\pZ(z)\Jac[\pi](x)}
= \frac{\pX[\pi](a,w)\rd(z)}{\pZ(z)\Jac(w)},
\label{eq:Helminthocladiaceae-2}
\end{align}
for any program path $\pi \in \Pi$ and trace $x = (a,w) \in \amalg_{a \in \Apath[\pi]}\X[\pi,a]$.

\subsection{Relationship to Neighborhood Limits}
\label{sec:mlp-neighborhood}

The disintegration $\muXZ$ that we have constructed admits
a neighborhood-limit interpretation.
\Citet{pfanzagl1979} shows that under standard topological assumptions
(satisfied by our construction), for a fixed observation map
$\obs: \X \to \Z$, a conditional distribution can be obtained as a
derivative of measures on $\Z$.
For a neighborhood $C$ of $z$ such that $\muZ(C)>0$, define the measure
\begin{align}
M^z_C(B) \defeq \frac{\muX(B \cap \obsI(C))}{\muX(\obsI(C))} && (B \in \SigX).
\label{eq:pfanzagl}
\end{align}
Informally, \citeauthor{pfanzagl1979} proves that $M^z_C(\cdot)$ converges to a regular
conditional distribution $\muXZ(\cdot \given z)$ as
$C \to z$, for $\muZ$-almost every $z$.
The object is unique up to $\muZ$-null sets.
Different admissible
neighborhood bases for the same map $\obs$ only change the
resulting version of the regular conditional distribution.
This result does not contradict \cref{prop:limit-not-probability}, because
\citeauthor{pfanzagl1979} takes the \textit{weak} limits of probability measures
$M^z_C$, not the setwise limits $B \mapsto \lim_C M^z_C(B)$
that \citeauthor{jaynes2003} advocates for.

Neighborhood limits are therefore sound when they arise from a coherent
observation map.
The neighborhood-limit approach of \citet{jacobs2021} from \cref{sec:observe}
does not guarantee this condition, because it uses branch-dependent neighborhood
approximations such as $\cup_{\pi} \set{\pi} \times \obsI[\pi](C_{\pi,\epsilon})$,
which may not be the preimage of a single neighborhood under a fixed observation map.
The limiting branch probabilities can therefore contain arbitrary scale factors
from the chosen neighborhoods, as shown in \Cref{eq:neighborhood-branch-arbitrary}.
Our global observation map $\obs: \X \to \Z$ and base measure $\lamZ$ rule out extra
choices exogenous to the statistical model, because all program
paths are compared through the same observation-space derivative.
Our construction of $\muXZ$ explicitly represents the
weak limit of the measure-theoretic derivative~\cref{eq:pfanzagl} for probabilistic program
traces as a posterior density $\pXZ$~\cref{eq:Helminthocladiaceae-2}.

\subsection{Projection Observations and Likelihood Weighting}
\label{sec:mlp-lw}

The factor $r_\pi(z)$ in the posterior density $\pXZ$ is a global program
property that requires knowing \begin{enumerate*}[label=(\roman*)]
\item whether $z$ lies in the path image; and
\item whether the path has minimal rank among \textit{all} paths that can generate $z$.
\end{enumerate*}
In the special case of projection observations, this factor can be soundly
estimated by sampling, and we can prove that inference via likelihood
weighting is strongly consistent.

Suppose the language in \cref{lst:prob-while-obs} is equipped with
a deterministic addressing scheme for $\sample$ statements.
Each execution along a program path $\pi$ produces
an \textit{addressed} trace
$x=\set{(\alpha, x_\alpha)}_{\alpha\in\mathcal A_\pi}$,
where $\mathcal A_\pi \subset \N$ is the set of encountered addresses.
The observation maps are coordinate projections:
\begin{align}
\obs[\pi](x) = x_{O_\pi} \defeq \set{(\alpha, x_\alpha)}_{\alpha \in O_\pi},
\qquad
O_\pi \subset \mathcal A_\pi
&& (\pi \in \Pi).
\end{align}
With the identification of addressed traces as Euclidean vectors,
the Jacobian
$\Jac[\pi](x)\equiv 1$ and the path rank $\mpath[\pi]$ is the number of
continuous coordinates in $O_\pi$.
The nonlocal term in $\pXZ$ is $\rank(z)$:
\begin{equation}
\pXZ(\pi, x \given z) \propto \pX[\pi](x) \ind{\obs[\pi](x)=z} \ind{m_\pi=\rank(z)} .
\label{eq:projection-target-weight}
\end{equation}

Let $\QXZ$ be the usual likelihood-weighting proposal over $\X$ for an observation
$z$, which generates a trace $x$ by sampling all unobserved
coordinates $x_{\mathcal{A}_\pi \setminus O_\pi}$ from their prior conditionals
and fixing observed coordinates $x_{O_\pi}$ to $z$.
The likelihood weight $L$ is the product of the probability
densities (or masses) of the fixed coordinates and $\ind{\obs[\pi](x)=z}$.
Draw $N>0$ i.i.d.~importance-weighted samples
\begin{equation}
\left(\left(\left(\bm\pi_j,X_j\right), L_j\right)\right)_{j=1}^N\sim \QXZ[N].
\end{equation}
As $\rank(z)$ is unknown, it is estimated empirically from samples as follows:
\begin{align}
\rank_N(z) \defeq \min\set{M_j \mid 1\le j\le N,\ L_j>0},
&&
M_j \defeq \mpath[\bm\pi_j]\quad
(1 \le j \le N).
\label{eq:empirical-min-rank}
\end{align}
Finally, for a measurable function $f:\X \to \R$, define the likelihood-weighting
ratio estimator
\begin{equation}
\widehat{\mathbb E}^z_N[f]
\defeq
\frac
  {\sum_{j=1}^N f(\bm\pi_j,X_j)L_j\,\ind{M_j=\rank_N(z)}}
  {\sum_{j=1}^N L_j\,\ind{M_j=\rank_N(z)}},
\label{eq:ratio-est-lw}
\end{equation}
which is set arbitrarily when the denominator is zero.

\begin{theorem}[store=thm:importance-sampling,restate-keys={note={Restated}}]
\label{thm:importance-sampling}
Let $f: (\X,\SigX) \to (\R,\Borel[\R])$ be bounded and measurable, and
$z$ be an observation with $\pZ(z)\in(0,\infty)$.
Likelihood weighting is strongly consistent for
$\Expect[\muXZ(\cdot \given z)]{f(\bm\pi,X)}$, that is,
\begin{align}
\lim_{N \to \infty} \widehat{\mathbb E}^z_N[f] = \int_{\X} f(\pi,x)\,\muXZ(\d(\pi,x)\mid z),
&&
\QXZ[\infty]\mbox{-a.s.}
\end{align}
\end{theorem}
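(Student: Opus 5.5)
The proof reduces the estimator to two ordinary strong-law ratio estimators, one for each rank, and handles the data-dependent choice $\rank_N(z)$ by a finite-time stabilization argument. Fix $z$ with $\pZ(z)\in(0,\infty)$ and write $m^\ast\defeq\rank(z)$.

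\textbf{Step 1: the likelihood weight is an unnormalized density.} I would first identify the law of a single weighted draw $((\bm\pi,X),L)$ under $\QXZ$. For projection observations the proposal density with respect to $\lamXZ*(\cdot\given z)$ is the product of the prior conditionals of the unobserved coordinates. Multiplying by $L$ (the densities or masses of the clamped coordinates times $\ind{\obs[\pi](x)=z}$) recovers exactly $\pX[\pi](x)\ind{\obs[\pi](x)=z}$ with $\Jac[\pi]\equiv 1$. This gives, for bounded measurable $g$ and each rank $m$,
\begin{equation*}
\Expect[\QXZ]{g(\bm\pi,X)\,L\,\ind{M=m}} = \sum_{\pi:\,\mpath=m}\int g(\pi,x)\,\pX[\pi](x)\,\lamXZ*(\d x\given z) \eqqcolon c_m(g).
\end{equation*}
This is a change-of-measure computation over a countable coproduct of paths, justified by Tonelli. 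Comparing with \cref{eq:projection-target-weight} and \cref{eq:Helminthocladiaceae-2}, where $\rd(z)=\ind{\mpath=m^\ast}$ on paths with $z\in\Z[\pi]$, I obtain $c_{m^\ast}(1)=\pZ(z)\in(0,\infty)$ and $c_{m^\ast}(f)/c_{m^\ast}(1)=\int f\,\d\muXZ(\cdot\given z)$. I also need $c_m(1)=0$ for every $m<m^\ast$. The reason is that the weight can be positive only when $z$ is in the image of the path, hence $z\in\Z[\pi]$ with $\mpath\ge\rank(z)$ by the definitions in \cref{eq:z-stratify}. So $\QXZ$-a.s.\ $L_j>0$ forces $M_j\ge m^\ast$.

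\textbf{Step 2: $\rank_N(z)$ is eventually exact.} By Step 1, almost surely $L_j>0$ implies $M_j\ge m^\ast$, so $\rank_N(z)\ge m^\ast$ for every $N$. Since $c_{m^\ast}(1)>0$, the event $\{L>0,\,M=m^\ast\}$ has positive $\QXZ$-probability. By Borel--Cantelli (or simple independence), almost surely some finite $N_0$ satisfies $\rank_N(z)=m^\ast$ for all $N\ge N_0$. This is where I expect the main subtlety to lie. The argument needs the a.s.\ statement that no positive-weight draw can have rank below $m^\ast$. That requires checking that the support of the proposal on each path coincides with the fiber on which $\lamXZ*$ lives, so that $\ind{\obs[\pi](x)=z}$ is a genuine restriction. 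It also requires handling the discrete/continuous split within $O_\pi$, where $\mpath$ counts only the continuous observed coordinates.

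\textbf{Step 3: strong law.} For $N\ge N_0$ the estimator equals
\begin{equation*}
\frac{N^{-1}\sum_j f(\bm\pi_j,X_j)L_j\ind{M_j=m^\ast}}{N^{-1}\sum_j L_j\ind{M_j=m^\ast}}.
\end{equation*}
The summands are i.i.d.\ and integrable: their means are $c_{m^\ast}(f)$ and $c_{m^\ast}(1)$, both finite because $f$ is bounded and $\pZ(z)<\infty$. The strong law of large numbers gives a.s.\ convergence of numerator and denominator to these limits. Since the denominator limit $c_{m^\ast}(1)=\pZ(z)$ is strictly positive, the ratio converges to $c_{m^\ast}(f)/c_{m^\ast}(1)$. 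Intersecting the countably many almost-sure events (the event from Step 2 and the two SLLN events) yields the claim $\QXZ[\infty]$-a.s.
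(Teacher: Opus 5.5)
Your proposal is correct and follows essentially the same route as the paper's proof. You identify $L\,\ind{M=\rank(z)}$ as the unnormalized importance weight with $\QXZ$-mean $\pZ(z)>0$, use the second Borel--Cantelli lemma together with the bound $\mpath\ge\rank(z)$ whenever $\obs[\pi](x)=z$ to get $\rank_N(z)=\rank(z)$ for all large $N$, and finish with the strong law on numerator and denominator. The subtlety you flag in Step 2 is harmless: the indicator $\ind{\obs[\pi](x)=z}$ built into $L$ makes that rank lower bound hold surely, not just almost surely, which is exactly how the paper uses it.
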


The key idea is that with projection observations, likelihood weighting
does not need to know the global rank statically.
The event $\set{L>0,\ M=\rank(z)}$ has positive $\QXZ$-probability, so by
Borel--Cantelli it occurs infinitely often with $\QXZ[\infty]$-probability one.
The empirical estimator $\rank_N(z)$ will therefore eventually equal $\rank(z)$,
after which the likelihood-weighting ratio estimator reduces to ordinary
self-normalized importance sampling within the program paths of minimal-rank.
A similar argument can be used for general proposal importance-sampling
distributions $\QXZ$ (not necessarily the prior) and for randomized addressing
schemes from a countable set.


\section{Language Design Considerations}
\label{sec:design}

To our knowledge, no existing probabilistic programming language supports
in full generality the geometric semantics for conditioning developed
in this paper.
We expect implementations to specialize our framework in different ways,
depending on the desired tradeoff between expressiveness, runtime
performance, and static verification.
We briefly discuss implementation considerations that arise when designing
systems whose inference algorithms leverage our Bayes rule:
\begin{align}
\pXZ(\pi, x \given z) = \frac{\pX[\pi](x)\rd(z)}{\pZ(z)\Jac(w)} && (x = (a,w)).
\label{eq:multiganglionic}
\end{align}
The easy terms are $\pX[\pi](x)$, which
is available by executing the program on a fixed trace,
and $\pZ(z)$, which can be ignored by inference algorithms that
do not depend on normalizing constants.

\paragraph{Guards}
\Cref{thm:disintegrate-path} allows the path domain $\X[\pi]$ to be
an arbitrary positive-measure subset of the ambient Euclidean
trace space carved out by program guards (\cref{sec:slp-guards}).
Even exotic domains such as a fat Cantor set, which has positive
Lebesgue measure but empty interior, are admissible in the mathematical
theory.
In practice, however, language implementations are likely to restrict
guards to more structured classes, such as expressions built from
piecewise-analytic-under-analytic-partition (PAP) functions~\citep{lee2020}, in
order to make path analysis and inference tractable.

\paragraph{Observation Map}
Enforcing that each path-specific observation map $\obs[\pi]: \X[\pi] \to \Z[\pi]$ is
constant-rank may require explicit stratification using program
branches.
For example, the map $t(x) = x^2$ has rank 0 at $x=0$ and rank 1 at
$x \ne 0$; if $x$ is a mixed-type random variable (e.g., a normal-Bernoulli
mixture), then explicit paths are needed for the $x=0$, $x<0$, and $x>0$ cases.
Stratification is theoretically straightforward but may be nontrivial
to expose in program syntax, and may require support from the host
language to detect and construct valid constant-rank observation maps.

\paragraph{Jacobian}

Because the observation maps in the main theory are restrictions of $C^1$
maps on Euclidean domains, the Jacobian $\Jac[\pi](x) = \Jac[\pi,a](w)$ is
computed from ordinary Euclidean partial derivatives.
These derivatives can often be obtained by automatic differentiation of the
return expression, once the trace $x=(a,w)$ is typed into discrete and
continuous coordinates.
A practical AD implementation requires enough type information to
distinguish the discrete and continuous coordinates, so that $a$ is held
constant and differentiation is performed with respect to $w$.

\paragraph{Fiber Support}
The posterior distribution $\muXZ$
is always supported on the fiber $\obsI(z)$, so the density term $\pXZ$ \cref{eq:multiganglionic}
implicitly includes an indicator $\ind{\obs(\pi, x) = z}$.
Generating samples on this fiber---such as for proposal distributions in
importance sampling or Markov chain Monte Carlo---is a key challenge.
If observation maps are projections of sampled
variables, the fiber is a coordinate slice, so importance sampling
using the prior works as a default proposal (\cref{sec:mlp-lw}).
In general, however, the fiber may be a complex set,
and there is no default proposal strategy.
Practical language implementations will likely require user guidance,
specialized constraint solvers, reparameterizations of the fiber, or
programmable inference kernels that can effectively explore the posterior
geometry.

\paragraph{Rank}
\Cref{sec:mlp-lw} showed a special case where the global rank information
can be consistently estimated via importance sampling.
In restricted languages, observation ranks may be
obtained by exploiting known structure of the probabilistic model.
In more flexible languages, a combination of program analysis and
statistical estimation methods may be required to compute the ranks.


\section{Related Work}
\label{sec:related}

\subsection{Geometric Densities and Base Measures}

\Citet{diaconis2013sampling}
leverage geometric techniques to solve statistical problems that involve
probability distributions on Euclidean submanifolds.
This perspective is closely related to our construction in \cref{sec:slp},
where the coarea formula disintegrates the Lebesgue measure along the
fibers of a path-specific observation map.
Our contribution differs in scope, however.
Rather than sampling from a \textit{fixed} manifold, we construct in
\cref{sec:mlp} disintegration semantics for entire probabilistic program
traces.
Our theory supports constructs beyond those in
\citet{diaconis2013sampling}, such as countably many execution paths,
discrete-continuous mixtures, overlapping
path images, and a global sigma-finite observation base measure that
gives a Bayes rule over probabilistic program traces.

\Citet{radul2021base} develop a powerful calculus for computing densities
of probability distributions on manifolds.
Using our notation, their construction can be understood as a principled
representation of the marginal density $\pZ$ over the observation space
$\Z = \cup_{n=0}^{\infty} \R[n]$.
The essential difference between our developments lies in the semantic
objects of interest and resulting geometric machinery.
Our task goes further, where a
probabilistic program denotes a prior measure $\muX$ over traces together
with an observation map $\obs: \X \to \Z$, and conditioning is defined by
disintegration:
\begin{equation}
\muX(A)
= \int_{\Z} \muXZ(A \given z) \muZ(\d z)
= \int_{\Z} \int_A \pXZ(\pi, x \given z) \lamXZ(\d(x,\pi) \given z) \pZ(z) \lamZ(\d z).
\end{equation}
As we aim to construct a conditional law $\muXZ$ and density
$\pXZ$ over path-indexed traces $(\pi,x)$,
the observation density $\pZ$ falls out only as the normalizing constant.
This distinction requires us to develop alternative measure-theoretic
constructions to the ones in \citeauthor{radul2021base} that enable a
disintegration theorem and Bayes rule.
These differences are most visible in three places.

\begin{itemize}[wide=0pt]

\item \textit{Countably Infinite Path Supports}.
\Citet[\S4.2]{radul2021base} work with a \textit{finite} union of
manifolds and take the sum of Hausdorff measures as the common base.
Probabilistic programs may involve countably many manifolds where the sum
of base measures is not appropriate (cf.~\cref{sec:mlp-obs}).
We construct an alternative base measure $\lamZ$
that lets us apply \cref{thm:bayes} to derive a Bayes rule.

\item \textit{Positive Measure Overlaps}.
For two manifolds $M_1$ and $M_2$ in $\Z$ with dimensions $d_1$ and $d_2$,
\citeauthor{radul2021base} assume that the intersection $M_1 \cap M_2$
has a lower dimension than $\max(d_1, d_2)$.
This assumption is fine for representing a standalone measure or
density on $\Z$, but is not compatible with path-based disintegration.
Observation maps from different program paths can overlap on a
positive-measure subset of the same rank, and this overlap is precisely where
posterior uncertainty about the branching structure of the program lives.
%
%
In our theory, these overlaps contribute additively because the
global kernel
$\lamXZ(B \given z) = \sum_{\pi\in\Pi} \rd(z)\lamXZ[\pi](\B[\pi]\given z)$
retains every minimal-rank path that can produce $z$.
Each observation value is therefore measured once by $\lamZ$, while posterior
uncertainty over the generating path is represented on the trace side by
the density $\pXZ((\pi,x)\given z)$.

\item \textit{Observation Maps}.
The change-of-variables result in \citet[Theorem 1]{radul2021base}
applies to a diffeomorphic map on manifolds, where each
point has a unique preimage.
Our theory handles noninjective maps, whose images are not necessarily
single embedded manifolds but countably rectifiable (\cref{def:rectifiable}).
Noninjective constant-rank maps include immersions, which can have multiple
zero-dimensional preimages, and submersions, which can have a
positive-dimensional fiber.
To handle such observation maps we derive a coarea formula
(\cref{thm:disintegrate-path}) to disintegrate the trace base measure over
$\obsI(z)$, with a Jacobian correction for the volume change.
\end{itemize}

\vspace{-.25cm}

\subsection{Probabilistic Programming Systems with Disintegration}

On the implementation side,
\citet{narayanan2016,shan2017,narayanan2020symbolic} develop sophisticated
inference techniques in Hakaru that produce disintegrations symbolically
via semantics-preserving program transformations.
These systems are largely formalized using operational rewrite rules,
whereas our contribution provides a mathematical foundation for
disintegrating probabilistic program traces at the measure-theoretic level.
In addition, several modeling constructs handled by our theory are not
handled by existing solver-based implementations.

Certain systems implement the restricted form of disintegration from \cref{sec:mlp-lw},
where the observation map is a projection.
\Citet{lew2020trace} adopt this formalism and build a type system that
rules out mixed-type distributions and other constructs where density
semantics fail.
\Citet{wu2018discrete} and \citet{saad2021sppl} develop generalized
density semantics and probabilistic inference algorithms that avoid
pitfalls of soft conditioning and correctly handle mixed-type distributions
for restricted probabilistic languages that specify Bayesian networks and
sum-product networks, respectively.

\subsection{Language Semantics and Disintegration}

Quasi-Borel spaces~\citep{heunen2017} provide a categorical foundation for
higher-order probability, replacing measurable spaces with a Cartesian
closed setting that supports function types.
This direction is complementary to our work.
We operate in a standard Borel space of trace and observation spaces, and
construct an explicit geometric disintegration of the measure on traces.

\Citet{dahlqvist2020} give a denotational semantics for
probabilistic programs using linear operators on Banach spaces,
where conditioning is modeled as the adjoint of the forward generative operator.
\citep[Remark~25]{dahlqvist2020}.
In contrast to this work, we construct explicit posterior kernels and
densities for exact observations in a space of path-indexed traces, which
can be used directly by probabilistic inference algorithms such as
importance sampling and MCMC.

\Citet{vakar2026} study s-finite kernels as a semantic domain
for probabilistic programs and prove s-finite analogues of Radon--Nikodym
and disintegration theorems.
For our goal of an explicit Bayes rule, an s-finite observation base
measure is not sufficient.
As we have shown, the sum of local pathwise base measures can be s-finite but not
sigma-finite, which makes it challenging to compute densities.
Therefore, our construction specifically uses a sigma-finite base measure,
giving standard Radon--Nikodym densities and a Bayes rule representation of
the disintegration.

\section*{Acknowledgements}

This material is based upon work supported by the National Science Foundation
under Award No.~2311983.
Any opinions, findings and conclusions or recommendations expressed in this
material are those of the authors and do not necessarily reflect the views
of the National Science Foundation.

The author gratefully acknowledges Jesse M.~Michel, whose request for a
coherent mathematical account of conditioning in probabilistic programs
served as an impetus for the present work.

\bibliographystyle{ACM-Reference-Format}
\bibliography{paper}

\AtEndDocument{
  \appendix
\section{Technical Lemmas}
\label{appx:lemmas}

\begin{notation}
\label{notation:measure-space-prod-coprod}
Let $I$ be a countable index set and
$\langle(\X_i, \Sigma_i, \lambda_i)\rangle_{i \in I}$ be a family
of measure spaces.
The product measure space
$(\X, \Sigma, \lambda) = \otimes_{i \in I}(\X_i, \Sigma_i, \lambda_i)$
is defined by
\begin{align}
\X &= \prod_{i \in I} \X_i,
&
\Sigma &\defeq \sigma\left( \set*{\prod_{i \in I} E_i \;\middle|\; E_i \in \Sigma_i}  \right),
&
\lambda\left(\prod_{i \in I}E_i\right) &\defeq \prod_{i \in I} \lambda_i(E_i).
\shortintertext{
In the product construction, if $I$ is infinite then the construction
assumes $\lambda_i(X_i) \in [0,1]$.
The coproduct (disjoint union) measure space
$(\X, \Sigma, \lambda) = \coprod_{i \in I}(\X_i, \Sigma_i, \lambda_i)$
is defined by}
\X &= \coprod_{i \in I} \X_i,
&
\Sigma &\defeq \set*{\coprod_{i \in I} E_i \;\middle|\; E_i \in \Sigma_i},
&
\lambda\left(\coprod_{i \in I}E_i\right) &\defeq \sum_{i \in I} \lambda_i(E_i).
\end{align}
In the coproduct construction, if the sets $\X_i$ are disjoint, we may
implicitly take the disjoint union symbol $\amalg$ to mean the regular
(untagged) union of sets, otherwise it denotes the tagged union.
\end{notation}

\begin{definition}
\label{def:rectifiable}
A set $M \in \Borel(\R^m)$
is called countably $d$-rectifiable if there exists a family
\begin{align}
f_n : A_n \to \R^m && (A_n \in \Borel(\R^d), n \ge 1)
\label{eq:rectifiable}
\end{align}
of Lipschitz functions
and a set $M_0 \subset M$ such that
\begin{align}
\Haus{d}[m][M]\!(M_0) = 0 && M \subset M_0 \cup \bigcup_{n=1}^{\infty} f_n(A_n).
\end{align}
In particular, (almost) all of $M$ can be traced out by only $d$ parameters.
\end{definition}

\begin{remark}
If $M$ is a $C^1$ embedded manifold of dimension $d$, then
$M$ is countably $d$-rectifiable with $\Haus{d}[m](M) > 0$.
\end{remark}

\begin{lemma}
\label{lemma:haus-restrict-rectifiable}
If $M \in \Borel[\R^m]$ is countably $d$-rectifiable, then
$\Haus{d}[m][M]$ is $\sigma$-finite.
\end{lemma}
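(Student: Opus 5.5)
The plan is to cover $M$ by countably many sets of finite $\Haus{d}[m]$-measure, using the parametrization from \cref{def:rectifiable}. Fix Lipschitz maps $f_n : A_n \to \R^m$ with $A_n \in \Borel(\R^d)$, and a set $M_0 \subset M$ with $\Haus{d}[m](M_0)=0$, such that $M \subset M_0 \cup \bigcup_n f_n(A_n)$.

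First, make each parameter domain bounded. Write $A_{n,k} \defeq A_n \cap [-k,k]^d$ for $k \ge 1$. Then
\begin{align}
f_n(A_n) = \bigcup_{k \ge 1} f_n(A_{n,k}).
\end{align}
The countable family $\set{f_n(A_{n,k})}_{n,k}$, together with $M_0$, still covers $M$.

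Second, bound the Hausdorff measure of each image. This uses the standard estimate $\Haus{d}[m](f(A)) \le \mathrm{Lip}(f)^d\,\Haus{d}[d](A)$ for a Lipschitz map $f$. Combined with $\Haus{d}[d] = \Leb[d]$ on $\R^d$ (under the normalization the paper uses for Hausdorff measure), it gives
\begin{align}
\Haus{d}[m](f_n(A_{n,k})) \le \mathrm{Lip}(f_n)^d (2k)^d < \infty.
\end{align}
If the normalization instead differs from Lebesgue by a constant, the bound only changes by that constant, so finiteness is unaffected.

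Third, verify measurability of the covering pieces, which I expect to be the main technical obstacle. Sigma-finiteness of $\Haus{d}[m][M]$ on $\Borel(\R^m)$ requires a countable cover of the ambient space by Borel sets of finite measure. The images $f_n(A_{n,k})$ are analytic but need not be Borel. I would handle this in two steps.

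Replace $A_{n,k}$ by its intersection with a compact set. By inner regularity of $\Leb[d]$, $A_{n,k}$ is a countable union of compact sets $K_{n,k,j}$ plus a Lebesgue-null set $N_{n,k}$.

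The image $f_n(K_{n,k,j})$ is compact, hence Borel, because $f_n$ is continuous. Its measure is finite by the second step.

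The image $f_n(N_{n,k})$ has $\Haus{d}[m]$-measure zero, again by the Lipschitz estimate. Its part inside $M$ can therefore be absorbed into $M_0$, since null sets are handled by measurability of $M$ below.

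With these pieces, define
\begin{align}
G \defeq \bigcup_{n,k,j} f_n(K_{n,k,j}), \qquad N \defeq M \setminus G .
\end{align}
The set $G$ is Borel and $N$ is Borel since $M$ is. Also $N \subset M_0 \cup \bigcup_{n,k} f_n(N_{n,k})$, so $\Haus{d}[m](N) = 0$ by countable subadditivity of the outer measure.

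Finally, conclude. The Borel sets $\R^m \setminus M$, $N$, and $f_n(K_{n,k,j}) \cap M$ cover $\R^m$, and each has finite $\Haus{d}[m][M]$-measure:

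$\Haus{d}[m][M](\R^m \setminus M) = 0$;

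$\Haus{d}[m][M](N) = 0$;

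$\Haus{d}[m][M](f_n(K_{n,k,j}) \cap M) \le \mathrm{Lip}(f_n)^d(2k)^d$.

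Hence $\Haus{d}[m][M]$ is $\sigma$-finite.
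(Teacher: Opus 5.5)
Your proof is correct. Its first two steps, truncating to $A_{n,k} = A_n \cap [-k,k]^d$ and the Lipschitz bound $\Haus{d}[m](f_n(A_{n,k})) \le \mathrm{Lip}(f_n)^d (2k)^d$, are exactly the paper's.

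You diverge from the paper only in the measurability step. The paper observes that $f_n(A_{n,k})$ is analytic. It then invokes Borel regularity of $\Haus{d}[m]$ to replace $M_0$ and each $f_n(A_{n,k})$ by Borel supersets of the same measure, and intersects these with $M$.

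You instead work on the parameter side. Inner regularity of $\Leb[d]$ splits $A_{n,k}$ into compact pieces, whose images are compact and hence Borel, plus a Lebesgue-null remainder whose image is $\Haus{d}[m]$-null. The leftover set $N = M \setminus G$ is then Borel because $M$ is, and it is null by subadditivity of the outer measure.

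The paper's route is shorter. Since Borel regularity supplies Borel hulls for arbitrary sets, its appeal to analyticity is not actually needed. Your route avoids both analytic sets and Borel regularity of Hausdorff measure, using only inner regularity of Lebesgue measure and continuity of $f_n$. It also yields an explicit $\sigma$-compact cover of $M$ up to a Borel null set. Finally, you explicitly include $\R^m \setminus M$, so your cover is of the whole ambient space; the paper leaves that implicit.
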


\begin{proof}
Decompose each set $A_n \subset \R^d$
from \cref{eq:rectifiable} into bounded subsets
$A_{n,k} \defeq A_n \cap [-k,k]^{d}$ for $k \ge 1$.
As $A_{n,k}$ is Borel and $f_n$ is continuous, the image
$f_n(A_{n,k})$ is analytic, and thus Hausdorff measurable.
It also has finite measure from the Lipschitz bound
\begin{equation}
\Haus{d}[d](A_{n,k}) \le \Haus{d}[d]\left([-k,k]^d\right) = (2k)^d < \infty.
\end{equation}
Let $K_n$ be the Lipschitz constant for $f_n$.
The Hausdorff-measure bound for Lipschitz maps~\cite[Corollary 2.10.11]{federer1996} gives
\begin{equation}
\Haus{d}[m](f_n(A_{n,k})) \le K^d_n \cdot \Haus{d}[d](A_{n,k}) < \infty.
\end{equation}
As $A_n = \cup_{k=1}^{\infty}A_{n,k}$,
by Borel regularity of $\Haus{d}[m]$ we may enlarge $M_0$ and each
$f_n(A_{n,k})$ to Borel supersets of the same measure.
Intersecting these sets with $M$ gives a countable Borel cover of $M$
by sets of finite $\Haus{d}[m][M]$ measure.
\end{proof}

\begin{lemma}
\label{lemma:rectifiable-haus-zero}
If $M \in \Borel[\R^m]$ is countably $d$-rectifiable, then
$\Haus{d'}[m](M) = 0$ for all $d' > d$.
\end{lemma}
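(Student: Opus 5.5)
The plan is to use the same decomposition as in the proof of \cref{lemma:haus-restrict-rectifiable}, combined with the behaviour of Hausdorff measures under Lipschitz images. Write $M \subset M_0 \cup \bigcup_{n} f_n(A_n)$, with $\Haus{d}[m](M_0)=0$ and each $f_n: A_n \to \R^m$ Lipschitz on a Borel set $A_n \subset \R^d$. By countable subadditivity of the outer measure $\Haus{d'}[m]$ it suffices to show two things: that $\Haus{d'}[m](M_0)=0$, and that $\Haus{d'}[m](f_n(A_n))=0$ for each $n$.

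\textbf{Step 1 (the null set $M_0$).} Use the standard monotonicity of Hausdorff measures in the dimension. If $\Haus{d}[m](E)<\infty$, then $\Haus{d'}[m](E)=0$ for all $d'>d$. The reason is that for any $\delta$-cover $\set{C_j}$ of $E$,
\begin{equation*}
\sum_j \operatorname{diam}(C_j)^{d'} \le \delta^{d'-d}\sum_j \operatorname{diam}(C_j)^{d},
\end{equation*}
so $\Haus{d'}_\delta(E) \le \delta^{d'-d}\,\Haus{d}_\delta(E) \le \delta^{d'-d}\,\Haus{d}[m](E)$. Letting $\delta\to 0$ gives zero. The normalizing constants in the definitions of $\Haus{d}$ and $\Haus{d'}$ only introduce a finite multiplicative factor, which does not affect the conclusion. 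Applying this with $E=M_0$, which has $\Haus{d}[m](M_0)=0<\infty$, yields $\Haus{d'}[m](M_0)=0$.

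\textbf{Step 2 (the Lipschitz pieces).} Split $A_n=\bigcup_k A_{n,k}$ with $A_{n,k}\defeq A_n\cap[-k,k]^d$, exactly as in the proof of \cref{lemma:haus-restrict-rectifiable}. That proof shows $\Haus{d}[m](f_n(A_{n,k})) \le K_n^d (2k)^d<\infty$, using \cite[Corollary 2.10.11]{federer1996}. Step 1 then gives $\Haus{d'}[m](f_n(A_{n,k}))=0$. A countable union over $n$ and $k$ finishes the proof: $\Haus{d'}[m](M)=0$.

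\textbf{Expected obstacle.} The only delicate point is the dimension-monotonicity fact in Step 1, and specifically the choice of normalization convention for $\Haus{d}$. I would state it as an outer-measure inequality, so that no measurability issues arise. The images $f_n(A_{n,k})$ need not be Borel, but subadditivity of the outer measure suffices, so there is no need to invoke their analyticity here.
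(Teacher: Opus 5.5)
Your proof is correct and takes essentially the same route as the paper. The paper gets a cover of $M$ by sets of finite $\Haus{d}[m]$-measure from \cref{lemma:haus-restrict-rectifiable}, which uses exactly your $A_{n,k}$ decomposition and Lipschitz bound, and then cites \citet[2.10.2(1)]{federer1996} for the dimension-monotonicity fact that you prove directly in Step 1 (your direct use of outer measure and subadditivity also lets you skip the paper's Borel-regularity step for keeping the cover inside $M$).
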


\begin{proof}
By \cref{lemma:haus-restrict-rectifiable}, $\Haus{d}[m][M]$ is sigma-finite,
so there exists a cover $\cup_{j=1}^{\infty} M_j = M$
such that $M_j \subset M$ for each $j$ and $\Haus{d}[m](M_j) < \infty$.
Therefore, $\Haus{d'}[m](M_j) = 0$ for all $d' > d$
\citep[2.10.2(1)]{federer1996}.
By countable additivity, we have
$\Haus{d'}[m](M) \le \sum_{j=1}^{\infty} \Haus{d'}[m](M_j) = 0$.
\end{proof}

\begin{lemma}
\label{lemma:c1-borel-rectifiable}
Let $U \subset \R[d]$ be open and $f: U \to \R[n]$ be a $C^1$ function.
If there exists $m \le \min(d,n)$ such that $\rank(f'(x))=m$ for all $x \in U$,
then $f(U) \subset \R[n]$ is Borel and countably $m$-rectifiable.
\end{lemma}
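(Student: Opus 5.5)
The natural tool is the constant rank theorem. Since $\rank f'(x) = m$ on the open set $U$, every point $x \in U$ has an open neighborhood $V_x \subset U$ with the following property: there are $C^1$ diffeomorphisms $\phi_x : V_x \to \phi_x(V_x) \subset \R[d]$ and $\psi_x$ defined on an open neighborhood $W_x \supset f(V_x)$ such that
\begin{equation*}
\psi_x \circ f \circ \phi_x^{-1}(u_1,\dots,u_d) = (u_1,\dots,u_m,0,\dots,0).
\end{equation*}
Consequently $f(V_x) = \psi_x^{-1}\bigl(\{(u_{1:m},0) \mid u \in \phi_x(V_x)\}\bigr)$. After shrinking $V_x$ to a product box in the $\phi_x$ coordinates, this set is the image of an open set $P_x \subset \R[m]$ (the projection of the box) under the $C^1$ embedding $g_x(v) \defeq \psi_x^{-1}(v,0)$.

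Because $\R[d]$ is second countable, countably many such neighborhoods $V_{x_j}$ cover $U$, so $f(U) = \bigcup_j g_{x_j}(P_{x_j})$.

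\textbf{Rectifiability.} Each $g_j \defeq g_{x_j}$ is $C^1$ on the open set $P_j$ but need not be globally Lipschitz. I would exhaust $P_j$ by countably many compact sets, for instance closed cubes $K_{j,k}$ with rational corners contained in $P_j$. On each compact convex cube $K_{j,k}$, $g_j$ is Lipschitz by the mean value inequality, since $\sup \|g_j'\|$ is finite there. Then $f(U) = \bigcup_{j,k} g_j(K_{j,k})$ is covered by countably many Lipschitz images of Borel subsets of $\R[m]$. This gives \cref{def:rectifiable} with $M_0 = \varnothing$.

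\textbf{Borelness.} The same decomposition settles this. Each $g_j(K_{j,k})$ is the continuous image of a compact set, hence compact and in particular Borel. Therefore $f(U)$ is a countable union of compact sets, i.e.\ $\sigma$-compact and Borel.

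\textbf{Main obstacle.} The main technical point is the careful local setup: arranging that $f(V_x)$ is \emph{exactly} a graph-like set $g_x(P_x)$, rather than merely a subset of one. This requires choosing $V_x$ as the preimage of a product neighborhood in the rank-theorem coordinates, so that the projection onto the first $m$ coordinates of $\phi_x(V_x)$ is precisely $P_x$. If exact equality proves awkward, the fallback is to note that $f(K)$ is compact for each compact $K \subset U$. Exhausting $U$ by compacts then gives Borelness directly, and containment of $f(U)$ in the countable union of Lipschitz images suffices for rectifiability.

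The degenerate case $m = 0$ needs a separate remark: $f$ is then locally constant, so $f(U)$ is countable because $U$ has countably many components. This is consistent with the convention that $\R[0]$ is a point.
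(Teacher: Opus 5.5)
Your proposal is correct and follows essentially the same route as the paper: the constant-rank theorem makes each local image $f(V_x)$ a $C^1$ embedded $m$-dimensional piece, second countability gives a countable subcover, and the countable union inherits Borelness and countable $m$-rectifiability. You spell out what the paper asserts in one line, using Lipschitz restrictions to compact cubes for rectifiability and $\sigma$-compactness for Borelness. Your ``main obstacle'' does not arise: the projection of the open set $\phi_x(V_x)$ onto the first $m$ coordinates is already open, and containment alone suffices for the definition of countable rectifiability.
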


\begin{proof}
By the constant-rank theorem, for each $x \in U$ there exists an open neighborhood
$V_x \subset U$ such that $f(V_x)$
is a $C^1$ embedded $m$-dimensional submanifold of $\R[n]$,
in particular it is Borel and countably $m$-rectifiable.
As $U$ has a countable base, choose points $x_j \in U$, $j\ge 1$ such that
$(V_{x_j})_{j \ge 1}$ covers $U$.
Then $f(U) = f(\cup_{j=1}^{\infty}V_{x_j}) = \cup_{j=1}^{\infty} f(V_{x_j})$.
As the union is countable, the conclusion follows.
\end{proof}

\begin{lemma}
\label{lemma:coarea-c1}
Let $\X \in \Borel[\R[d]]$ be measurable,
let $U \subseteq \R[d]$ be open with $\X\subseteq U$, and
let $\bar f:U\to \R[n]$ be $C^1$.
Suppose there is $m\le \min(d,n)$ such that
$\rank(\bar f'(w))=m$ for all $w\in U$.
Let
$f \defeq \restr{\bar f}{\X}$,
and
$\Z \defeq \bar{f}(U)$.
Define the Jacobian
$\Jac[f](w) \defeq \prod_{i=1}^{m}\sigma_i(\bar{f}'(w))$
as the product of the first $m$ nonzero singular values of $\bar{f}'(w)$,
with the convention $\Jac[f](w)=1$ when $m=0$.
Then for every
$B\in \restr{\Borel[\R[d]]}{\X}$,
\begin{equation}
\Leb[d](B) =
  \int_{\Z} \left[
      \int_{f^{-1}(z)} \frac{\ind[B]{w}}{\Jac[f](w)} \,\Haus{d-m}[d](\d w)
    \right] \Haus{m}[n](\d z).
\end{equation}
\end{lemma}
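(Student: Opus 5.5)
The plan is to deduce the lemma from Federer's general coarea formula \citep[Theorem~3.2.22]{federer1996} for Lipschitz maps between rectifiable sets. That formula says: if $W\subset\R[d]$ is $(\Haus{d}[d],d)$-rectifiable, $Z\subset\R[n]$ is $(\Haus{m}[n],m)$-rectifiable, $f:W\to Z$ is Lipschitz, and $g\ge 0$ is $\Haus{d}[d]$-integrable on $W$, then
\begin{equation*}
\int_W g(w)\,\mathrm{ap}J_m f(w)\,\Haus{d}[d](\d w)=\int_Z\int_{f^{-1}(z)} g\,\d\Haus{d-m}[d]\,\Haus{m}[n](\d z).
\end{equation*}
We will take $g\defeq \ind[B]{}/\Jac[f]$. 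The factor $\Jac[f]$ then cancels the approximate Jacobian, and the left side becomes $\Haus{d}[d](B)=\Leb[d](B)$.

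\textbf{Step 1 (localization).} The map $\bar f$ is only $C^1$, not globally Lipschitz, and $\Jac[f]$ may degenerate near $\partial U$. So cover $U$ by countably many compact sets, for instance closed balls $K_j$ with $K_j\subset U$, and disjointify them into Borel pieces $P_j\defeq K_j\setminus\bigcup_{i<j}K_i$. On each $K_j$:
\begin{itemize}
\item $\bar f$ is Lipschitz, because $\bar f'$ is continuous and hence bounded on a compact convex neighbourhood;
\item $\Jac[f]$ is continuous and positive, because the product of the top $m$ singular values is continuous on the set of constant-rank-$m$ matrices;
\item $\Jac[f]$ is therefore bounded above and below.
\end{itemize}

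\textbf{Step 2 (identify the Jacobian).} On the interior of $U$, $\bar f$ is differentiable, so the approximate derivative equals $\bar f'(w)$. By Federer's definition, $\mathrm{ap}J_m f(w)$ is the norm of $\wedge_m \bar f'(w)$, which is $\sqrt{\det}$ of the sum of squares of the $m\times m$ minors. Using the SVD this equals $\prod_{i\le m}\sigma_i(\bar f'(w))$, which is $\Jac[f](w)$. When $m=0$ both sides equal $1$ by convention.

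\textbf{Step 3 (rectifiability of the target).} By \cref{lemma:c1-borel-rectifiable}, $\Z=\bar f(U)$ is Borel and countably $m$-rectifiable, and $\Haus{m}[n]$ restricted to $\Z$ is $\sigma$-finite by \cref{lemma:haus-restrict-rectifiable}. The domain $B\cap P_j$ is a Borel subset of $\R[d]$, so it is trivially $d$-rectifiable.

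\textbf{Step 4 (apply and sum).} Apply the coarea formula on each piece with $g_j\defeq\ind[B\cap P_j]{}/\Jac[f]$, which is bounded on the bounded set $P_j$. This gives
\begin{equation*}
\Leb[d](B\cap P_j)=\int_\Z\int_{f^{-1}(z)}\frac{\ind[B\cap P_j]{w}}{\Jac[f](w)}\Haus{d-m}[d](\d w)\,\Haus{m}[n](\d z).
\end{equation*}
Summing over $j$ and using monotone convergence (Tonelli) on both levels of the integral yields the claim. Two points need care here:
\begin{itemize}
\item Points $z\notin\Z$ have empty fibre, so integrating over $\Z$ loses nothing.
\item Since $B\subset\X$, we have $f^{-1}(z)\cap B=\bar f^{-1}(z)\cap B$, so the restriction to $\X$ is harmless.
\end{itemize}

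I expect the main obstacle to be measurability. Federer's theorem also asserts that the inner integral is $\Haus{m}[n]$-measurable as a function of $z$, but only for $\Haus{d}[d]$-measurable $g$ and under his rectifiability hypotheses. So I would check carefully that the hypotheses of 3.2.22 are matched: the target must be $\Haus{m}[n]$-measurable and $(\Haus{m}[n],m)$-rectifiable, which is where Step 3 and $\sigma$-finiteness enter. As a fallback, I would invoke the Euclidean coarea formula 3.2.12 for Lipschitz $\bar f:\R[d]\to\R[m]$ locally, after composing with the $C^1$ charts from the constant-rank theorem. In those charts $\bar f$ factors as a submersion onto an $m$-dimensional embedded submanifold, and the area formula then transfers $\Haus{m}[m]$ to $\Haus{m}[n]$ on the image with the appropriate Jacobian.
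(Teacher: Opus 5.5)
Your proposal is correct and is essentially the paper's proof. Both arguments localize to compact convex pieces of $U$ on which $\bar f$ is Lipschitz, disjointify them into a Borel partition, obtain rectifiability of $\bar f(U)$ from \cref{lemma:c1-borel-rectifiable}, identify Federer's approximate Jacobian with $\Jac[f]$, apply \citet[Theorem~3.2.22]{federer1996} to $g_j = \ind[B\cap P_j]{}/\Jac[f]$, and sum with Tonelli. The only difference concerns the measurability issue you flagged: the paper additionally makes the fiber integral a Borel function of $z$ via \citet[Theorem~6.1]{MattilaMauldin1997} and a monotone-class argument, whereas you rely on Federer's own $\Haus{m}[n]$-measurability assertion, which already suffices for the identity as stated.
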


\begin{proof}
This result follows from the coarea formula
in \citet[Theorem 3.2.22]{federer1996}.
As the coarea formula is stated for Lipschitz maps, we establish
the result for our assumption that $\bar{f}$ is $C^1$.

By \cref{lemma:c1-borel-rectifiable}, $\Z= \bar{f}(U)$ is Borel and countably
$m$-rectifiable.
Choose compact convex sets $K_j\subseteq U$, $j\ge 1$, with
$U=\bigcup_{j=1}^{\infty}K_j$,
and define the Borel partition
\begin{equation}
E_j \defeq \X \bigcap \left(  K_j\setminus\bigcup_{\ell=1}^{j-1} K_\ell \right).
\end{equation}
Since $\bar{f}$ is $C^1$, its derivative is bounded on each $K_j$.
Thus $\restr{f}{E_j}:E_j\to \Z$ is Lipschitz.

Each $E_j$ is Borel, hence countably $d$-rectifiable and
$\Haus{d}[d]$-measurable.
Therefore the coarea formula
from \citet[Theorem 3.2.22]{federer1996} applies to
$\restr{f}{E_j}:E_j\to \Z$.
In particular, by the measurability assertion in the coarea formula,
for every nonnegative Borel function $g:\X\to[0,\infty]$,
the map
\begin{equation}
z\longmapsto \int_{f^{-1}(z)\cap E_j} g(w)\,\Haus{d-m}[d](\d w)
\end{equation}
is Borel measurable.
Indeed, for every compact set $C\subset U$,
the map
$z\mapsto\Haus{d-m}[d](C\cap\bar f^{-1}(z))$
is Borel~\citep[Theorem~6.1]{MattilaMauldin1997}.
These measures are finite on compact subsets of $U$ by the
constant-rank assumption.
A standard monotone-class argument on
each $K_j$ extends the conclusion to every nonnegative
Borel integrand.

In our $C^1$ setting, the approximate Jacobian in \citeauthor{federer1996}
agrees with the ordinary Jacobian $\Jac[f]$.
Applying coarea on $E_j$ with the function
\begin{equation}
g_j(w) \defeq \frac{\ind[B\cap E_j]{w}}{\Jac[f](w)}
\end{equation}
gives
\begin{equation}
\Haus{d}[d](B\cap E_j)
     = \int_Z \left[
          \int_{f^{-1}(z)}
          \frac{\ind[B\cap E_j]{w}}
               {\Jac[f](w)}
          \Haus{d-m}[d](\d w)
     \right]
     \Haus{m}[n](\d z).
\end{equation}
Summing over $j$ and applying Tonelli's theorem yields
\begin{equation}
\Haus{d}[d](B) = \int_Z \left[
    \int_{f^{-1}(z)} \frac{\mathbf 1_B(w)} {\Jac[f](w)}
    \Haus{d-m}[d](\d w)
    \right]
  \Haus{m}[n](\d z).
\end{equation}
The Hausdorff measure $\Haus{d}[d]$ agrees with the Lebesgue
$\Leb[d]$ on $\R[d]$ by the standard normalization.
\end{proof}

\section{Deferred Proofs}
\label{appx:proofs}

\getkeytheorem{prop:rcd-disintegration}

\begin{proof}
For $B \in \sigma(t)$, write $B = \obsI(C)$ for some $C \in \SigZ$.
Then, for any $A \in \SigX$, we have
\begin{equation}
\int_{B} \muX(A \given \sigE{})(x) \muX(\d x)
\defeq
\int_{B} \muXZ(A \given t(x)) \muX(\d x)
=
\int_{C} \muXZ(A \given z) \muZ(\d z)
=
\muX(A \cap \obsI(C)).
\end{equation}
The first equality follows by definition and the second by the fact that
$\muZ = \pushf{t}{\muX}$.
The equality follows from the fiber-support condition
$\muXZ(\obsI(z) \given z) = 1$, which implies that for fixed $z \in \Z$,
\begin{equation}
\muXZ(A \cap \obsI(C) \given z) = \begin{cases}
  \muXZ(A \given z) & \mbox{if } z \in C \\
  0 &\mbox{if } z \notin C
  \end{cases}
\;
=
\ind[C]{z} \muXZ(A \given z).
\end{equation}
Whence
\begin{equation}
\muX(A \cap \obsI(C))
= \int_{\Z} \muXZ(A \cap \obsI(C) \given z) \muZ(\d z)
= \int_{\Z} \ind[C]{z} \muXZ(A \given z) \muZ(\d z).
\end{equation}
\end{proof}

\getkeytheorem{thm:bayes}

\begin{proof}

To establish \zcref{thm:bayes-sigma-finite}, we show there exists
a countable cover $\X = \cup_{n=1}^{\infty} K_n$ of events $K_n \in \SigX$,
independent of $z$, such that
\begin{align}
\lamXZ(K_n \given z) < \infty && (\mbox{for }\lamZ\mbox{-a.e.} z \mbox{ and all } n \ge 1).
\end{align}
Since $\lamX$ is $\sigma$-finite,
let $K_n$ be a covering of $\X$ such that
$\lamX(K_n) < \infty$ for all $n$.
Define $f_n(z) \defeq \lamXZ(K_n \given z) \in [0, \infty]$.
Then
\begin{equation}
\int_{\Z} f_n(z) \lamZ(\dz) = \lamX(K_n) < \infty,
\end{equation}
so $f_n < \infty$ for $\lamZ$-a.e.~$z$.
That is, $\lamZ(M_n) = 0$ for $M_n \defeq \set{z \in \Z \mid f_n(z) = \infty}$.
Put $M \defeq \cup_{n=1}^{\infty}M_n$, with $\lamZ(M) = 0$.
For any $z \notin M$ and every $n \ge 1$,
\begin{align}
\lamXZ(K_n \given z) = f_n(z) < \infty,
\end{align}
which shows that $\lamXZ(\cdot \given z)$ is $\sigma$-finite for every
$z \notin M$.

To establish \zcref[S]{thm:bayes-marginal-z}, for $A \in \SigZ$
\begin{align}
\muZ(A)
= \muX(t^{-1}(A))
&= \int_{t^{-1}(A)} \pX(x)\lamX(\d x) \\
&= \int_{\Z} \int_{t^{-1}(A)} \pX(x) \lamXZ(\dx \given z)\lamZ(\dz) \label{eq:achlorhydria} \\
&= \int_{A} \int_{t^{-1}(z)} \pX(x) \lamXZ(\dx \given z)\lamZ(\dz) \\
&\equiv \int_{A} \pZ(z) \lamZ(\dz),
\end{align}
where the second-to-last equality follows from the fact that $\lamXZ(\dx \given z)$
lives on the fiber $t^{-1}(z)$, so that the domain of integration in
\cref{eq:achlorhydria} is precisely
\begin{equation}
\set*{(z,x) \mid z \in \Z, x \in t^{-1}(A), x \in t^{-1}(z) } = \set*{(z,x) \mid z \in A, x \in t^{-1}(z)}.
\end{equation}
To show that $\pZ(z) \in (0, \infty)$, $\muZ$-almost surely, assume for
contradiction that
there exists a positive probability
event $N_\infty \subset \set{z \in \Z \mid \pZ(z) = \infty}$,
so that $\muZ(N_\infty) > 0$.
As $\muZ \ll \lamZ$, it follows that $\lamZ(N_\infty) > 0$,
and so
\begin{equation}
\muZ(N_\infty) = \int_{N_\infty} \pZ(z) \lamZ(\dz) = \infty,
\end{equation}
which cannot hold, as $\muZ$ is a probability measure.
That $N_0 \defeq \set{z \in \Z \mid \pZ(z) = 0}$ has probability
zero is immediate:
\begin{equation}
\muZ(N_0) = \int_{N_0} \pZ(z) \lamZ(\dz) = 0.
\end{equation}

To establish \cref{thm:bayes-posterior-x}, for $A \in \SigX$
\begin{align}
\muX(A)
= \int_{A} \pX(x)\lamX(\d x)
&= \int_{\Z} \int_{A} \pX(x)\lamXZ(\dx\given z) \lamZ(\dz) \\
&= \int_{\Z} \left[ \int_{A} \frac{\pX(x)}{\pZ(z)} \lamXZ(\dx\given z) \right] \pZ(z) \lamZ(\dz) \\
&= \int_{\Z} \int_{A} \frac{\pX(x)}{\pZ(z)} \lamXZ(\dx\given z) \muZ(\dz) \\
&\equiv \int_{\Z} \muXZ(A\given z) \muZ(\dz),
\end{align}
where the division by $\pZ(z)$ is secure, as we have shown that
$\pZ(z) \in (0,\infty)$, $\muZ$-almost surely.
\end{proof}

\getkeytheorem{thm:sf-local-base}

\begin{proof}
By \cref{lemma:c1-borel-rectifiable}, the set $\Z[\pi,a]$ is countably $\mpath$-rectifiable for
each $a \in \Apath$.
Because $\Apath$ is countable, $\Z[\pi] \defeq \cup_{a \in \Apath} \Z[\pi,a]$
is also countably $\mpath$-rectifiable.
The result follows from \cref{lemma:haus-restrict-rectifiable}.
\end{proof}

\getkeytheorem{thm:disintegrate-path}

\begin{proof}
Apply \cref{lemma:coarea-c1} with
$\X = \X[\pi,a]$,
$U = \U[\pi,]a$,
$\bar{f} = \obs*[\pi,a]$,
$f = \obs[\pi,a]$,
$\Z = \Z[\pi,a]$, and
$B = \B[\pi,a].$
\end{proof}

\getkeytheorem{thm:disintegration-local}

\begin{proof}
Using \cref{thm:disintegrate-path}, we have
\begin{equation}
\lamX[\pi](B)
=
\sum_{a\in \Apath} \Leb[\dpath](\B[\pi,a])
=
\sum_{a\in \Apath}
  \int_{\Z[\pi,a]}
    \int_{\obsI[\pi,a](z)}
      \frac{1_{B_{\pi,a}}(w)}{\Jac(w)}
        \Haus{\dpath-\mpath}[\dpath](\d w)
    \Haus{\mpath}[\npath](\d z).
\end{equation}
For each $a\in \Apath$, the coarea formula integrates over
$\Z[\pi,a] = \obs*[\pi,a](\U[\pi,a])$.
Since $\Z[\pi,a]\subseteq \Z[\pi]\defeq\bigcup_{a\in \Apath}\Z[\pi,a]$,
and the inner integral is zero for $z\notin \Z[\pi,a]$,
we may extend each outer integral to $\Z[\pi]$.
Thus, by Tonelli,
\begin{equation}
\sum_{a\in \Apath}\int_{\Z[\pi,a]} I_a(z)\Haus{\mpath}[\npath](\d z)
=
\int_{\Z[\pi]}\sum_{a\in \Apath} I_a(z)\Haus{\mpath}[\npath](\d z)
=
\int_{\Z[\pi]}\lamXZ[\pi](B \given z)\lamZ[\pi](\d z),
\end{equation}
where $I_a(z)$ is the inner fiber integral.
The last equality holds by the definitions of
$\lamXZ[\pi]$ and $\lamZ[\pi]$.
\end{proof}

\getkeytheorem{thm:lamZ-sigma-finite}

\begin{proof}
Observe the base measure is equal to a sum of restricted Hausdorff measures:
\begin{align}
\lamZ(B) = \sum_{n=0}^{\infty} \sum_{m=0}^{n} \Haus{m}[n][\Zeq{n}{m}]\!\left(B \cap \R[n]\right)
&& (B \in \SigZ).
\end{align}
We show that each summand is sigma-finite.
By \cref{lemma:c1-borel-rectifiable}, each $\Z[\pi,a]$ is countably $\mpath$-rectifiable,
which implies the same for $\Z[\pi]$.
Therefore $\Zle{n}{m}$ and $\Zeq{n}{m}$ are both countably $m$-rectifiable.
By \cref{lemma:haus-restrict-rectifiable}, the restricted measures
$\Haus{m}[n][\Zeq{n}{m}]$ are each sigma-finite.

We now construct a countable cover of $\Z$ by sets of finite $\lamZ$-measure.
The collection
\begin{equation}
\set*{\Zeq{n}{m} \mid n \ge 0, m = 0, \dots, n}
\end{equation}
is countable and pairwise disjoint.
For each $(n,m)$ fix a countable cover $\Zeq{n}{m} = \cup_{j=1}^{\infty} E_{n,m,j}$
such that $\Haus{m}[n](E_{n,m,j}) < \infty$.
As the strata are disjoint, we have $\lamZ(E_{n,m,j}) = \Haus{m}[n](E_{n,m,j}) < \infty$.
Together with the complement of
$\cup_{n,m} \Zeq{n}{m}$---which has $\lamZ$-measure zero---the
sets $(E_{n,m,j}\, ; n \ge 0, m = 0,\dots,n, j \ge 1)$ form a countable cover
of $\Z = \cup_{n=0}^{\infty}\R[n]$.
The conclusion follows.
\end{proof}

\getkeytheorem{prop:rd-local-global}

\begin{proof}
Fix a path $\pi \in \Pi$ and event $C \in \SigZ$.
As $\Z[\pi] \subset \R[\npath]$, it is sufficient to consider
the strata inside the $\npath$-summand.
Consider the decomposition
\begin{align}
C \cap \Z[\pi] = \cup_{m=0}^{\npath} C \cap \Z[\pi] \cap \Zeq{\npath}{m}.
\label{eq:rhyptic}
\end{align}
Consider the following two cases on $m$.
\begin{enumerate}[label=(Case \arabic*),wide]
\item For $\mpath < m$, as the $\Zeq{n}{m}$ are disjoint for each $n,m$
we have
\begin{align}
\Z[\pi] \subset \Zle{\npath}{\mpath} \subset \Zle{\npath}{m-1}
\implies
\Z[\pi] \cap \Zeq{\npath}{m} = \varnothing.
\end{align}

\item For $m < \mpath$, the set $\Zeq{\npath}{m}$ is countably $m$-rectifiable
by \cref{lemma:c1-borel-rectifiable}.
Therefore
\begin{equation}
\Haus{\mpath}[\npath]\!\left(C \cap \Z[\pi] \cap \Zeq{\npath}{m}\right) = 0,
\end{equation}
because $\Zeq{\npath}{m}$ is countably $m$-rectifiable,
which by \cref{lemma:rectifiable-haus-zero} has zero $\Haus{\mpath}$ measure.
\end{enumerate}

It follows that the only member in the decomposition of $C \cap \Z[\pi]$ in
\cref{eq:rhyptic} that has nonzero measure is the minimal stratum
$\Zeq{\npath}{\mpath}$.
Therefore
\begin{align}
\lamZ[\pi](C)
= \Haus{\mpath}[\npath](C \cap \Z[\pi])
= \Haus{\mpath}[\npath]\!\left(C \cap \Z[\pi] \cap \Zeq{\npath}{\mpath}\right).
\end{align}
But on $\Zeq{\npath}{\mpath}$, the global base measure $\lamZ$ is precisely
$\restr{\Haus{\mpath}[\npath]}{\Z{\npath}{\mpath}}$.
Therefore
\begin{align}
\lamZ[\pi](C)
= \lamZ\left(C \cap \Z[\pi] \cap \Zeq{\npath}{\mpath}\right)
= \int_{C} \ind[\Z[\pi]]{z} \ind{\rank(z) = \mpath} \lamZ(\d z).
\end{align}

\end{proof}

\getkeytheorem{thm:disintegration-global}

\begin{proof}
The result follows from \cref{thm:disintegration-local,prop:rd-local-global} and Tonelli's theorem:
\begin{align}
\lamX(B)
&\defeq \sum_{\pi \in \Pi} \lamX[\pi](\B[\pi]) \\
&= \sum_{\pi \in \Pi} \int_{\Z[\pi]} \lamXZ[\pi](\B[\pi] \mid z) \lamZ[\pi](\d z) && \mbox{(\cref{thm:disintegration-local})} \\
&= \sum_{\pi \in \Pi} \int_{\Z} \rd(z) \lamXZ[\pi](\B[\pi] \mid z) \lamZ(\d z) && \mbox{(\cref{prop:rd-local-global})} \\
&= \int_{\Z} \sum_{\pi \in \Pi} \rd(z) \lamXZ[\pi](\B[\pi] \mid z) \lamZ(\d z) && \mbox{(Tonelli's Theorem)}.
\end{align}
\end{proof}

\getkeytheorem{thm:importance-sampling}

\begin{proof}[Proof of \Cref{thm:importance-sampling}]
To fix notation, we are importance sampling the posterior distribution
$\muXZ(\cdot \given z)$ using a proposal distribution $\QXZ$ that performs ancestral
sampling.
For any pair $(\pi,x)$ of a path and addressed trace, let $l'(\pi,x)$ denote the product
of the primitive densities of the constrained variables along the path.
Let $F_z \defeq \obsI(z) = \set{(\pi,x) \mid \obs[\pi](x) = z}$ denote the fiber supporting the posterior
distribution, and let $\QXZ' \defeq \restr{\QXZ}{F_z}$ be the restriction of the proposal
to this fiber, so that $\QXZ' \ll \lamXZ*(\cdot \given z)$.
By the ancestral sampling property, we have
\begin{align}
\pX(\pi,x) = l'(\pi,x) \frac{\d \QXZ'}{\d\lamXZ*(\cdot\given z)}(\pi,x)
&& ((\pi,x) \in F_z).
\end{align}
Therefore, $\muXZ(\cdot\given z)\ll \QXZ'$ and
the importance weight
\begin{align}
w(\pi,x)
=
\frac{\d \muXZ(\cdot \given z) }{ \d \QXZ'  }(\pi,x)
=
  {\frac{\d \muXZ(\cdot \given z) }{ \d \lamXZ*(\cdot\given z)  }(\pi,x)}
  \Bigg/
  {\frac{\d \QXZ'(\cdot \given z) }{ \d \lamXZ*(\cdot\given z)  }(\pi,x)}
= \frac{l'(\pi,x) \rd(z)}{\pZ(z)}
\end{align}
is proportional to the usual likelihood weight from ancestral sampling, multiplied
by the rank-selection indicator.
Because $\muXZ(\cdot \given z)$ is supported on $F_z$, we can extend this
density by $0$ off $F_z$ to obtain the density with respect to the full proposal:
\begin{align}
\frac{\d \muXZ(\cdot \given z) }{ \d \QXZ(\cdot)  }(\pi,x)
= \frac{\d \muXZ(\cdot \given z) }{ \d \QXZ'(\cdot)  }(\pi,x)
  \frac{\d \QXZ'(\cdot) }{ \d \QXZ(\cdot)  }(\pi,x)
= \frac{l'(\pi,x) \rd(z)}{\pZ(z)} \ind{\obs[\pi](x)=z}.
\end{align}
Let $l(\pi,x) \defeq l'(\pi,x)\ind{\obs[\pi](x)=z}$.
We write the unnormalized weight as
\begin{align}
\widetilde{w}(\pi,x) \defeq l(\pi,x) \rd(z) = l(\pi,x)1(\mpath[\pi]=\rank(z)),
\end{align}
noting that as $l(\pi,x)>0$ only if $\obs[\pi](x) = z$, the factor
$\ind{z\in \Z[\pi]}$ in $\rd(z)$ may be omitted.

We seek to estimate
\begin{align}
\Expect[\muXZ(\cdot \given z)]{f(\bm\pi,X)}
&= \int_{\X} f(\pi,x)\,\muXZ(\d(\pi,x)\mid z).
\end{align}

The standard importance-sampling identity gives
\begin{align}
\Expect[\QXZ]{\widetilde{w}(\bm\pi,X)}
= \Expect[\QXZ]{l(\bm\pi,X)\ind{\mpath[\bm\pi]=\rank(z)}}
= \pZ(z) > 0.
\end{align}
Thus,
the event
\begin{align}
A \defeq \set{\widetilde{w}(\bm\pi,X)>0} \equiv \set{l(\bm\pi,X)>0, \mpath[\bm\pi]=\rank(z)}
\end{align}
has positive $\QXZ$-probability.

For the i.i.d.~sequence
$((\bm\pi_j, X_j))_{j=1}^{\infty} \sim \QXZ[\infty]$
of importance samples, define the events
\begin{equation}
A_j\defeq \set{l(\bm\pi_j,X_j)>0,\ \mpath[\bm\pi_j]=\rank(z)}
\end{equation}
The events $A_j$ are independent and identically distributed, so
\begin{equation}
\sum_{j=1}^\infty \QXZ[\infty](A_j) = \sum_{j=1}^\infty \QXZ(A) = \infty.
\end{equation}
By the second Borel--Cantelli lemma, $A_j$ occurs i.o., $\QXZ[\infty]$-a.s.
In particular, the hitting time is finite:
\begin{align}
\tau\defeq \inf\set{j\ge 1\mid l(\bm\pi_j,X_j)>0, \mpath[\bm\pi_j]=\rank(z)} < \infty,
&& \QXZ[\infty]\mbox{-a.s.}
\end{align}
Moreover, if $l(\bm\pi_j,X_j)>0$, then $z\in \Z[\pi_j]$, so by definition
of $\rank(z)$, we must have $\mpath[\pi_j] \ge \rank(z)$.
Therefore, $\rank_N(z)=\rank(z)$ for every $N\ge \tau$,
so for all sufficiently large $N$, we have
\begin{align}
\widehat{\mathbb E}^z_N[f]
= \frac
  { \sum_{j=1}^N f(\bm\pi_j,X_j)l(\bm\pi_j,X_j)\ind*{\mpath[\bm\pi_j]=\rank(z)} }
  { \sum_{j=1}^N l(\bm\pi_j,X_j)\ind*{\mpath[\bm\pi_j]=\rank(z)} },
&& \QXZ[\infty]\mbox{-a.s.}
\end{align}

As $f$ is bounded and
$\Expect[\QXZ]{\widetilde{w}(\bm\pi,X)}=\pZ(z)<\infty$, the strong law of large numbers
gives
\begin{align}
\frac{1}{N} \sum_{j=1}^N f(\bm\pi_j,X_j)l(\bm\pi_j,X_j)\ind*{\mpath[\bm\pi_j]=\rank(z)}
&\longrightarrow \Expect[\QXZ]{f(\bm\pi,X)l(\bm\pi,X)\ind*{\mpath[\bm\pi]=\rank(z)}},
\\
\frac{1}{N} \sum_{j=1}^N l(\bm\pi_j,X_j)\ind*{\mpath[\bm\pi_j]=\rank(z)}
&\longrightarrow \Expect[\QXZ]{l(\bm\pi,X)\ind{\mpath[\bm\pi]=\rank(z)}} = \pZ(z).
\end{align}
in the limit as $N \to \infty$, $\QXZ[\infty]\mbox{-a.s.}$

Because the limit in the denominator is positive, the ratio converges:
\begin{align}
\lim_{N \to \infty} \widehat{\mathbb E}^z_N[f]
=
\frac
  { \Expect[\QXZ]{f(\bm\pi,X)l(\bm\pi,X)\ind{\mpath[\bm\pi]=\rank(z)}} }
  { \pZ(z) },
&&
\QXZ[\infty]\mbox{-a.s.}
\end{align}

Finally, using the importance-sampling identity once more,
the right-hand side is
\begin{align}
\frac{ \Expect[\QXZ]{f(\bm\pi,X) l(\bm\pi,X)\ind{\mpath[\bm\pi]=\rank(z)}} }{ \pZ(z) }
&=
\frac{\pZ(z) \thinspace \Expect[\muXZ(\cdot \given z)]{f(\bm\pi,X)}} { \pZ(z) }
\\
&=\int_{\X} f(\pi,x)\,\muXZ(\d(\pi,x)\mid z).
\end{align}
\end{proof}


\section{Singular Primitive Distributions}
\label{appx:primitives}

In our formalism from \cref{sec:lang-primitives}, the primitive
distributions are by assumption either discrete or absolutely continuous
with respect to Lebesgue measure.
A more flexible approach would be to allow primitive distributions that
are supported on parametrized lower dimension manifolds, for example the
boundary of a circle with radius $\theta$ in $\R[2]$.
Our semantics avoids this additional generality because singular primitive
distributions introduce significant complexities.
\begin{itemize}

\item A singular primitive distribution may not be dominated
by any fixed $\sigma$-finite  base measure.

\item A singular primitive distribution must carry tangent-space data for
its support in order to compute the appropriate Jacobian corrections.
\end{itemize}
Resolving these challenges requires adding significant geometric structure
to the semantics that is not canonically part of the program text.
We elaborate on these two issues.

\paragraph{Base Measure}

A primitive distribution on a lower-dimensional manifold
may not admit a unified base measure on which to define its densities.
For example, suppose $D$ is a primitive such that $\muD(\cdot \given \theta)$
is the uniform distribution on the circle $S_\theta=\set{x \in \R[2] : \abs{x} = \theta}$.
Suppose $\lamD$ is a measure on $\R[2]$ such that
$\muD(\cdot \given \theta) \ll \lamD$ for every $\theta > 0$.
Then $\lamD(S_\theta)>0$ for uncountably many pairwise disjoint circles
$S_\theta$, so $\lamD$ is not sigma-finite \citep[Proposition 215B]{fremlin2010}.

As there is no sigma-finite global base measure $\lamD$, we may instead
consider a parameterized family $\lamD(\cdot \given \theta) \equiv \Haus{1}[2][S_\theta]$
of sigma-finite measures for $\theta \in \Theta$.
In this case, the resulting parametrized densities $\pD(x \given \theta)$
no longer serve as meaningful likelihoods, because they are not defined
with respect to a fixed base measure $\lamD$ that is independent of $\theta$.
If the base measure depends on $\theta$, then
the densities are not comparable across different parameter values.
For example, the rescaling
\begin{equation}
\lamD(\cdot \given \theta) \rightsquigarrow c(\theta)\lamD(\cdot \given \theta), \qquad
\pD(\cdot\given \theta) \rightsquigarrow \pD(\cdot\given\theta)/c(\theta),
\end{equation}
for some positive function $c$ leaves every distribution $\mu(\cdot \given \theta)$
unchanged, but changes all the likelihood calculations involving $\theta$.
Therefore, handling primitives on lower-dimensional manifolds must
either fix the support manifold and let parameters change only the density
on that support, or provide a different representation such as sampling in
intrinsic coordinates followed by a deterministic embedding (which may as well
be performed in the probabilistic program itself).

In our semantics, the assumption that $D$ is either continuous or discrete
(with counting or Lebesgue base measure) avoids these complexities,
because the counting measure on every countable set is sigma-finite
and $(\R[n], \Borel(\R[n]), \Leb[n])$ is sigma-finite.
Even if the support of $\muD(\cdot \given \theta)$ varies with $\theta$,
all the distributions remain dominated by $\lamD$.
For example, if $D$ is uniform over $[0,\theta]$, then
$\pD(x \given \theta) = 1 / \theta \cdot \ind[[0,\theta]]{x}$,
and this density is valid with respect to the fixed base measure $\lamD = \Leb$.

\paragraph{Coarea Jacobian}

In a language with primitives over lower-dimensional manifolds,
the domain of the observation maps
$\obs[\pi,a]:\X[\pi,a]\to\R[\npath]$ for each $\pi \in \Pi$ and $a \in \Apath[\pi]$
is not guaranteed to be a positive measure set.
This generalization is compatible in principle with
\citet[Theorem 3.2.22]{federer1996}, which only requires the domain of
the function to be countably rectifiable.

In this setting, however, the coarea Jacobian $\Jac[\pi](x)$ in \Cref{eq:coarea-jacobian}
is not a straightforward function of the standard derivative matrix
$\obs*[\pi]'(x)$ in \Cref{eq:deriv-matrix}; it must measure volume change along $\X[\pi,a]$.
\begin{itemize}
\item If $\X[\pi,a]$ is a $d$-dimensional embedded manifold, then the
ordinary derivative must be restricted to the tangent space
$T_x\X[\pi,a]$.
Equivalently, if $Q_x$ is an orthonormal matrix whose columns form a basis
for $T_x\X[\pi,a]$, then the relevant coarea Jacobian is the tangential
Jacobian
\begin{equation}
  \Jac[\pi](x)
  =
  \prod_{i=1}^{\mpath}
  \sigma_i\!\left(\obs*[\pi,a]'(x)Q_x\right),
\end{equation}
where $\mpath$ is the rank of the restricted derivative
$\obs*[\pi,a]'(x)|_{T_x\X[\pi,a]}$.
With a non-orthonormal tangent basis, this same correction can be written
using the Gram-determinant ratio, similarly to the manifold
change-of-variables formula of \citet[Theorem~1, Eq.~(3)]{radul2021base}.

\item If $\X[\pi,a]$ is not a manifold but only countably $d$-rectifiable,
then the tangent space must be replaced by the approximate tangent space
and the derivative by the approximate differential
\citep[3.2.16--3.2.19]{federer1996}.
The corresponding Jacobian is that of the approximate differential, which
appears in the coarea formulas for rectifiable sets~\citep[3.2.20, 3.2.22]{federer1996}.
\end{itemize}

Our semantics avoids these complexities.
By assuming that $\X[\pi,a]$ has nonzero Lebesgue measure, we ensure that
the standard derivative matrix $\obs*[\pi,a]'(w)$ contains all the
information needed to compute the coarea
Jacobian~\cref{eq:coarea-jacobian}.
The underlying partial derivatives in \cref{eq:deriv-matrix} can be obtained by
applying AD to the $\textbf{return}$ expression.
In the general setting, however, computing the coarea Jacobian requires not
only added structure about the (approximate) tangent spaces, but also a
``tangent-aware'' AD method that propagates tangent
bases through deterministic program expressions.

}

\end{document}